\colorlet{mygreen}{green!60!black} 
\colorlet{myblue}{blue!75!black}
\newcommand{\commeurl}[1]{\emph{\textcolor{myblue}{#1}}}
\theoremstyle{definition}
\newtheorem{definition}{Definition}
\newtheorem{example}{Example}
\newtheorem{corollary}{Corollary}
\newtheorem{remark}{Remark}
\theoremstyle{theorem}
\newtheorem{theorem}{Theorem}
\newtheorem{proposition}{Proposition}
\def \be {\begin{equation}} 
\def \ee {\end{equation}}
\def \bes {\begin{equation*}}
\def \ees {\end{equation*}}
\def \baa {\begin{align}}
\def \eaa {\end{align}}
\def \baas {\begin{align*}}
\def \eaas {\end{align*}}
\def \bea {\begin{eqnarray}}
\def \eea {\end{eqnarray}}
\def \beas {\begin{eqnarray*}}
\def \eeas {\end{eqnarray*}}
\newcommand{\nib}[1]{\noindent\textbf{#1}}
\newcommand{\nit}[1]{\noindent\textit{#1}}
\newcommand{\ct}[1]{\begin{center}\textit{#1}\end{center}}
\newcommand{\st}{\,\colon\,} 
\newcommand{\deq}{\stackrel{\text{\tiny{def}}}{=}}
\newcommand{\re}[1]{\mathfrak{Re}\left(#1\right)}
\newcommand{\given}{\rvert} 
\newcommand{\pbfrac}[2]{\mbox{$\mbox{}^{#1}\!/_{#2}$}}
\newcommand{\finite}[2]{{#1}_{[#2]}} 
\newcommand{\halts}{\searrow} 
\newcommand{\nohalt}{\nearrow} 
\newcommand{\len}[1]{|#1|} 
\newcommand{\fat}{FAT} 
\newcommand{\fats}{FATs} 
\newcommand{\lecture}[2]{\noindent \href{#2}{\commeurl{[Lecture #1]}}} 
\newcommand{\pscom}[1]{\noindent \textit{#1}} %
\title{Lecture Notes on \\
{Algorithmic Information Theory}\vspace{0pt} }
\author{\vspace{0pt} Charles Alexandre Bédard}
\affil{École de technologie supérieure \vspace{-0pt}\\
\small \emph{charles.alexandre.bedard@etsmtl.ca}}
\date{\vspace{0pt} April 2025}
\begin{document}
\begin{titlepage}
    \maketitle
    \thispagestyle{empty} 
\end{titlepage}

\tableofcontents
\newpage


\section*{Preface}
\sloppy
These lecture notes on Algorithmic Information Theory were constructed in conjunction with the graduate course I taught at Universit\`a della Svizzera italiana in the spring of~2023. 

The course is intended for graduate students and researchers seeking a self-contained journey from the foundations of computability theory to prefix complexity and the information-theoretic limits of formal systems.
My exposition ignores boundaries between computer science, mathematics, physics, and philosophy, which I consider essential when explaining inherently multidisciplinary fields.

Recordings of the lectures are available online---click on the following link: \mbox{\href{https://www.youtube.com/playlist?list=PLdL1rYhN7DtjGYIfKpQBUuzlUGGet4Fm8}{\commeurl{[Full lecture series]}}}. 
Links to individual lectures are also provided throughout the notes.

The course is based on many books, most notably the detailed textbook by Li and Vit\'anyi~\cite{Li2019}, which readers should also consult for extensive references to the primary scientific literature.
I enjoyed the explanatory approach of Cover and Thomas~\cite{cover2006elements}, and I gained insights from Hutter~\cite{hutter2005universal} and Claude~\cite{calude2013information}.
Chaitin's \textit{Meta Math!}~\cite{chaitin2004meta}, both profound and accessible, also inspired the content of the course. 


As his Ph.D. student, I was introduced by Gilles Brassard to the marvels of theoretical computer science through many open discussions, for which I am deeply grateful. 
My interest in algorithmic information theory was then sparked by reading Bennett~\cite{bennett1979random} and Chaitin~\cite{chaitin2007}.

I am especially grateful to Stefan Wolf, who encouraged me to design and teach this course, and with whom I had many valuable conversations---both on this topic and beyond.
Finally, I am immensely grateful to the students who attended the lectures. 
Their enthusiasm and curiosity were potent motivators and led to countless stimulating discussions.

%
%

%
%
%
%
%
%
%

%

\newpage

\section{Introduction}

\lecture{AIT-1}{https://www.youtube.com/watch?v=gTic1BLEjaw}
\vspace{5pt}

Algorithmic information theory (AIT) is the meeting between Turing's theory of computation and Shannon's theory of information.

AIT 
roots the concept of information in computation rather than probability. 
Its central mathematical construct, \emph{algorithmic complexity}, is a universal quantification of information content in individual digital objects.

\begin{itemize}[itemsep=1pt, topsep=2pt]
\item[--] \emph{Quantification}, not qualification.
\item[--] \emph{Digital objects} are those describable by strings of symbols. 
\item[--] \emph{Individual}, i.e. we take the object as is; we do not posit a set or a distribution to which the object is assumed to belong. 
\item[--] \emph{Universal}, because any universal computer gives roughly the same measure of algorithmic complexity.
\end{itemize}

Algorithmic complexity was introduced independently by Solomonoff (1964) 
Kolmogorov (1965) and
Chaitin (1966).
Their work was broadly motivated---respectively---by induction (Solomonoff), information (Kolmogorov) and randomness (Chaitin).
AIT is still being developed.
See Fig.~\ref{fig:ITAIT} for a contrast with Shannon's information theory.

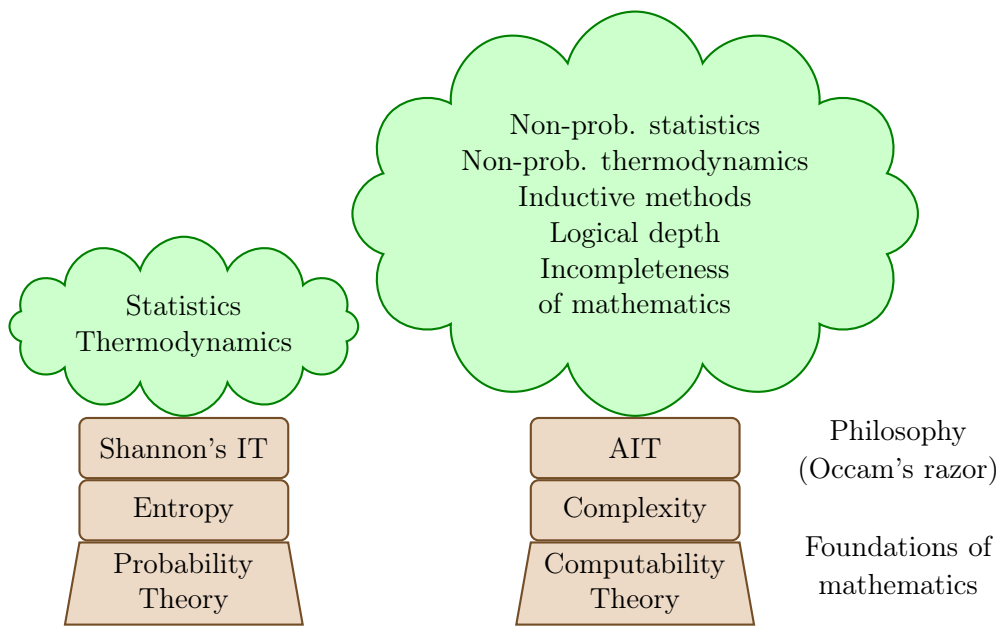
\begin{figure}[H]
\begin{tikzpicture}[
    font=\small,
    thick,
    root/.style={
      trapezium,
      trapezium left angle=100,
      trapezium right angle=100,
      shape border rotate=180,  
      draw=brown!60!black,
      fill=brown!30,
      text width=2.5cm,
      align=center,
      minimum height=1cm  
    },
    trunk/.style={
      rectangle,
      rounded corners=3pt,
      draw=brown!60!black,
      fill=brown!30,
      text width=2.5cm,  
      align=center,
      minimum height=0.8cm
    },
    midbox/.style={
      rectangle,
      rounded corners=3pt,
      draw=brown!60!black,
      fill=brown!30,
      text width=2.5cm,
      align=center,
      minimum height=0.8cm
    },
    leafcloud/.style={
      cloud,
      cloud puffs=12,
      cloud ignores aspect,
      minimum width=3cm,
      draw=green!50!black,
      fill=green!20,
      align=center
    }
]

\node[root] (roots-shannon) at (0,0)
  {Probability\\Theory};

\node[trunk, anchor=south] (trunk-shannon) at (roots-shannon.north)
  {Entropy};

\node[midbox, anchor=south] (top-shannon) at (trunk-shannon.north)
  {Shannon's IT};

\node[leafcloud, anchor=south] (cloud-shannon) at (top-shannon.north)
  {Statistics\\Thermodynamics};

\node[root] (roots-ait) at (6,0)
  {Computability\\Theory};

\node[trunk, anchor=south] (trunk-ait) at (roots-ait.north)
  {Complexity};

\node[midbox, anchor=south] (top-ait) at (trunk-ait.north)
  {AIT};

\node[leafcloud, anchor=south] (cloud-ait) at (top-ait.north)
  {Non-prob. statistics\\%
   Non-prob. thermodynamics\\Inductive methods\\%
   Logical depth\\Incompleteness\\%
   of mathematics};
   
\node (background1) at (9.5,2) {Philosophy};
\node (background2) at (9.5,1.5) {(Occam's razor)};

\node (background3) at (9.5,.5) {Foundations of};
\node (background4) at (9.5,0) {mathematics};

\end{tikzpicture}
\caption{Contrasting Shannon's information theory with AIT.}
\label{fig:ITAIT}
\end{figure}

\subsection{Hilbert's Theory of Everything}

Hilbert expressed the belief that if mathematics is to provide absolute and complete certainty, then there ought to be a formal axiomatic theory for all of mathematics. 
There should be a finite set of axioms from which all mathematical truths can be proven by mechanically applying the rules of logic.
In practice, today, the closest we have to such a theory is the Zermelo-Fraenkel set theory with the axiom of choice (ZFC), together with first-order logic.

Hilbert did not suggest that mathematics should only be done formally, but that it could, so matters of opinion and ambiguity can be eliminated from mathematics.


In 1931, Gödel puts an end to Hilbert's program.
He shows that in all logically consistent formal axiomatic systems that are strong enough to encapsulate the arithmetic of whole numbers, there are true and unprovable statements.
%
%
%
%
%
%
%
%
%
%
%
%
%

%

\subsection{Berry's Paradox}\label{sec:bp}

All natural numbers can be described unambiguously by (at least) a statement in English. 
For instance, $52$ is described by ``fifty-two'', $10 000 000 000$ is described by ``ten to the ten'' and $637 291 037$ is described by ``the number's digits are six, three, seven, two, nine, one, zero, three, seven.'' 
The description ``fifty-two'' was coined using 3 syllables, ``ten to the ten'', 4 syllables, and the last one, 18 syllables.
Consider the set $S$ of all numbers that can be described by 20 syllables or less.
There is a finite number~$s$ of syllables in English, and therefore there are at most $s^{21}$ elements in $S$. 
Therefore, there are infinitely many numbers in $ \mathbb N \backslash S$, and in particular, it has a smallest element; let us denote it $k_{20}$.
In other words,~$k_{20}$ is
\ct{``The smallest number that cannot be described by twenty syllables or less.''}
It has just been described by 19 syllables, so $k_{20} \not\in S$ and $k_{20} \in S$. A contradiction.

The formalization of algorithmic complexity not only solves the paradox but it turns it into proof of the incompleteness of mathematics---Chaitin's.

\newpage
\section{Strings and Codes}
\lecture{AIT-2}{https://youtu.be/0WdY376QJQU}


\subsection{Strings}
A set~$\mathcal A =  \{a_i\}_{i\in \mathcal I}$ of symbols is an \emph{alphabet}.
It can be finite, in which case~$\mathcal I= \{1,2, \ldots, N \}$ or infinite,~$\mathcal I=  \mathbb N$.

Symbols can be \emph{concatenated} into \emph{strings} (also called \emph{words}).
The set of all finite strings over an alphabet~$\mathcal A$ is denoted~$\mathcal A^*$.

The \emph{binary alphabet} is $\{0,1\}$ and
$$\{0,1\}^*= \{ \epsilon, 0, 1, 00, 01, 10, 11, 000, \ldots \}\,,$$
where $\epsilon$ is the \emph{empty string}.
The above set is displayed in the \emph{lexicographical ordering}.

The following pairing defines a bijection between $\{0,1\}^*$ and $\mathbb N$:
\bea \label{eq:bijection} 
\{0,1\}^* &\leftrightarrow& \mathbb N\nonumber \\
\epsilon &\leftrightarrow& 1\nonumber \\
0 &\leftrightarrow& 2 \nonumber \\
1 &\leftrightarrow& 3 \nonumber \\
00 &\leftrightarrow& 4 \\
01 &\leftrightarrow& 5 \nonumber \\
10 &\leftrightarrow& 6 \nonumber \\
11 &\leftrightarrow& 7 \nonumber \\
000 &\leftrightarrow& 8 \nonumber \,.
\eea
The relationship can be computed by observing that $1x$ is the binary representation of the number associated with the string $x$.
This relationship is often implied, so a string might be thought to be a number and vice versa; the context will determine. 

The \emph{length} of a string~$x$ is denoted $\len x$.
The \emph{ceiling} and \emph{floor} of a real number~$\alpha$ are respectively denoted~$\lceil \alpha \rceil$ and~$\lfloor \alpha \rfloor$.

\begin{proposition}\label{prop:length}
$\forall x \in \{0,1\}^*$, $\len x = \lfloor \log x \rfloor$.
\end{proposition}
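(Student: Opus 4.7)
The plan is to exploit the explicit description of the bijection given just above the proposition: for any $x \in \{0,1\}^*$, the string $1x$ (i.e.\ $x$ with a $1$ prepended) is the standard binary representation of the natural number that $x$ is identified with. Once this is in hand, the proposition reduces to the elementary fact that a number whose binary representation has $n+1$ digits lies in the dyadic interval $[2^n, 2^{n+1})$.

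Concretely, I would proceed in three short steps. First, I would fix $x \in \{0,1\}^*$ and let $n \deq |x|$, so that $1x$ has length $n+1$. Second, I would invoke the stated property of the bijection: the integer associated with $x$ equals the number whose binary expansion is the $(n+1)$-bit string $1x$. Since the leading digit is $1$, this integer satisfies
\[
2^n \;\le\; x \;\le\; 2^n + (2^n - 1) \;=\; 2^{n+1} - 1 \;<\; 2^{n+1}.
\]
Third, taking base-$2$ logarithms in this inequality gives $n \le \log x < n+1$, whence $\lfloor \log x \rfloor = n = |x|$, which is the claimed identity.

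A minor subtlety worth flagging is the boundary case $x = \epsilon$, which corresponds to the integer $1$: here $n = 0$ and $\log 1 = 0$, so the formula still yields $\lfloor \log x \rfloor = 0 = |\epsilon|$, consistent with the general argument (and with the table in~\eqref{eq:bijection}). There is no real obstacle in this proof; the only thing one must be careful about is to make the prepended-$1$ property of the bijection explicit, since without it the formula would be off by one (note, for instance, that it would fail for the more naive ordering in which $\epsilon \leftrightarrow 0$). Once that convention is pinned down, the rest is a one-line estimate.
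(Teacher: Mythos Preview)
Your proof is correct and follows essentially the same route as the paper: both use the fact that the binary representation of the number associated with $x$ is $1x$, deduce $2^{\len x} \le x < 2^{\len x + 1}$, and conclude $\lfloor \log x \rfloor = \len x$. Your version is a bit more explicit (spelling out the logarithm step and the $\epsilon$ boundary case), but the underlying argument is identical.
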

\begin{proof}
Observe first that on the left-hand side,~$x$ is a string because we can't take the length of a number, whereas on the right-hand side, it is a number because we can't take the log of a string.
Since the binary representation of the number~$x$ is~$1x$, we have that 
$
x= 1 \cdot 2^{\len x} + r
$ with $r < 2^{\len x}$. Thus 
$
\lfloor \log x \rfloor =  \len x
$.
\end{proof}

A string $y$ is a \emph{prefix} of a string $x$ if $x=yz$ for some $z \in \mathcal A^*$. 
A set $S\subset \mathcal  A^*$ is \emph{prefix-free} if for all $x, y \in S$ such that $x \neq y$, $x$ is not a prefix of $y$.

\begin{example}
Consider the alphabet of digits $\mathcal D = \{0, 1,2,3,4,5,6,7,8,9 \}$.
The set of strings
$\{1,2,3,4,5,6,7,8,9,10,\ldots \}$ is not prefix-free. Yet with the alphabet $\mathcal D \cup \{*\}$, the set
$\{1*,2*,3*,4*,5*,6*,7*,8*,9*,10*,\ldots \}$ is prefix-free.
Similarly, $\{10,110,1110,11110, \ldots\}$ is a prefix-free set of bit strings.
\end{example}

\subsection{Codes}
\begin{definition}
A \emph{code} is a mapping 
$$
E: \mathcal A \to \mathcal B^*
$$
where~$\mathcal A$ is the \emph{source alphabet} and $\mathcal B^*$ is set of \emph{codewords}.
\end{definition}
For our purposes, we will want binary codewords. So let us set $\mathcal B^* \equiv \{0,1\}^*$.
\begin{definition}
The code $E$ is \emph{non-singular} if it is injective.
\end{definition}
This permits unambiguous decoding since an injective encoding function $E$ has a well-defined decoding function $D= E^{-1}$.

In many situations, we want to encode sequences of source symbols, which result in a stream of concatenated codewords (think of a communication channel).

\begin{definition}
An \emph{extension} of a code $E$ defines a mapping 
$$
E^*: \mathcal A^* \to \mathcal B^*
$$
through the requirement that $E^*(s_1s_2 \ldots s_n)= E(s_1)E(s_2) \ldots E(s_n)$.
\end{definition}

\begin{definition}
A code~$E$ is \emph{uniquely decodable} if its extension is non-singular\,.
\end{definition}

This means that if one recieves $E(s_1)E(s_2) \ldots E(s_n)$, one can unambigously decoded it as $s_1 s_2 \ldots s_n$.

\begin{definition}
A code~$E$ is a \emph{prefix code} if $E(\mathcal A)$ is prefix-free, i.e. no codeword is the prefix of another codeword.
\end{definition}

\begin{example}\label{ex:codes}
Let $\mathcal A = \{A, B, C, D\}$ be source symbols to be encoded into bits. Consider the four codes defined in the table below.
\begin{center}
\begin{tabular}{c|c|c|c|c}
&$A$&$B$&$C$&$D$\\
\hline
$E_1$~ &10&10&11&0\\
\hline
$E_2$~ &10&110&1&0\\
\hline
$E_3$~ &0&01&011&111\\
\hline
$E_4$~ &0&10&110&111\\
\end{tabular}
\end{center}
\vspace{5pt}

\begin{itemize}[itemsep=1pt, topsep=2pt]
\item[--] $E_1$ is singular. The codeword~10 cannot be decoded.
\item[--] $E_2$ is not uniquely decodable.
How does one decode the stream of codewords~110? Is it $B$, $CA$, or $CCD$?. 

\item[--] $E_3$ is uniquely decodable but not a prefix code.

\item[--]  $E_4$ is a prefix code.
 \end{itemize}
 
 In a stream of codewords, prefix codes have the advantage of being \emph{instantaneously decodable}, i.e., a codeword can be decoded when its last bit is read. For this reason, prefix codes are also called \emph{instantaneous codes}.

Suppose that the sequence of codewords to decode is $0111010110$. \\
With the encoding~$E_4$, it can be instantaneously decoded into $ADABC$, by reading from left to right.
With the code~$E_3$, it can be decoded, but one needs to read further into the sequence. 
With~$E_2$ or~$E_1$, it cannot be decoded. See Fig.~\ref{fig:classesofcodes}.
\end{example}

Prefix sets~$\subset \{0,1 \}^*$ can be represented by trees. Fig.~\ref{fig:tree} represents the tree of the codewords of~$E_4$.

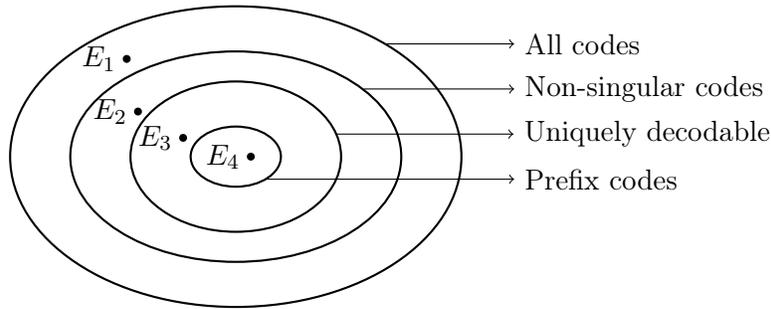
\begin{figure}
\centering
\begin{tikzpicture}[font=\small]

\def\RxA{3.0}  
\def\RyA{2.0}  
\def\RxB{2.2}
\def\RyB{1.4}
\def\RxC{1.4}
\def\RyC{1.0}
\def\RxD{0.6}
\def\RyD{0.4}

\draw[thick] (0,0) ellipse [x radius=\RxA, y radius=\RyA];

\draw[thick] (0,0) ellipse [x radius=\RxB, y radius=\RyB];

\draw[thick] (0,0) ellipse [x radius=\RxC, y radius=\RyC];

\draw[thick] (0,0) ellipse [x radius=\RxD, y radius=\RyD];


\draw[->] (\RxA-1.,1.5) -- ++(1.7,0) node[right]{All codes};

\draw[->] (\RxB-0.5,0.9) -- ++(2,0) node[right]{Non-singular codes};

\draw[->] (\RxC-0.07,0.3) -- ++(2.36,0) node[right]{Uniquely decodable};

\draw[->] (\RxD-0.2,-0.3) -- ++(3.3,0) node[right]{Prefix codes};


\fill (-1.45,1.3) circle(1.5pt) node[left]{$E_1$};

\fill (-1.3,0.6) circle(1.5pt) node[left]{$E_2$};

\fill (-0.7,0.25) circle(1.5pt) node[left]{$E_3$};

\fill (0.2,0) circle(1.5pt) node[left]{$E_4$};

\end{tikzpicture}
\caption{Classes of codes with the codes from Example~\ref{ex:codes}.}
\label{fig:classesofcodes}
\end{figure}

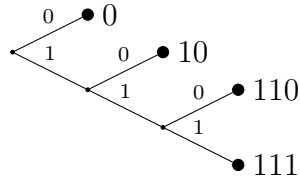
\begin{figure}
\centering
\begin{tikzpicture}
\node[circle, draw, fill=black, inner sep=0.5pt, name=root] at (0,0) {};

\node[circle, draw, fill=black, inner sep=1.5pt, name=n0] at (1,0.5) {};
\node[circle, draw, fill=black, inner sep=0.5pt, name=n1] at (1,-0.5) {};
\draw (root) -- node[midway, above] {\scriptsize 0} (n0);
\draw (root) -- node[midway, above] {\scriptsize 1} (n1);
\node[right=-1pt of n0] {$0$};

\node[circle, draw, fill=black, inner sep=1.5pt, name=n10] at (2,0) {};
\node[circle, draw, fill=black, inner sep=0.5pt, name=n11] at (2,-1) {};
\draw (n1) -- node[midway, above] {\scriptsize 0} (n10);
\draw (n1) -- node[midway, above] {\scriptsize 1} (n11);
\node[right=-1pt of n10] {$10$};

\node[circle, draw, fill=black, inner sep=1.5pt, name=n110] at (3,-0.5) {};
\node[circle, draw, fill=black, inner sep=1.5pt, name=n111] at (3,-1.5) {};
\draw (n11) -- node[midway, above] {\scriptsize 0} (n110);
\draw (n11) -- node[midway, above] {\scriptsize 1} (n111);
\node[right=-1pt of n110] {$110$};
\node[right=-1pt of n111] {$111$};

%
%

\end{tikzpicture}
\caption{The binary tree representing the codewords of $E_4$.}
\label{fig:tree}
\end{figure}

\nib{Observation.}
The lengths of the codewords of $E_4$ are $1,2,3,3$, and $$2^{-1}+2^{-2}+2^{-3}+2^{-3} = 1 \leq 1 \,.$$

\subsection{Kraft's Theorem}
\begin{theorem}[Kraft]
Let $l_1$, $l_2$, \ldots be a finite or infinite sequence of natural numbers. 
These numbers are the lengths of codewords of a prefix code if and only if
$$\sum_n 2^{-l_n} \leq 1\,.$$
\end{theorem}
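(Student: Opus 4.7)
The plan is to identify each binary string $w$ of length $l$ with the dyadic half-open interval $I_w := [0.w,\, 0.w + 2^{-l}) \subset [0,1)$, whose Lebesgue measure is $2^{-l}$. The crucial observation is that $w$ is a prefix of $w'$ if and only if $I_{w'} \subseteq I_w$, so two codewords are prefix-incomparable precisely when the corresponding intervals are disjoint. Thus a set of binary strings is prefix-free if and only if its associated dyadic intervals are pairwise disjoint subsets of $[0,1)$. This dictionary between prefixes and intervals will drive both directions of the theorem.

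The forward direction is then immediate: given a prefix code with codewords $c_1, c_2, \ldots$ of lengths $l_1, l_2, \ldots$, the intervals $I_{c_n}$ are pairwise disjoint in $[0,1)$, so $\sum_n 2^{-l_n} = \sum_n \mathrm{Leb}(I_{c_n}) \leq 1$. In the infinite case, the inequality holds for every finite partial sum and hence passes to the limit.

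For the converse, I would first reorder the lengths in non-decreasing order; this is legitimate because the hypothesis $\sum_n 2^{-l_n} \leq 1$ forces at most $2^l$ indices to share any common value $l$, so each length class is finite and a non-decreasing enumeration exists. I would then assign codewords greedily: set $s_n := \sum_{k < n} 2^{-l_k}$ and define $c_n$ as the first $l_n$ bits of the binary expansion of $s_n$. Because the $l_k$ for $k < n$ satisfy $l_k \leq l_n$, the partial sum $s_n$ is an integer multiple of $2^{-l_n}$, so $I_{c_n}$ equals exactly $[s_n,\, s_n + 2^{-l_n})$; moreover $s_n + 2^{-l_n} \leq 1$ by the Kraft hypothesis. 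The intervals $I_{c_n}$ are disjoint by construction, and by the prefix--interval correspondence the $c_n$ form a prefix code with the prescribed lengths.

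The main obstacle I expect is the divisibility claim underlying the converse: that after the non-decreasing sort, $s_n$ lies on a grid of resolution $2^{-l_n}$, which is exactly what makes the truncated binary expansion describe the interval $[s_n,\, s_n + 2^{-l_n})$ rather than something straddling two dyadic intervals. This has to be checked by a short induction on $n$ exploiting $l_k \leq l_n$. Everything else is either bookkeeping or follows directly from the interval dictionary; in particular, the greedy procedure never runs out of room because the total occupied length never exceeds the budget $1$.
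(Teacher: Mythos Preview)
Your proof is correct and follows essentially the same route as the paper: the forward direction via disjoint dyadic intervals (the paper calls them \emph{cylinders} $\Gamma_x$), and the converse via the same greedy partial-sum construction $s_n=\sum_{k<n}2^{-l_k}$ after sorting the lengths non-decreasingly. You are in fact more careful than the paper on two points it leaves implicit---why a non-decreasing reordering exists in the infinite case, and why $s_n$ lands exactly on the $2^{-l_n}$ grid so that the truncated expansion really is the intended codeword.
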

\noindent This relation is called \emph{Kraft's inequality.}

\vspace{5pt}
\lecture{AIT-3}{https://youtu.be/JmUPHIDP2z8}
\begin{proof}

$\Rightarrow$.
Let~$\mathcal P$ be a set of codewords of a binary prefix code.
To each string $x \in \{0,1\}^*$ corresponds a specific kind of interval $\subset [0,1)$, called a \emph{cylinder}: 
\bes
x \mapsto \Gamma_x = [0.x, 0.x + 2^{-\len x})\,.
\ees
Above, $0.x$ denotes the real number in $[0,1)$ whose binary expansion begins with $x$.
 $\Gamma_x$ can be understood as the set of all real numbers whose first bits in their binary expansion are those of $x$.
By construction, considering any two binary strings, either one cylinder is contained in the other, or they are disjoint. The first case happens only in one string is the prefix of the other. The cylinders pertaining to a prefix-free set of codewords, like~$\mathcal P$, are, therefore, disjoint. Thus
$$
\sum_{x \in \mathcal P} 2^{-\len x} = \sum_{x \in \mathcal P} \text{lenght~}(\Gamma_x) \leq \text{length~}([0,1)) = 1 \,.
$$

\noindent $\Leftarrow$.
Without loss of generality, let $l_1$, $l_2$, ... be non-decreasing. Define
$$
\alpha_j = \sum_{n=1}^{j-1} 2^{-l_n} \qquad \text{and} \qquad \Gamma_{x_j} = [\alpha_j, \alpha_j + 2^{-l_j}) \,.
$$
The set of codewords given by $\{x_j\}$, where $0.x_j = \alpha_j$ and $\len{x_j} = l_j$, forms the codewords of the prefix code.

This assignment of codewords is better viewed on the binary tree. The codeword~$x_1$ is assigned to the lexicographically first node at depth~$l_1$. When a node is assigned, its descendant branches are cut out. Then~$x_2$ is assigned to the next available node at depth~$l_2$, and so on. 
\end{proof}

A code may have code-word lengths that fulfill Kraft's inequality without being a prefix code.
The code~$E_3$, in Example~\ref{ex:codes}, for instance. 
Yet, Kraft's theorem states that such code can be transformed into a prefix code while keeping the lengths of codewords fixed.
In the case of $E_3$, we reverse the bits of the codewords and obtain~$E_4$.

\begin{theorem}[McMillan]
The code-word lengths of any uniquely decodable code satisfy Kraft's inequality.
\end{theorem}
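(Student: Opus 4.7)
The plan is to bound $K := \sum_n 2^{-l_n}$ by examining its $k$-th power for an arbitrary positive integer $k$ and letting $k \to \infty$. The guiding intuition is that unique decodability translates directly into a counting bound on the number of source sequences that can encode to any given binary string length.

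For clarity I would first assume the source alphabet $\mathcal A = \{s_1, \ldots, s_N\}$ is finite; the countably infinite case will follow by truncation. Let $l_{\max} = \max_n l_n$. Expanding the $k$-fold product and grouping terms by total encoded length,
$$K^k = \sum_{n_1, \ldots, n_k} 2^{-(l_{n_1} + \cdots + l_{n_k})} = \sum_{m=k}^{k \cdot l_{\max}} A_m \, 2^{-m},$$
where $A_m$ counts the number of $k$-tuples $(s_{n_1}, \ldots, s_{n_k}) \in \mathcal A^k$ whose extended encoding $E^*(s_{n_1} s_{n_2} \cdots s_{n_k})$ has length exactly $m$.

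This is where unique decodability is used: since $E^*$ is injective, distinct source $k$-tuples produce distinct binary strings, so $A_m$ is bounded by the total number of binary strings of length $m$, namely $2^m$. Substituting gives
$$K^k \;\leq\; \sum_{m=k}^{k \cdot l_{\max}} 1 \;\leq\; k \cdot l_{\max}.$$
Taking $k$-th roots, $K \leq (k \cdot l_{\max})^{1/k}$, and letting $k \to \infty$ the right-hand side tends to $1$, so $K \leq 1$.

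For the countably infinite case, I would apply the finite argument to the subcode consisting of the first $N$ source symbols---unique decodability is trivially inherited by restriction---to conclude that the partial sums $\sum_{n=1}^{N} 2^{-l_n}$ are all at most $1$, and hence so is their limit. The main obstacle is psychological rather than technical: recognizing that the device of raising $K$ to an arbitrary power is what converts a statement about individual codewords (no prefix structure assumed) into a statement about concatenated streams, where unique decodability has real bite. Once one sees that $A_m \leq 2^m$ follows immediately from injectivity of $E^*$, the rest is a routine polynomial-versus-exponential comparison.
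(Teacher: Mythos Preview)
Your proof is correct and is the standard argument (sometimes attributed to Karush): raise the Kraft sum to the $k$-th power, reinterpret the expansion as a sum over length-$k$ source sequences grouped by encoded length, invoke injectivity of $E^*$ to bound each group by $2^m$, and let $k\to\infty$. The handling of the countably infinite alphabet by truncation is also fine.

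As for comparison: the paper states McMillan's theorem without proof, so there is nothing to compare against. Your write-up would fill that gap cleanly. One small cosmetic point: the lower limit $m=k$ in your sum tacitly assumes every codeword has length at least $1$; you might remark that a length-$0$ codeword ($\epsilon$) is excluded for any non-trivial uniquely decodable code, or simply start the sum at $m=0$ since the bound $A_m\le 2^m$ and the count of terms $\le k\,l_{\max}+1$ still give the same conclusion.
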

\begin{corollary}
The class of uniquely decodable codes, bigger than that of prefix codes, does not offer more admissible sets of code-word lengths.
Since no gain is obtained for optimality of code-word lengths if we look into this larger class, we can safely restrict our attention to prefix codes.
\end{corollary}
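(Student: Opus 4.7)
The plan is to chain the two immediately preceding theorems---McMillan's and Kraft's---to show that the class of length sequences realizable by uniquely decodable codes coincides with the class of length sequences realizable by prefix codes. The corollary is essentially an equality of two sets of $\mathbb{N}$-valued sequences: the ``admissible'' lengths for uniquely decodable codes and those for prefix codes. One inclusion is trivial because every prefix code is uniquely decodable, so every prefix-admissible length sequence is uniquely-decodable-admissible. The work is the other inclusion.

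First I would fix an arbitrary uniquely decodable code $E \colon \mathcal{A} \to \{0,1\}^*$ and let $l_1, l_2, \ldots$ denote the lengths of its codewords (enumerated in any order, finite or infinite). By McMillan's theorem, these lengths satisfy Kraft's inequality $\sum_n 2^{-l_n} \leq 1$. Now the hypothesis of Kraft's theorem (in its $\Leftarrow$ direction, as just proved) is met by the very same sequence $l_1, l_2, \ldots$, so there exists a \emph{prefix} code whose codewords have exactly these lengths. Hence the length multiset of $E$ is realized by a prefix code, which establishes the missing inclusion.

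From this, the second sentence of the corollary follows immediately: any figure of merit for a code that depends only on the multiset of codeword lengths---such as the expected codeword length under a source distribution, or the worst-case length---takes the same optimum over uniquely decodable codes as over prefix codes. There is therefore no loss of generality in restricting attention to the latter.

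I do not expect a genuine obstacle here; the corollary is really a bookkeeping statement that packages McMillan and Kraft together. The only subtlety worth flagging is the direction of each implication: McMillan supplies ``$E$ uniquely decodable $\Rightarrow$ Kraft's inequality,'' and Kraft supplies ``Kraft's inequality $\Rightarrow$ existence of a prefix code with those lengths.'' Getting the arrows pointed the right way, and noting that the constructive $\Leftarrow$ half of Kraft's proof actually produces the prefix code (rather than merely asserting existence), is the whole content.
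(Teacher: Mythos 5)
Your argument is correct and is exactly the reasoning the paper intends (it leaves the proof implicit): McMillan gives that the lengths of any uniquely decodable code satisfy Kraft's inequality, and the constructive half of Kraft's theorem then produces a prefix code with those same lengths, so the two classes admit the same length sequences. No gap and no difference in approach.
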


\subsection{Prefix codes of $\{0,1\}^*$}

Let us consider prefix codes of the alphabet $\mathcal A = \{0,1\}^*$.
\beas
E~:~\{0,1\}^* &\to& \{0,1\}^*\\
x &\mapsto& E(x) \,.
\eeas
The unary encoding with an end marker, 1, gives a prefix code:
\bes
E^{(0)}(x) = \underbrace{00\dots0}_{x~\text{times}}1 \equiv 0^{x}1 \qquad \text{(implicit identification between strings and $\mathbb N$).}
\ees
We can do better. Consider

\begin{minipage}{0.45\textwidth}
\beas
E^{(1)}(x) &=& E^{(0)}(x) x \\
&=& 0^{\len x} 1 x \equiv \bar x \\[-2pt]
\eeas
\end{minipage}
and
\begin{minipage}{0.45\textwidth}
\beas
E^{(2)}(x) &=& E^{(1)}(x) x \\
&=& 0^{\len{\len{x}}}1\len x x \,, \\[-2pt]
\eeas
\end{minipage}
where $\len{\len x}$ denotes the length of the length of $x$.
The encoding $E^{(1)}(x)$ will be often used. 
It is called \emph{the self-delimiting encoding} of $x$, and we denote it $\bar x$. 

The lengths of these codewords are
\beas
\len{E^{(1)}(x)} = 2 \len x + 1 = 2 \lfloor \log x \rfloor + 1 \qquad \text{and} \qquad \len{E^{(2)}(x)} = 2 \lfloor \log \lfloor \log x \rfloor \rfloor +\lfloor \log x \rfloor + 1  \,.
\eeas
It can be verified that these lengths satisfy Kraft's inequality. 

A prefix code $E: \{0,1\}^* \to \{0,1\}^*$ can be used to encode tuples or numbers into numbers: 
\beas
\mathbb N^k &\to& \mathbb N\\
(x_1,\dots, x_k) &\mapsto& E(x_1)\dots E(x_k) \,.
\eeas

\newpage
\section{Shannon Information Theory}
\lecture{AIT-4}{https://youtu.be/c518yTDsxmQ}

\begin{example}
Eight horses race. They have probabilities of winning given by 
$$(\pbfrac12,  \pbfrac14,  \pbfrac18, \pbfrac{1}{16},  \pbfrac{1}{64}, \pbfrac{1}{64}, \pbfrac{1}{64},  \pbfrac{1}{64} )\,.$$
At the end of a large number of races, we want to communicate in a binary (noiseless) channel the sequence of outcomes. How can we minimize the average number of bits to transmit? What is the expected number of bits to transmit per race?

Because the encodings will be concatenated, we need a uniquely decodable code, and without loss of generality, we shall take a prefix code.

The general idea is to save short codewords for probable symbols and use long codewords for rare symbols.

The proof of Kraft's theorem gives us a method.
The probabilities are
$$(2^{-1},  2^{-2},  2^{-3}, 2^{-4},  2^{-6}, 2^{-6}, 2^{-6}, 2^{-6} )\,.$$
And because they are probabilities, they sum to $1$.
Thus the numbers 
$$(1,  2,  3, 4,  6, 6, 6, 6 )$$
fulfil Kraft's inequality, and so
there exists a prefix code with codewords of these lengths\footnote{For instance, 
$\{0,  10,  110, 1110,  111100, 111101, 111110, 111111 \}$.}.
Since the lengths $l_x$ are related to the probabilities $p(x)$ via $p(x)= 2^{-l_x}$, the expected number of bits to transmit is
$$
\sum_{x=1}^8 p(x) l_x = \sum_{x=1}^8 p(x) \log \left(\frac{1}{p(x)} \right)\,.
$$
This expression shall be recognized as the Shannon entropy~$H(X)$ of the random variable~$X$ defined by the race. Here, $H(X) = 2$.
%
As we shall see, the entropy of a random variable is a lower bound on the expected length of a prefix code.
\end{example}

\subsection{Optimal Codes}
Let $X$ be a random variable over an alphabet $\mathcal X$.
What is the prefix code of minimal expected length?
In fact, what we care about is not the codewords themselves, but their lengths, i.e., the set $\{l_x\}_{x \in \mathcal X}$. This is because only the lengths determine the expected number of bits sent over the channel.
Thus we want to minimize 
$$
L = \sum_x p(x) l_x \,,
$$
with the constraint that $\sum_x 2^{-l_x} \leq 1 $ (because we want a prefix code).
Assuming an equality in the constraint
and omitting that $l_x \in \mathbb N$, it boils down to a standard minimization problem.
\bes
J= \sum_x p(x) l_x + \lambda \left( \sum_x 2^{-l_x} - 1 \right) \,.
\ees
\bes
\frac{\partial J}{\partial l_{x'}}=  p(x') - \lambda \ln 2 \cdot 2^{-l_{x'}} = 0 \qquad \iff \qquad  p(x') = \lambda \ln 2 \cdot 2^{-l_{x'}} \,.
\ees
Summing over $x'$ sets $\lambda \ln 2 = 1$ and we find that the minimal expected length is achieved with codewords of length
\bes
l^*_x = - \log p(x) = \log \left( \frac{1}{p(x)}\right)\,,
\ees
and the expected length is
\bes
L= \sum_{x\in \mathcal X} p(x) \log \left( \frac{1}{p(x)}\right) \deq H(X)\,.
\ees

\begin{definition}
The \emph{entropy} of a discrete random variable~$X$ is defined as
$$
H(X) \deq \sum_{x \in \mathcal X} p(x) \log \frac{1}{p(x)} \,.
$$
\end{definition}
The quantity $\log \frac{1}{p(x)}$, which grows as $p(x)$ gets smaller is sometimes called the \emph{surprise} or the \emph{self-information} of the event $\{X=x\}$. The entropy is thus the expectation of the surprise.

The operational meaning is clear from our derivation: it is the minimum expected number of bits to communicate the outcome of the random variable in a prefix code (or in a uniquely decodable code, thanks to McMillan's theorem).

\subsection{Properties of Entropy}
\lecture{AIT-5}{https://youtu.be/I9zIiq9dvao}
\vspace{5pt}
\begin{enumerate}[itemsep=1pt, topsep=2pt]
\item $H(X) \geq 0$; 
\item $H(X) = 0$ iff $X$ is deterministic;
\item If $|\mathcal X|$ is kept fixed, $H(X)$ is optimized by the equiprobable distribution, in which case $H(X) = \log |\mathcal X|$;
\item For equiprobable distributions, the larger~$|\mathcal X|$, the larger $H(X)$.
\end{enumerate}
\vspace{5pt}

Points 3 and 4 show that the more uncertain a random variable is, the larger its entropy.

\begin{definition}
The \emph{joint entropy} of a pair of discrete random variables $(X,Y)$ is defined as 
$$
H(X,Y) =  \sum_{x \in \mathcal X} \sum_{y \in \mathcal Y} p(x,y) \log \frac{1}{p(x,y)} \,.
$$
\end{definition}

If the random variable $X$ takes value $\hat x \in \mathcal X$, this defines $p_{Y \given X=\hat x}(y)$. Given each individual outcome $\hat x \in \mathcal X$ of $X$, there is residual entropy $H(Y \given X=\hat x)$ in the distribution $p_{Y \given X=\hat x}(y)$. The conditional entropy~$H(Y \given X)$ is the expectation (over $X$) of that residual entropy.
\begin{definition}
The \emph{conditional entropy}  is
\beas
H(Y \given X) &=&  \sum_{x \in \mathcal X} p(x) H(Y \given X=x) \\
&=& \sum_{x \in \mathcal X} p(x) \sum_{y \in \mathcal Y}  - p(y \given x) \log p(y \given x) \,.
\eeas
\end{definition}
\noindent $H(Y \given X) = 0$ iff the value of $Y$ is completely determined by the value of $X$.

\begin{theorem}[Chain rule] \label{crproba}
$$
H(X,Y) = H(X) + H(Y \given X) \,.
$$
\end{theorem}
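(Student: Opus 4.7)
The plan is to derive the chain rule by a direct computation starting from the definition of joint entropy, using the product rule $p(x,y)=p(x)\,p(y\mid x)$ inside the logarithm and then splitting the sum into two pieces that are recognizable as $H(X)$ and $H(Y\mid X)$.

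First I would write
\[
H(X,Y) = \sum_{x\in\mathcal X}\sum_{y\in\mathcal Y} p(x,y)\log\frac{1}{p(x,y)},
\]
and replace $p(x,y)$ inside the logarithm by $p(x)\,p(y\mid x)$. Using $\log(ab)=\log a + \log b$, the double sum splits cleanly into
\[
\sum_{x,y} p(x,y)\log\frac{1}{p(x)} \;+\; \sum_{x,y} p(x,y)\log\frac{1}{p(y\mid x)}.
\]

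Next I would simplify each term separately. For the first, since the summand depends on $y$ only through $p(x,y)$, marginalization gives $\sum_y p(x,y) = p(x)$, so the term collapses to $\sum_x p(x)\log\frac{1}{p(x)} = H(X)$. For the second, I would rewrite $p(x,y) = p(x)\,p(y\mid x)$ and factor the $p(x)$ out of the inner sum, yielding
\[
\sum_{x} p(x)\sum_{y} p(y\mid x)\log\frac{1}{p(y\mid x)},
\]
which is exactly $H(Y\mid X)$ by the definition just stated. Adding the two pieces yields the claim.

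There is no real obstacle here; the only mild subtlety is handling the pairs $(x,y)$ with $p(x,y)=0$ (where the logarithm is undefined), which is disposed of by the standard convention $0\log\frac{1}{0}=0$, justified by $\lim_{t\to 0^+} t\log\frac{1}{t}=0$. With this convention in place, every manipulation above is a finite reorganization of sums of non-negative terms, so no convergence issues arise and the equality is exact.
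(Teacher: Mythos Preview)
Your proof is correct and is the standard direct computation: expand $H(X,Y)$, use $p(x,y)=p(x)p(y\mid x)$ inside the logarithm, split the sum, and marginalize. The paper does not actually give its own proof here---it defers the chain rule to Problem Sheet~\#1 as an exercise---so there is nothing substantive to compare against, but your argument is exactly the intended one.
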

\begin{proof}
In Problem Sheet \#1.
\end{proof}
\pscom{All notions for Problem Sheet \#1 have been covered. See \S\ref{sec:ps1}.}

\subsection{Mutual Information}\label{sec:mi}

\begin{definition}
The \emph{mutual information} between $X$ and $Y$ is
$$
I(X ; Y) \deq  H(X) - H(X \given Y)\,.
$$
It is the reduction in the uncertainty of $X$ due to the knowledge of $Y$. 
\end{definition}
See Fig.~\ref{fig:venn} for a representation of entropies, conditional entropies and mutual information.

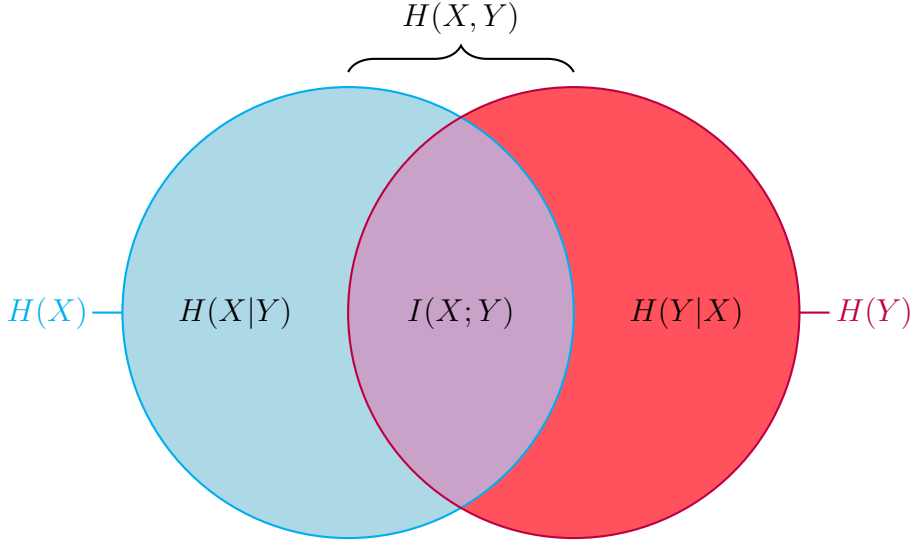
\begin{figure}
    \centering
    \begin{tikzpicture}
        \definecolor{lightblue}{RGB}{173,216,230}
        \definecolor{lightred}{RGB}{255,82,93}
        \definecolor{lightpurple}{RGB}{200,162,200}
        \definecolor{darkblue}{RGB}{0,0,139}
        \definecolor{darkred}{RGB}{139,0,0}

        \fill[lightblue] (-1.5,0) circle (3);
        \fill[lightred] (1.5,0) circle (3);
        \begin{scope}
            \clip (-1.5,0) circle (3);
            \fill[lightpurple] (1.5,0) circle (3);
        \end{scope}

        \draw[thick, cyan] (-1.5,0) circle (3);
        \draw[thick, purple] (1.5,0) circle (3);
        \draw[thick, cyan] (-4.5,0) -- (-4.9,0);
         \draw[thick, purple] (4.5,0) -- (4.9,0);

        \node at (-5.5,0) {\textcolor{cyan}{$H(X)$}};
        \node at (5.5,0) {\textcolor{purple}{$H(Y)$}};
        \node at (-3,0) {$H(X|Y)$};
        \node at (3,0) {$H(Y|X)$};
        \node at (0,0) {$I(X;Y)$};
        
    \draw [thick, decorate, decoration={brace, amplitude=10pt}] (-1.5,3.2) -- (1.5,3.2) 
    node[midway, above=10pt] {\( H(X,Y) \)};

    \end{tikzpicture}
    \caption{A Venn diagram representing the algebra of entropies, conditional entropies, and mutual information. The entropy~$H(X)$ is represented by the whole circular area enclosed by the blue line, while~$H(X \given Y)$ is represented by the sub-region outside the red circle. The mutual information~$I(X;Y)$ is represented by the overlapping region.}
    \label{fig:venn}
\end{figure}

The chain rule ensures that mutual information is symmetric:
\beas
I(X ; Y) &\stackrel{\text{c.r.}}{=}&  H(X) - \left(H(X , Y) - H(Y) \right) \\
&=& H(Y) - \left(H(X , Y) - H(X) \right) \\
&\stackrel{\text{c.r.}}{=}& H(Y) - H(X \given Y)\\
&=& I(Y ; X) \,.
\eeas

\nit{Information inequality.}
It can be shown that $I(X; Y) \geq 0$ and $= 0$ iff $X$ and $Y$ are independent. 
Thus $H(X) \geq H(X \given Y) $, so learning $Y$, on average, doesn’t make predicting~$X$ harder.

\begin{definition}
The random variables $X$, $Y$ and $Z$ form a \emph{Markov Chain}, denoted, $X \to Y \to Z$ if the joint distribution factorizes as
$$
p(x,y,z) = p(x)p(y\given x)p(z \given y)\,.
$$
\end{definition}
\noindent It means that $p(z \given y,x) = p(z \given y)$, i.e., $Z$ is conditionally independent of $X$, given $Y$.

\begin{theorem}[Data processing inequality]
If $X \to Y \to Z$, then $$I(X;Y) \geq I(X;Z)\,.$$
\end{theorem}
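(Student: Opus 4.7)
The plan is to reduce the inequality to a nonnegativity claim about conditional information, then invoke the Markov property to collapse one term.

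First, I would unfold the definition of mutual information on both sides:
\bes
I(X;Y) - I(X;Z) = \bigl(H(X) - H(X\given Y)\bigr) - \bigl(H(X) - H(X\given Z)\bigr) = H(X\given Z) - H(X\given Y).
\ees
So the goal reduces to proving $H(X\given Z) \geq H(X\given Y)$.

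Second, I would derive a conditional version of the chain rule by applying Theorem~\ref{crproba} to $H(X,Y,Z)$ in two different orderings:
\bes
H(X,Y,Z) = H(Z) + H(X\given Z) + H(Y\given X,Z) = H(Z) + H(Y\given Z) + H(X\given Y,Z),
\ees
which rearranges to
\bes
H(X\given Z) - H(X\given Y,Z) = H(Y\given Z) - H(Y\given X,Z).
\ees
The right-hand side is (what one would call) a conditional mutual information; viewed as a weighted average over outcomes $z$ of the ordinary mutual information between $X$ and $Y$ under the conditional distribution $p(\cdot,\cdot\given Z=z)$, it is nonnegative by the information inequality applied to each conditional and then averaged.

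Third, I would exploit the Markov property. From $p(x,y,z) = p(x)p(y\given x)p(z\given y)$, a short calculation (marginalize to get $p(y,z) = p(y)p(z\given y)$, then divide) yields $p(x\given y,z) = p(x\given y)$, and hence $H(X\given Y,Z) = H(X\given Y)$. Substituting into the previous display gives $H(X\given Z) - H(X\given Y) \geq 0$, which completes the argument.

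The main obstacle is the intermediate step: since conditional mutual information has not been formally introduced in the text, I need to justify the nonnegativity of $H(Y\given Z) - H(Y\given X,Z)$ carefully—either via pointwise application of the information inequality to the family of conditional distributions indexed by $z$, or by expanding the entropies as sums and applying Jensen's inequality directly. Once this is accepted, the rest is a mechanical manipulation of chain rules together with the Markov identity $H(X\given Y,Z) = H(X\given Y)$.
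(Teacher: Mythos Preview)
Your proof is correct and is essentially the same argument as the paper's: both hinge on the nonnegativity of $I(X;Y\given Z)$ together with the Markov identity $H(X\given Y,Z)=H(X\given Y)$ (equivalently $I(X;Z\given Y)=0$). The only cosmetic difference is that the paper packages the manipulation by expanding $I(X;Y,Z)$ in two ways, whereas you work directly with the difference $H(X\given Z)-H(X\given Y)$; your explicit flag about justifying $I(X;Y\given Z)\geq 0$ via averaging the information inequality over $z$ is exactly what the paper tacitly assumes.
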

\begin{proof}
\beas
I(X ; Y, Z) &=& H(X) - H(X \given Y, Z)\\
&=& H(X) - H(X \given Z) + H(X \given Z) - H(X \given Y, Z)\\
&=& I(X ; Z) + I(X;Y \given Z) \,.
\eeas
and similarly,
\beas
I(X ; Y, Z) &=& H(X) - H(X \given Y, Z)\\
&=& H(X) - H(X \given Y) + H(X \given Y) - H(X \given Y, Z)\\
&=& I(X ; Y) + I(X;Z \given Y) \,.
\eeas
Since $Z$ is conditionally independent of $X$ given $Y$, $I(X;Z \given Y)=0$ and since $I(X;Y \given Z) \geq 0$, we find $I(X ; Z) \leq I(X ; Y) $.
\end{proof}

\begin{corollary}
In particular, if $Z = g(Y)$, we have $X \to Y \to g(Y)$, and so\\
 $I(X;Y)\geq I(X;g(Y))$.
\end{corollary}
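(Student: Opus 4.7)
The plan is to verify that when $Z = g(Y)$ for a deterministic function $g$, the triple $(X, Y, Z)$ indeed satisfies the Markov chain factorization, after which the bound on mutual information follows immediately from the data processing inequality established just above.

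First, I would unpack the definition of a Markov chain. By the general chain rule for probabilities, we may always write $p(x,y,z) = p(x)\, p(y \given x)\, p(z \given x,y)$, so the Markov condition $X \to Y \to Z$ reduces to showing $p(z \given x,y) = p(z \given y)$. When $Z = g(Y)$, the conditional distribution of $Z$ given $Y=y$ is the deterministic indicator that places all mass on $g(y)$; concretely, $p(z \given y) = 1$ if $z = g(y)$ and $0$ otherwise. Since this value depends only on $y$, the same indicator gives $p(z \given x, y)$, regardless of $x$. Thus the two conditionals coincide, and the Markov chain $X \to Y \to g(Y)$ is established.

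Second, with the Markov property in hand, I would simply invoke the data processing inequality, which directly yields $I(X;Y) \geq I(X;Z) = I(X; g(Y))$. No further computation is required.

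The only real subtlety lies in the first step: the notation $p(z \given y)$ must be interpreted in the degenerate deterministic sense, and one should be momentarily careful that conditioning on $X$ in addition to $Y$ cannot change the mass of $Z$, since $g$ has no $x$-dependence. Once that is written down, the corollary follows at once, so I do not anticipate a real obstacle.
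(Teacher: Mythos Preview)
Your proposal is correct and matches the paper's approach: the paper simply asserts the Markov chain $X \to Y \to g(Y)$ and invokes the data processing inequality without further detail, while you spell out why the deterministic conditional $p(z \given x,y) = p(z \given y)$ holds. There is nothing to add.
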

It is impossible to increase the mutual information between two random variables by processing the outcomes in some deterministic manner---also in a probabilistic manner.

%


\newpage
\section{Computability Theory}
\lecture{AIT-6}{https://youtu.be/8gHkP2hmwJE}
\vspace{5pt}


Algorithms have existed for millennia — for instance, the Babylonians employed them as early as 2500 B.C.
Informally, they describe a procedure that can be performed to \emph{compute} a desired outcome.
In 1936, Church and Turing independently formalized \emph{computability} (and with it, algorithms).

%


\subsection{Turing's Model of Computation}\label{sec:TM}

Turing exhibited a simple type of hypothetical machine and argued that everything that can be reasonably said to be computed by a human computer using a fixed procedure can be computed by such a machine.

A Turing machine, like the one displayed in~Fig.~\ref{fig:TM}, consists of
\begin{itemize}[itemsep=1pt, topsep=2pt]
\item[--] An infinite \emph{tape} made of \emph{cells}, which can contain a $0$, a $1$, or a blank;
\item[--] A \emph{head}, capable of \emph{scanning} one cell;
\item[--] A \emph{finite control}, which can be in any one of a finite set of \emph{states} $Q$, and embody a \emph{list of rules};
\item[--] In the background there are discrete times $0,1,2, \ldots$
\end{itemize}

\begin{figure}[H]
\begin{tikzpicture}
    \draw (-2.5,0) -- (-2,0);
    \draw (-2.5,1) -- (-2,1);
    \draw (-2,0) rectangle (-1,1);
     \node at (-2.4,0.5) {\dots};
    
    \foreach \x/\val in {-1/, 0/0, 1/1, 2/, 3/0, 4/0, 5/1, 6/, 7/0, 8/1, 9/1} {
        \draw (\x,0) rectangle ++(1,1);
        \node at (\x+0.5,0.5) {\(\val\)};
    }
    
    \draw (10,0) rectangle (11,1);
    \draw (11,0) -- (11.5,0);
    \draw (11,1) -- (11.5,1);
    \node at (11.4,0.5) {\dots};

    \draw[thick, ->] (3.5,-0.75) -- (3.5,0);
    \node[above] at (2.9,-0.65) {head};

    \draw[thick] (2,-4) rectangle ++(3,3.25);
    
    \node at (3.5,-4.3) {finite control};

    \node[circle,draw,thick] at (2.6,-1.3) {\(q_0\)};
    \node[circle,draw,thick] at (2.6,-2.2) {\(q_1\)};
    \node[circle,draw,thick] at (2.6,-3.1) {\(q_2\)};
    \node at (2.6,-3.8) {\dots};
    \draw[thick, ->] (1.7,-2.2) -- (2.2,-2.2);
    \node at (0.4,-2.2) {current state};
    
    \node at (4.05,-1.3) {\scriptsize{\((q_0, 0, R, q_3)\)}};
    \node at (4.05,-1.7) {\scriptsize{\((q_0, 1, 0, q_0)\)}};
    \node at (4.05,-2.1) {\scriptsize{\((q_1, 0, L, q_2)\)}};
    \node at (4.05,-2.5) {\scriptsize{\((q_1, 1, L, q_2)\)}};
    \node at (4.05,-2.9) {\scriptsize{\((q_2, 0, 1, q_6)\)}};
    \node at (4.05,-3.3) {\scriptsize{\((q_2, 1, B, q_3)\)}};
    \node at (4.05,-3.8) {\dots};
    
     \draw [decorate,decoration={brace,amplitude=10pt},xshift=0pt,yshift=0pt]
    (4.9,-1) -- (4.9,-3.8) node [midway,xshift=40pt] {list of rules};

\end{tikzpicture}
\caption{A generic configuration of a Turing Machine.}
\label{fig:TM}
\end{figure}
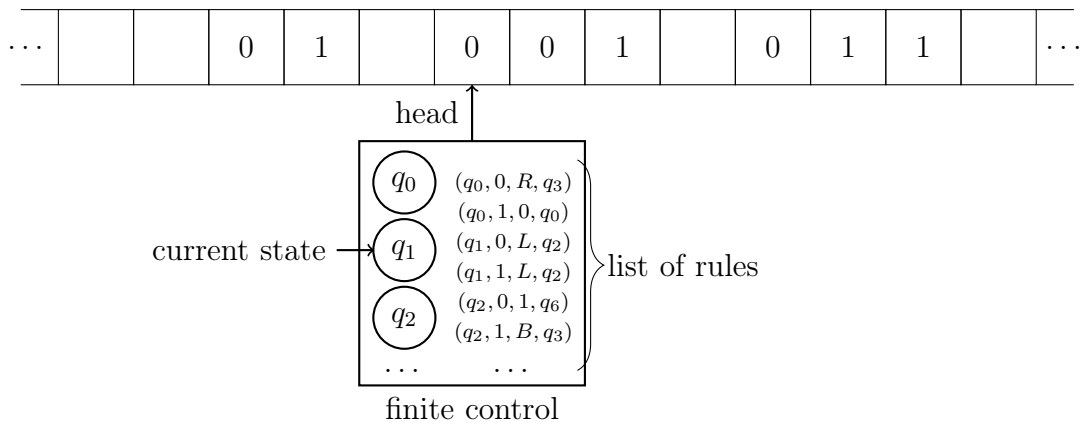

A computation starts at time $0$ with the head on a distinguished \emph{start cell}, and the finite control is in a distinguished \emph{start state} $q_0$. Initially, all cells of the tape are blank, except for a contiguous finite sequence extending from the start cell to the right, which contains 0s and 1s. This binary sequence is called the \emph{input}. 

The Turing machine performs one \emph{operation} per time step in which it does one of the five following actions:
writes a $0$ in the cell that is being scanned, 
writes a~$1$, 
leaves a blank, 
moves the head to the left (L) or 
moves the head to the right (R).
The action performed depends on the current state~$q$ and the scanned symbol. At the end of each operation, the control takes a (possibly different) state~$q' \in Q$. 

For a given Turing machine, its set of operations is given by the finite list of rules of the form $(q, s, a, q')$, where $q, q' \in Q$, $s \in \{0,1,B\}$  and $a \in\{0,1,B,L,R\}$ ($B$ stands for ``blank'').
This denotes the state~$q$ and the scanned symbol~$s$ before the operation, the action~$a$ taken, and the state~$q'$ after the operation. 

 
The machine is \emph{deterministic}: $(q, s, a_1, q'_1)$  and $(q, s, a_2, q'_2)$ cannot both be rules if $(a_1, q'_1) \neq (a_2, q'_2)$.
Moreover, not every possible combination of the first two elements has to be in the list: if the machine enters a state~$q$ and reads a symbol~$s$ for which no rule is defined, it \emph{halts}. 
Hence, we can define a Turing machine by a mapping from a finite subset of $Q \times \{0,1,B\} $ into $ \{0,1,B,L,R\} \times Q$. 

A given Turing machine, on a given input, carries out a uniquely determined succession of operations, which may or may not terminate in a finite number of steps. 

\subsection{Mathematizing the mathematician}

Is Turing's 1936 paper~\cite{Turing1936} a piece of mathematics or a piece of physics?
As is manifest in his \S9's ``direct appeal to intuition'', he explains that his model correctly describes a specific physical system, namely, a mathematician. 
This kind of activity belongs to physics. 

He gives arguments (no proof) that his conjectured description is faithful.
For instance, Turing explains why he uses a one-dimensional tape made of squares:
\begin{quote}
``Computing is normally done by writing certain symbols on paper. We may suppose this paper is divided into squares like a child's arithmetic book. In elementary arithmetic the two-dimensional character of the paper is sometimes used. But such a use is always avoidable, and I think that it will be agreed that the two-dimensional character of paper is no essential of computation. I assume then that the computation is carried out on one-dimensional paper, i.e. on a tape divided into squares.''
\end{quote}

He explains why only a finite alphabet is admissible:
\begin{quote}
 ``I shall also suppose that the number of symbols which may be printed is finite. If we were to allow an infinity of symbols, then there would be symbols differing to an arbitrarily small extent.''
\end{quote}

Further, Turing writes:
\begin{quote}
``The behaviour of the computer at any moment is determined by the symbols which he is observing, and his `state of mind' at that moment.'' 
\end{quote}
The ``computer'' he is referring to cannot be what we call a computer, for it is in that very paper that Turing is laying down the foundational work that gave us our computers. So what is he referring to? A ``computer'' was how a \emph{person} carrying out a computation was called. 

Turing continues, arguing that the range of squares that a computer can read is limited:
\begin{quote}
``We may suppose that there is a bound $B$ to the number of symbols or squares which the computer can observe at one moment. If he wishes to observe more, he must use successive observations.''
\end{quote} 
Yet it can shown that a machine scanning one single symbol, like the ones we defined in \S\ref{sec:TM}, can simulate the computer that is scanning $B$ symbols at a time. 

Turing then explains why there are only finitely many states (here ``states of mind'').
\begin{quote}
``We will also suppose that the number of states of mind which need be taken into account is finite. The reasons for this are of the same character as those which restrict the number of symbols. If we admitted an infinity of states of mind, some of them will be ``arbitrarily close'' and will be confused. Again, the restriction is not one which seriously affects computation, since the use of more complicated states of mind can be avoided by writing more symbols on the tape.''
\end{quote}

\subsection{Partial Functions}
\lecture{AIT-7}{https://youtu.be/PGw33WexToM}
\vspace{5pt}

A function $f: A \to B$ is \emph{partial} if its domain of definition $D$ is not necessarily equal to $A$: there may be some $a \in A$ for which $f(a)$ is undefined. 
When $D=A$, $f$ is \emph{total}.

We can associate a partial function $f_{T}: \{0,1\}^* \to \{0,1\}^*$  (or $\mathbb N \to \mathbb N$) to each Turing machine $T$.
The input of $T$ is understood as the input of $f_{T}$.
If on input $x$, $T$ halts, the string of $0$s and $1$s that is being scanned (and delimited by blanks) defines $f_{T}(x)$.
If on input $x$, $T$ does not halt, $f_{T}(x)$ is undefined.

The function $f: \mathbb N \to \mathbb N$ is \emph{computable}\footnote{Also called \emph{recursive}; the adjective \emph{effective} is also commonly used in this context.} if there exists a Turing machine $T$ for which $f = f_{T}$.

%
It is \emph{partial computable} or \emph{total computable} if $f$ is respectively partial or total. 
When the context is clear, we identify the function~$f_T$ computed by~$T$ with~$T$ (so we may write~$T(x)$ for~$f_T(x)$).

Although the set of functions $\mathbb N \to \mathbb N$ appears restrained, via suitable encodings of other discrete sets into $\mathbb N$, the computable functions may be viewed as computing other kinds of functions of a countable set into another.

\begin{example}
A function 
\beas
g : \mathbb N \times \ldots \times \mathbb N &\to& \mathbb N \\
(x_1, \ldots, x_n) & \mapsto & g(x_1, \ldots, x_n)
\eeas
can be viewed as being computed by a Turing machine expecting an input of the form $\bar x_1\bar x_2 \ldots \bar x_n$.

\end{example}



\subsection{Effective Enumeration of Turing Machines}

The list of rules of a Turing machine $T$ can be encoded in a string.
Fixing an encoding $e$ of $Q \cup \{0,1,B,L,R\}$ into $s = \lceil \log (|Q| + 5) \rceil$ bits, each rule can be encoded as $e(q)e(s)e(a)e(q')$.
Let $r$ be the number of rules; $r \leq 3 |Q|$, because each state of a Turing machine can have at most one rule defined per possible scanned symbol.
The list of rules of $T$ can thus be encoded as
$$
E(T) = \bar s \ \bar r \ e(q_1)e(s_1)e(a_1)e(q'_1) e(q_2)e(s_2)e(a_2)e(q'_2) \ldots e(q_r)e(s_r)e(a_r)e(q'_r) \,.
$$
(Remember that~$\bar s = 1^{\len s}0s$ is the self-delimiting encoding of $s$.)
The list of rules encodes all that there is to know about $T$.  The set of states is contained in the list of rules and the initial state may be taken to be the state in the first rule.

All bit strings that correspond to a valid encoding of some Turing machine can be listed lexicographically and, in turn, numbered.
The \emph{index} (or \emph{Gödel number}) of a Turing machine~$T$ is~$i$ if $E(T)$ is the $i$\textsuperscript{th} element in the lexicographic order of valid encodings of Turing machines.
This yields a sequence of Turing machines $T_1$, $T_2$, $\ldots$ called an \emph{effective enumeration}. 

\begin{remark}
It is \emph{effective}, in the sense that one can construct a Turing machine~$M$ which, on input $i$, returns $E(T_i)$. 

\begin{center}
\begin{tabular}{rl}
\smallskip
$M$:~ &\texttt{On input $i$,}\\
&\texttt{Initialize a counter $S=0$.}\\
&\texttt{For each $y \in \{0,1\}^*$ (in lexicographical order),}\\
&\texttt{\qquad check if $y$ corresponds to $E(T)$, for some $T$.}\\
&\texttt{\qquad If it does, add $1$ to a counter $S$.}\\
&\texttt{\qquad If $S = i$, output $y = E(T_i)$.}
\end{tabular}
\end{center}
\end{remark}

\subsection{The Universal Turing Machine}

There are infinitely many Turing machines. 
But one of them can simulate them all.

\begin{definition}
A \emph{universal Turing machine} $U$ is a machine that expects an input encoding a pair $(i, x)$ and simulates the machine $T_i$ on input $x$. Therefore,
$$
U(\langle i, x \rangle) = T_i(x) \qquad \forall i \forall x \,.
$$
\end{definition}

\lecture{AIT-8}{https://youtu.be/aGeZF5Ctfsk}\\
\nit{Comment on the lecture: 
There is a brief discussion of Problem Sheet \#1,~\S\ref{sec:ps1}.}

\begin{theorem}
A universal Turing machine exists and can be constructed effectively. 
\end{theorem}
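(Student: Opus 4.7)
The plan is to give an explicit construction of $U$ and invoke the effective enumeration from the preceding subsection. On an input formatted as $\bar{i}x$, I would first have $U$ parse off the self-delimiting prefix $\bar{i}$ to separate the index $i$ from the remaining string $x$. Next, $U$ would execute the routine of the machine $M$ from the preceding remark on input $i$, writing the encoded rule list $E(T_i)$ onto a dedicated region of the tape. This is legitimate because $M$'s rule table can simply be incorporated into $U$'s finite control.

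After this preparation, I would lay out $U$'s tape into four sub-regions separated by fixed delimiter patterns: the rule list $E(T_i)$, a register holding the encoding of $T_i$'s current state (initialised to $q_0$), a representation of $T_i$'s simulated tape (initially containing $x$), and a marker indicating the position of $T_i$'s simulated head. The main loop then (i) reads the simulated scanned symbol, (ii) scans $E(T_i)$ linearly for a rule whose first two fields match the current state-symbol pair, (iii) performs the prescribed action $a$ on the simulated tape or head marker, and (iv) overwrites the state register with $q'$. If no matching rule is found, $U$ halts, at which point the simulated-tape region holds $T_i(x)$.

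The main obstacle is bookkeeping on a single binary tape: the simulated tape of $T_i$ may grow arbitrarily, so the four regions cannot be preallocated in fixed positions. I would address this with a shift sub-routine that, whenever a region must expand by one cell, shuffles the contents of every region to its right over by one cell---a standard finite procedure. A related subtlety is that delimiters, states, symbols, and rules must all be unambiguously parseable; encoding the simulated alphabet $\{0,1,B\}$ by fixed-length blocks of bits together with reserved delimiter patterns eliminates any ambiguity in the parsing.

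Correctness would then follow by induction on the number of steps of $T_i$: after each iteration of $U$'s loop, the simulated configuration faithfully represents the configuration of $T_i$ after the same number of steps, so $U(\bar{i}x)$ and $T_i(x)$ are both undefined or are both defined and equal. Effectiveness is automatic, since the parser, the copy of $M$, the shift routine, and the main loop are each finite rule sets, and their composition yields a single explicit finite rule table for $U$.
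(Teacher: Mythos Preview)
Your proposal is correct and follows essentially the same approach as the paper's own proof sketch: invoke $M$ to recover $E(T_i)$ on a portion of the tape, then step-by-step simulate $T_i$'s actions on a representation of its tape contents. You have simply fleshed out the bookkeeping (tape regions, shift sub-routine, block encodings, inductive correctness) that the paper leaves implicit.
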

\begin{proof}[Proof sketch]
$U$ invokes $M$ so that, from $i$, it reconstructs the list of rules $E(T_i)$, which it saves on a portion of its tape.
To execute the consecutive actions that $T_i$ would perform (on input $x$) on its own tape, $U$ uses $T_i$’s rules to simulate $T_i$’s actions on a representation of $T_i$’s tape contents. 
\end{proof}
In fact, there are infinitely many such $U$’s.

\subsection{Church--Turing--Deutsch}


Why do Turing machines matter so much? Why not some other model of computation? 
The answer lies in the robustness of the notion of computability itself---which we now explore through the Church–Turing thesis, and its physical extension by Deutsch.


It can be proved that the set of functions $\mathbb N \to \mathbb N$ that can be computed by Turing machines is equal to the set of functions that can be computed by many other systems of symbolic manipulations:

\vspace{8pt}
\begin{minipage}{0.45\textwidth}
\begin{itemize}[itemsep=0.9pt, topsep=5pt]
\item[--] $\lambda$-calculus 
\item[--] General recursive functions
\item[--] Python, C, Lisp, Java,...  \vspace{16pt}
\end{itemize}
\end{minipage}
\begin{minipage}{0.45\textwidth}
\begin{itemize}[itemsep=0.9pt, topsep=3pt]
\item[--] TeX
\item[--] Conway's game of life
\item[--] Minecraft
\item[--] \dots
\end{itemize}
\end{minipage}
\vspace{8pt}



\nib{The Church--Turing Thesis}

\ct{There is an objective notion of computability independent of a particular formalization---and it captures what is computable by a human.} 


Turing machines are a (convenient) representative of this notion of computation.

\begin{definition}
A system of symbolic manipulation (such as a computer's instruction set, a programming language, or a cellular automaton) is said to be \emph{Turing-complete} or \emph{computationally universal} if it can be used to simulate any Turing machine. It is sufficient to show that it can simulate the universal Turing machine.
\end{definition}

Is computational universality a property of mathematics, metamathematics or our physical universe? David Deutsch wrote~\cite[Chapter 5]{deutsch1997fabric}:

\begin{quote}
Computers are physical objects, and computations are physical processes. What computers can or cannot compute is determined by the laws of physics alone, and not by pure mathematics. 
\end{quote}

The argument is short yet inescapable. 
Qua theoretical construct, Turing machines may appear to be an abstract piece of mathematics. 
But so do pseudo-Riemannian manifolds, which are conjectured to describe spacetime in general relativity.
Thus, whether Turing machines are the relevant theoretical construct depends on whether they actually grasp the set of computations that can be carried out physically.

\begin{example}[The Zeno machine]
Consider a regular Turing machine with an extra functionality: a big step. In one single big step, the Zeno machine can perform infinitely many operations of a regular Turing machine, as if it were doing one operation in $\pbfrac 12$ time unit, another operation in $\pbfrac 14$ time unit, another in $\pbfrac 18$ time, and so on.
It can be proven that the Zeno machines compute more functions than the Turing machines.
Based on what do we reject Zeno machines as a valid formalization of computation?
\emph{Physics. Try and build that machine...}
\end{example}

When computer science is incorporated into a physical worldview, the Church--Turing thesis can be seen as a special case of a physical principle due to Deutsch.\\

\nib{The Church--Turing--Deutsch Principle}

\ct{It is possible to build a universal computer: a machine that can simulate (to arbitrary accuracy) any physical process.}

Since computations are special kinds of physical processes, a universal computer can perform any computation.
%
Moreover, the sense in which the ``objective notion of computability'' is objective in the Church--Turing thesis is clarified: it is objective insofar as it pertains to the physical world.
Humans are computationally universal.

\begin{example}
Suppose that in the year  \underline{~~~~}25, a new theory of physics is discovered:  physical systems can undergo different processes simultaneously.
Since computations are physical processes, this functionality can be harnessed, perhaps to speed up computations due to parallelism.
\emph{The year is 1925, this theory is quantum theory, and the speed up is that of quantum computation.}
\end{example}

By understanding computation physically, Deutsch came up with the universal quantum computer~\cite{deutsch1985quantum}.

It should be noted that quantum computers can be simulated by Turing machines with an exponential overhead in computation time. Since in this course we are interested in what can or cannot be computed---and not by computation time---we can safely restrict our attention to the universal Turing machine.


\subsection{Computing, Deciding, Enumerating, Recognizing}

\lecture{AIT-9}{https://youtu.be/BCJGenVScHo}\\
\nit{Comment on the lecture: Some solutions to Problem Sheet \#1 (\S\ref{sec:ps1}) are presented.}
\vspace{5pt}
\vspace{-8pt}

The effective enumeration of Turing machines $T_1, T_2, T_3, \ldots$ provides an enumeration of all partial computable functions $\varphi_1, \varphi_2, \varphi_3, \ldots$. We shall denote\\[2pt]
-- $\varphi_i(j) \halts$ ~(and $T_i(j) \halts$)~  if $\varphi_i(j)$ is defined, i.e., if $T_i$ halts on input $j$ and\\[2pt]
-- $\varphi_i(j) \nohalt$ ~(and $T_i(j) \nohalt$)~ if $\varphi_i(j)$ is undefined, i.e., if $T_i$ does not halt on input $j$.

\vspace{2pt}
Let $\alpha \in \mathbb R$. Let $\finite{\alpha}{n}$ denote the first $n$ bits of $\alpha$ in binary expansion. A number $\alpha \in \mathbb R$ is \emph{computable} if the function $n \mapsto \finite{\alpha}{n} $ is computable.

\begin{definition}
Let $A\subseteq \mathbb N$.
A \emph{decider}~$D_A$ of the set $A$ is a Turing machine that computes the characteristic function of $A$,
$$
D_A(n) = \begin{cases} 1 &\text{ if } n \in A \\
0 &\text{ if } n \notin A.
\end{cases}
$$
It decides if, yes or no, a given element is in $A$.
\end{definition}

\begin{definition}
A set is \emph{computable} (or \emph{decidable} or \emph{recursive}) it has a decider.
\end{definition}

\begin{example}
All finite sets are decidable. The algorithm may just include the list of all elements and verify.
The set of odd numbers, the set of prime numbers, and the set of powers of $17$ are decidable.
\end{example}

\begin{definition}
Let $A\subseteq \mathbb N$. An \emph{enumerator} $E_A$ of the set $A$ is a Turing machine that enumerates all (and only) elements of $A$, without order and possibly with repetitions. In general, the enumeration is not a halting computation, as must be the case if~$A$ is infinite. 
But even if $A$ is finite, the enumerator might forever be looking for more elements of $A$.
\end{definition}

\begin{definition}
A set $A\subseteq \mathbb N$ is \emph{computably enumerable} (or \emph{recursively enumerable}) if it has an enumerator.
\end{definition}

\begin{example}\label{ex:re} 
The following sets are computably enumerable.
\begin{enumerate}[itemsep=0.9pt, topsep=5pt]
\item All decidable sets.
This is because the decider can be transformed into an enumerator. Run $D_A(1), D_A(2), \ldots$ if $D_A(a)=1$ add $a$ to the enumeration.
\item The set 
$$\{\langle n, m \rangle | n \text{~is the largest number of consecutive 0s in the first $2^m$ digits of~} \pi= 3.14159 \ldots \}.$$
\item The set of indices $i$ such that the range of $\varphi_i$ is nonempty:
$
\{i | \exists j \st \varphi_i(j) \halts \}
$.
Indeed, consider the following enumerator
\vspace{-8pt}
\begin{center}
\begin{tabular}{rl}
$E$:~ &\texttt{For $k= 1, 2, 3, \ldots$}\\
&\texttt{\qquad simulate all $\{T_i(j)\}_{i,j \leq k}$ for $k$ steps of computation.}\\
&\texttt{\qquad If some $T_i(j) \halts$ (which means $\varphi_i$ is defined on $j$),}\\
&\texttt{\qquad \qquad output $i$}
\end{tabular}
\end{center}
\vspace{-4pt}
Notice that if $\varphi_i$ has a non-empty range, there is some~$j$ and some~$t$ such that the computation $T_i(j)$ halts after $t$ steps. Eventually (when~$k= \max(i,j,t)$), the enumerator will find it out and output $i$ in the enumeration. 
On the other hand, if~$\varphi_i \nohalt$ for all inputs, the enumerator will never output~$i$. 

\end{enumerate}

\end{example}

\begin{definition}
A set $A \subseteq \mathbb N$ is \emph{recognizable} (or \emph{semi-decidable}) if the following partial computable function exists 
$$
R_A(n) = \begin{cases} 1 &\text{ if } n \in A \\
0 \text{ or } \nohalt &\text{ if } n \notin A.
\end{cases}
$$
\end{definition}

%

\begin{proposition}
A set $A \subseteq \mathbb N$ is computably enumerable $\iff$ it is recognizable.
\end{proposition}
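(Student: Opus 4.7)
The plan is to prove both directions by converting between an enumerator $E_A$ and a recognizer $R_A$, relying on the dovetailing technique already used in Example~\ref{ex:re}.3.

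For the direction ($\Rightarrow$), assume $A$ is computably enumerable with enumerator $E_A$. I would construct a Turing machine that, on input $n$, runs $E_A$ and, each time $E_A$ outputs a value $a$, checks whether $a=n$. If equality is ever found, the machine outputs $1$ and halts. If $n \in A$, then $n$ will eventually appear in the enumeration, so the machine halts with output $1$; if $n \notin A$, the machine runs forever, which is allowed by the definition of recognizability (the output is simply undefined). This defines the required partial computable function $R_A$.

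For the direction ($\Leftarrow$), assume $A$ is recognizable via a partial computable $R_A$. The naive approach---run $R_A(1), R_A(2), \ldots$ in order and output each $n$ for which $R_A(n)=1$---fails because $R_A$ may fail to halt on some $n \notin A$, trapping the procedure forever on that input. The remedy is dovetailing, exactly as in Example~\ref{ex:re}.3: for each $k = 1, 2, 3, \ldots$, simulate $R_A(n)$ for at most $k$ steps for every $n \leq k$, and whenever one of these bounded simulations halts with output $1$, emit the corresponding $n$. If $n \in A$, then $R_A(n)$ halts in some finite number of steps $t$ with output $1$, so the pair $(n,t)$ will eventually be covered (when $k \geq \max(n,t)$) and $n$ will be output. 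If $n \notin A$, then $R_A(n)$ either halts with $0$ or never halts, and in neither case will $n$ be emitted. Thus the enumerator lists exactly the elements of $A$ (possibly with repetitions, which is allowed).

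The only nontrivial step is the $(\Leftarrow)$ direction, where one must recognize that sequential simulation is insufficient and that dovetailing is required to sidestep the potential non-halting of $R_A$ on non-members. Once dovetailing is invoked, verifying soundness and completeness of the enumeration is immediate from the definitions, so the proof is conceptually clean and essentially a reapplication of the technique already demonstrated in the preceding example.
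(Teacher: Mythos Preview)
Your proof is correct and follows essentially the same approach as the paper: both directions are handled identically, with the recognizer built by scanning the enumeration for $n$, and the enumerator built by dovetailing the recognizer over all inputs and step counts. Your write-up is simply more explicit about why dovetailing is needed and about the verification of soundness and completeness.
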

\begin{proof}
$\Longrightarrow$. $E_A \to R_A$. On a given $n \in \mathbb N$, run the enumerator $E_A$ until $n$ is outputted, and then accept it (map $n$ to $1$). Let's verify it works.\\
If $n \in A$, it will eventually be enumerated by $E_A$, and so accepted.\\
If $n \notin A$, $n$ will never be enumerated, and the vain search will loop, as required.\\
\\
$\Longleftarrow$. $R_A \to E_A$. For $k = 1, 2, 3, \ldots$, run $\{R_A(n)\}_{n \leq k}$ for $k$ steps of computation. If some $n$ is found for which $R_A(n)= 1$, it means $n \in A$, enumerate that $n$. Note that this procedure will never enumerate some $n \notin A$.
\end{proof}

\subsection{The Halting Problem}
\lecture{AIT-10}{https://youtu.be/N6vit1y5j8o}

\begin{definition}
The \emph{halting set} is defined as
$$K_0 \deq \{\langle i,j \rangle | \varphi_i(j) \halts\} \,.$$
\end{definition}

\begin{proposition}
$K_0$ is computably enumerable.
\end{proposition}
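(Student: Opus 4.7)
The plan is to exhibit an enumerator for $K_0$ by dovetailing. The naive attempt --- iterate over pairs $\langle i,j \rangle$ in lexicographic order and, for each, run the universal Turing machine $U$ on $\langle i,j\rangle$ to check whether $\varphi_i(j) \halts$ --- fails because as soon as we encounter a pair for which $\varphi_i(j) \nohalt$, the enumerator would be trapped forever and never move on to the next pair. The standard remedy is to interleave the simulations.

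Concretely, I would define the enumerator $E_{K_0}$ as follows. For $k = 1, 2, 3, \ldots$, simulate $U(\langle i,j \rangle)$ for exactly $k$ steps, for every pair $(i,j)$ with $i, j \leq k$. Whenever one of these bounded simulations halts, output $\langle i, j \rangle$. Repetitions are allowed in an enumeration, so no bookkeeping is needed to avoid re-emitting the same pair at later values of $k$. This procedure is clearly implementable by a Turing machine, since $U$ exists and the outer loop over $k$ and the inner loops over $(i,j)$ are finite for each $k$.

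To verify correctness, I would argue in two directions. If $\langle i,j \rangle \in K_0$, then $T_i(j)$ halts after some finite number $t$ of steps; choosing $k = \max(i, j, t)$, the corresponding bounded simulation succeeds and $\langle i,j\rangle$ is enumerated. Conversely, if $\langle i,j \rangle$ is ever emitted by $E_{K_0}$, then by construction $T_i(j)$ halted within finitely many steps, so $\varphi_i(j) \halts$ and $\langle i,j\rangle \in K_0$.

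There is no genuine obstacle in this proof; the only conceptual point worth emphasizing is the necessity of dovetailing over both the pair $(i,j)$ and the step-count $k$. As an alternative, one may avoid the explicit dovetailing by instead exhibiting a recognizer $R_{K_0}$: on input $\langle i,j \rangle$, simply run $U(\langle i,j\rangle)$ and accept if it halts. This is a partial computable function matching the definition of recognizability, and the earlier proposition equating recognizability with computable enumerability concludes the argument.
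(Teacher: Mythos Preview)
Your proof is correct and follows essentially the same dovetailing approach the paper intends (it defers the proof to Problem Sheet~\#2 but explicitly says it is ``similar to item~3 of Example~\ref{ex:re}'', which is exactly the for-$k$ simulate-all-$T_i(j)$-for-$k$-steps construction you wrote out). Your optional alternative via the recognizer $R_{K_0}$ is also valid and nicely ties back to the equivalence proved just before.
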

\begin{proof}
In Problem Sheet \#2 (\S\ref{sec:ps2}). The proof is similar to item 3. of Example~\ref{ex:re}.
\end{proof}

\begin{theorem}
The halting set $K_0$ is undecidable.
\end{theorem}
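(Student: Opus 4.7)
The plan is to argue by contradiction through a diagonal construction, in the same self-referential spirit as Berry's paradox from~\S\ref{sec:bp}. Suppose, toward contradiction, that $K_0$ is decidable, so that some Turing machine $D$ computes the total function returning $1$ if $\varphi_i(j)\halts$ and $0$ if $\varphi_i(j)\nohalt$, on input $\langle i,j\rangle$.

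From $D$ I would build a new machine $T$ whose behaviour on input $i$ contradicts $D$'s prediction for the diagonal pair $\langle i,i\rangle$: on input $i$, simulate $D(\langle i,i\rangle)$; if the output is $1$, deliberately enter an infinite loop; if the output is $0$, halt at once. Because $D$ is assumed to be a total computable function, $T$ is a well-defined Turing machine and therefore appears in the effective enumeration with some index $e$, that is, $T = T_e$ and $\varphi_T = \varphi_e$.

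I would then feed $e$ to $T_e$ and derive a contradiction in either case. If $\varphi_e(e)\halts$, then by the definition of $K_0$ we must have $D(\langle e,e\rangle) = 1$, but the construction of $T$ forces $T_e(e) = T(e)$ to loop, so $\varphi_e(e)\nohalt$. Conversely, if $\varphi_e(e)\nohalt$, then $D(\langle e,e\rangle) = 0$, and the construction forces $T_e(e)$ to halt, so $\varphi_e(e)\halts$. Both branches are contradictory, so no such decider $D$ can exist.

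The main obstacle is not computational but conceptual: one has to be confident that the informally described machine $T$ is genuinely a Turing machine that is picked up by the effective enumeration of~$T_1, T_2, \ldots$, so that some index $e$ with $T = T_e$ truly exists. This is precisely where the robustness of the Church--Turing framework, together with the effective enumeration constructed earlier, does the work; once that is granted, self-reference closes the argument in a few lines.
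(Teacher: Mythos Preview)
Your proof is correct and follows essentially the same diagonal argument as the paper: assume a decider, build a machine that inverts the halting behaviour on the diagonal, locate its index in the effective enumeration, and derive a contradiction by self-application. The paper separates your machine $T$ into two steps (first a diagonal decider $\Delta$, then its inversion $\bar\Delta$), but this is cosmetic; the substance is identical.
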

The proof uses the diagonalization method invented by Cantor when he proved that there are uncountably many real numbers. He assumed that the real numbers could be enumerated and reached a contradiction by exhibiting a real number that cannot be on the list. The contradiction forces us to abandon the possibility of enumerating the real numbers. Here, the Turing machines and their inputs really can be enumerated. The assumption that leads to the contradiction is that the table~$i,j \mapsto T_i(j)$ can be computed. See the proof and Table~\ref{fig:halting}.

\begin{proof}
Suppose that a Turing machine $D_{K_0}$ decides $K_0$.
$$
D_{K_0} (\langle i, j \rangle)= 
\begin{cases} 
1 \qquad &\text{ if } T_i(j) \halts \\
0 \qquad &\text{ if } T_i(j) \nohalt.
\end{cases}
$$
The machine $D_{K_0}$ can be used to construct the machine 
$$
\Delta (i)= 
\begin{cases} 
1 &\text{ if } T_i(i) \halts \\
0 \qquad &\text{ if } T_i(i) \nohalt \,,
\end{cases}
$$
which in turn can be transformed into the machine
$$
\bar \Delta (i)= 
\begin{cases} 
\nohalt &\text{ if } T_i(i) \halts \\
0 \qquad &\text{ if } T_i(i) \nohalt \,.
\end{cases}
$$
Like all Turing machine, $\bar \Delta$ is in the effective enumeration, so~$\bar \Delta = T_\delta$ for some~$\delta \in \mathbb N$. What does~$\bar \Delta$ do on input~$\delta$?
$$
\bar \Delta (\delta)= 
\begin{cases} 
\nohalt &\text{ if } T_\delta(\delta) \halts \\
0 \qquad &\text{ if } T_\delta(\delta) \nohalt \,,
\end{cases}
$$
so 
$$
\bar \Delta (\delta) \nohalt \text{ if } \bar \Delta(\delta) \halts 
\qquad \text{and} \qquad
\bar \Delta (\delta) \halts \text{ if } \bar \Delta(\delta) \nohalt \,,
$$
a contradiction. See Table~\ref{fig:halting}.
\end{proof}

\begin{table}
\centering
\begin{tabular}{c|ccccccc}
  & 1 & 2 & 3 & 4 & 5 & \dots & $\delta$ \\
\hline
$T_{1}$      & $\boxed 3$ & $\nohalt$ & 3 & 1 & $\nohalt$ & \dots &  \\
$T_{2}$      & 1 & $ \boxed \nohalt$ & 1 & $\nohalt$ & $\nohalt$ & \dots &  \\
$T_{3}$      & 2 & 2 & $ \boxed \nohalt$ & 2 & $\nohalt$ & \dots &  \\
$T_{4}$      & 1 & $\nohalt$ & 1 &  \boxed 2 & $\nohalt$ & \dots & \\
$T_{5}$      & 7 & 7 & 1 & $\nohalt$ &  \boxed 7 & \dots & \\
$\vdots$     & $\vdots$ & $\vdots$ & $\vdots$ & $\vdots$ & $\ddots$ & $\vdots$ \\
$T_{\delta}$ & $\nohalt$ & 0 & 0 & $\nohalt$ & $\nohalt$ & \dots & {\color{red}  \boxed ?} \\
\end{tabular}
\caption{Table representing the results of all computations. The machine $T_\delta$ is constructed by inverting the behaviour of the diagonal of the table.}
\label{fig:halting}
\end{table}

Thus there exists no fixed algorithm which decides if a given Turing machine~$i$ halts on a given input~$j$.
If~$T_i(j) \halts$, this can be verified: just run it.
If~$T_i(j) \nohalt$, running it is a bad idea---no amount of waiting can confirm non-halting.
If this is transposed into modern programming languages, how can one tell if a program will loop?
For trivial loops, it can be recognized.
But what the halting problem tells us is that there are ever more convoluted ways in which a program can loop so that no fixed algorithm can capture all those ways.

The halting set $K_0$ can be embedded into a real number $ \chi = \chi_1 \chi_2 \chi_3 \ldots $, where
$$
\chi_n = \begin{cases}
 1 & \text{if } n \in K_0 \\
 0 & \text{if } n \notin K_0 \,.
\end{cases}
$$
This $\chi$ is Turing's \emph{uncomputable number}.


\subsection{Uncomputability $\boldsymbol \implies$ Incompleteness}

We’ve now seen that some functions cannot be computed---even in principle. But what about proving properties of such functions? Can a sufficiently clever mathematician---or a sufficiently powerful formal system—sidestep uncomputability by proving whether a program halts?
Because a proof is itself a kind of computation, uncomputability spills over into provability—and we now enter the realm of incompleteness.

\begin{definition}
A \emph{formal axiomatic theory} (\fat) is defined by
\begin{itemize}[itemsep=0.9pt, topsep=5pt]
\item[--] an alphabet
\item[--] a grammar
\item[--] a set of axioms
\item[--] a set of rules of inference (logic) and 
\item[--] a proof-checking algorithm. 
\end{itemize}
In Hilbert's spirit, proofs must be made so precise that a machine can verify their validity.
In a \fat, a \emph{theorem} is a statement that can be derived from the axioms and the rules of inference. The \emph{proof} is the derivation.
\end{definition}

The defining properties of \fats\ imply that the theorems of any given \fat\ can be computably enumerated by a Turing machine. 
Such a Turing machine embodies the \fat (the axioms and rules of inference can be encoded finitely), and enumerates, one by one, all theorems in order of proof size (not in order of theorem-statement size). How? Two ways.

\begin{enumerate}[itemsep=0.9pt, topsep=5pt]
\item For $i= 1,2,3, \ldots $, output all statements obtained by $i$ applications of a rule of inference to the axioms.
\item Working through all strings $p$ (in length increasing fashion), verify if $p$ is a proof of some statement. If yes output the statement.
\end{enumerate}

In computer-theoretic terms, a formal axiomatic theory is a computably enumerable set of mathematical assertions. 

\begin{theorem}[Incompleteness \`a la Turing]
For all \fats, there exists some $i$ and $j$ for which $T_i(j) \nohalt$ is true but unprovable.
\end{theorem}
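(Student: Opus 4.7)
The plan is a reduction: assume, for contradiction, that some (sound) \fat~$F$ proves every true statement of the form ``$T_i(j) \nohalt$''. I would use this assumption to build a decider for the halting set $K_0$, contradicting the previous theorem.

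First I would set up the reduction. Given $\langle i, j \rangle$, run two processes in parallel (or, equivalently, dovetailed on a single tape). Process~A simulates $T_i(j)$ step by step. Process~B enumerates the theorems of $F$, one by one, using the effective enumeration procedure outlined just before the statement (cycling through candidate proofs $p$ in lexicographic order and invoking $F$'s proof-checker), and watches for a theorem of the exact syntactic form ``$T_i(j) \nohalt$''. The decider outputs $1$ if process~A halts first, and $0$ if process~B finds a proof of non-halting first. By soundness of~$F$, a proof of ``$T_i(j) \nohalt$'' is produced only when $T_i(j)$ genuinely does not halt, so the two processes are mutually exclusive, and the output is always correct when one of them terminates.

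Next I would argue termination. If $T_i(j) \halts$, process~A eventually halts. If $T_i(j) \nohalt$, then by the contradiction hypothesis the statement ``$T_i(j) \nohalt$'' is a theorem of~$F$, hence it appears in the enumeration produced by process~B after finitely many steps. In either case the dovetailed machine halts and outputs the correct bit, so it is a total computable characteristic function of $K_0$, i.e.\ a decider $D_{K_0}$. This contradicts the theorem that $K_0$ is undecidable, so the hypothesis fails: there must exist $i, j$ with $T_i(j) \nohalt$ true but unprovable in~$F$.

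The main subtlety—and the place I would be most careful—is the tacit soundness assumption: the argument needs that $F$ does not prove false non-halting statements (otherwise process~B could lie). This is standardly built into ``\fat'' in this setting, but it is worth flagging, because without it one only gets the weaker $\omega$-consistency style statement. A secondary point to make explicit is that the syntactic predicate ``$T_i(j) \nohalt$'' is expressible in the language of $F$ once $F$ is strong enough to encode the operation of Turing machines; this is implicit in the spirit of the theorem (analogous to Gödel's hypothesis that the system is strong enough to encapsulate arithmetic), so I would state it as a standing assumption rather than prove it.
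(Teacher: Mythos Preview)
Your proposal is correct and is essentially the same argument as the paper's: assume a \fat\ proves every true non-halting statement, then dovetail a step-by-step simulation of $T_i(j)$ with an enumeration of the \fat's theorems to obtain a decider for $K_0$, contradicting undecidability. Your explicit remarks on soundness and on the expressibility of ``$T_i(j)\nohalt$'' in the language of the \fat\ are assumptions the paper leaves tacit, so flagging them is an improvement in rigor rather than a divergence in method.
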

\begin{proof}
By contradiction: Suppose that there exists a \fat\ $\mathcal T$ in which for all $\langle i,j\rangle$ such that $T_i(j) \nohalt$, it can be proven that indeed $T_i(j) \nohalt$.
A decider for the halting problem can be exhibited by executing two computations in parallel:
\vspace{-6pt}
\begin{center}
\begin{tabular}{rl}
$D$:~ &\texttt{On input $\langle i,j\rangle$,}\\
&\texttt{For $t=1, 2, \dots$}\\
&\texttt{\quad Run $T_i(j)$ for $t$ steps} (looking for $T_i(j) \halts$)\\
&\texttt{\quad Run the enumerator of $\mathcal T$ for $t$ steps} (looking for a proof of $T_i(j)\nohalt$)\\
&\quad (One of these computations will eventually find the halting status of $T_i(j)$.)\\
&\quad \texttt{Output that status.}\\
\end{tabular}
\end{center}

But no such decider can exist.
\end{proof}
We will return to incompleteness in~\S\ref{sec:chaitinincomplete} when we discuss Chaitin's incompleteness theorem.

\subsection{Rice's Theorem}
\lecture{AIT-11}{https://youtu.be/DvpGnTEPej4}
\vspace{5pt}

What has been presented so far is sufficient computably theory to move to algorithmic information. However, it is worth presenting Rice's theorem, because it shows that what appears to be a specific uncomputable problem, the halting problem, contaminates many, many other problems.
In a nutshell, Rice's theorem states that any non-trivial property concerning the behaviour of Turing machines (or computer programs) is undecidable.

It is perhaps more enlightening to begin with an example and later generalize it into the theorem.
\begin{example}
Consider the successor function 
\beas
S: \mathbb N &\to& \mathbb N \\
n & \mapsto & n+1 \,.
\eeas
In the enumeration $\varphi_1, \varphi_2, \varphi_3 \ldots$ of all partial computable functions, the function $S$ occurs infinitely many times.
Consider
$$
I_S = \{i \st \varphi_i = S\}
$$
\nit{Problem:} Is $I_S$ decidable? I.e. does there exist a Turing machine that computes
$$
D_{I_S} (i) = \begin{cases}
1 & \text{if } \varphi_i = S \\
0 & \text{if } \varphi_i \neq S
\end{cases} \qquad ?
$$
In modern terms: can a fixed algorithm generically decide if a given program computes the successor function?

\smallskip
\nit{Answer:} No, the set~$I_S$ is undecidable.\\
\smallskip
\nit{Proof.}
Suppose by contradiction that $D_{I_S}$ exists.
%
First, construct a Turing machine~$T_k$ that computes~$S$. 
(The decider~$D_{I_S}$ can be used on input~$k$ to verify that~$T_k$ does indeed compute $S$.)
Then, from~$T_k$, we construct a family of Turing machines~$\{ T_k^{(i,j)}\}_{i,j \in \mathbb N}$, with the parameters~$i$ and~$j$ hardcoded in the machine~$T_k^{(i,j)}$.
\begin{center}
\begin{tabular}{rl}
 $T^{(i,j)}_k$~: &\texttt{ On input $n$}\\
&\texttt{\qquad Run $T_i(j)$}\\
&\texttt{\qquad if it halts}\\
&\texttt{\qquad \qquad Run and output $T_k(n)$.}\\
\end{tabular}
\end{center}

\noindent Note that if $T_i(j) \halts$, then $T^{(i,j)}_k$ does compute $S$ and\\
\hphantom{Note that }if $T_i(j) \nohalt$, then $T^{(i,j)}_k$ does not compute $S$.

\noindent Note also that 
$$
(i,j) \underbrace{\to T^{i,j}_k \to E(T^{i,j}_k) \to}_{\text{Turing machine } M} \text{index of }T^{i,j}_k
$$
is computable because each stage is computable. Therefore, 
$$
\langle i,j \rangle \stackrel{M}{\to} \text{index of }T^{i,j}_k \stackrel{D_{I_S}}{\to}  \begin{cases}
1 & \text{if } T_i(j) \halts \\
0 & \text{if } T_i(j) \nohalt
\end{cases} 
$$
would be a decider for $K_0$. \qed
\end{example}


\begin{definition}
An \emph{index set} $I$ is a set with the following property:
If $\varphi_i = \varphi_j$ then 
$$
i \in I \iff j \in I
$$
\end{definition}
Index sets are a way to describe classes of partial computable functions or properties of partial computable functions. In the example, $I_S$ is an index set.
\begin{definition}
An index set $I$ is \emph{trivial} if either $I = \emptyset$ or $I = \mathbb N$.
\end{definition}

\begin{theorem}[Rice]
Any non-trivial index set is uncomputable.
\end{theorem}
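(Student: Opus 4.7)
The plan is to reduce the halting problem to deciding membership in $I$, generalizing the construction used for $I_S$ in the preceding example.

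First I would observe that $I$ is decidable iff its complement $\mathbb{N} \setminus I$ is decidable, and that $\mathbb{N} \setminus I$ is itself a non-trivial index set (the index-set property is preserved by complementation, and non-triviality is symmetric). The nowhere-defined function---the partial computable function with empty domain---has some index, and by the index-set property either all of its indices lie in $I$ or all lie in $\mathbb{N} \setminus I$. Up to replacing $I$ by its complement, I may therefore assume without loss of generality that the nowhere-defined function's indices are \emph{not} in $I$. Since $I$ is non-trivial, there then exists some $a \in I$, and $\varphi_a$ differs from the nowhere-defined function.

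Next, for each pair $\langle i, j \rangle$, I would describe a Turing machine $T^{(i,j)}$ that on input $n$ first simulates $T_i(j)$ and, only if that simulation halts, proceeds to run and output $T_a(n)$. By construction, if $T_i(j) \halts$ then $T^{(i,j)}$ computes $\varphi_a$, so its index is in $I$; if $T_i(j) \nohalt$ then $T^{(i,j)}$ computes the nowhere-defined function, so its index is outside $I$. In other words, the index of $T^{(i,j)}$ lies in $I$ iff $\langle i, j \rangle \in K_0$.

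Finally, as in the $I_S$ example, the map $\langle i, j \rangle \mapsto \text{index of } T^{(i,j)}$ is computable: assembling the description $E(T^{(i,j)})$ from $i$, $j$, and the fixed index $a$ is effective, and so is locating its position in the effective enumeration of Turing machines. Composing this computable map with a hypothetical decider $D_I$ would yield a decider for $K_0$, contradicting the undecidability of the halting set. The only subtle step is the preliminary reduction to the case where the nowhere-defined function is not in $I$; everything else is a direct generalization of the $I_S$ argument.
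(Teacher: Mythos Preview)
Your proof is correct and follows essentially the same reduction as the paper: build, from $\langle i,j\rangle$, a machine that first simulates $T_i(j)$ and then, if that halts, behaves like a fixed machine whose index has the opposite $I$-membership from the nowhere-defined function; this gives a many-one reduction of $K_0$ to $I$.

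The one presentational difference is in the preliminary step. You normalize by passing to the complement (``WLOG the nowhere-defined function is not in $I$'') and then invoke the non-constructive existence of some $a\in I$. The paper instead \emph{uses the hypothetical decider $D_I$}: it fixes a concrete always-looping machine $T_l$, evaluates $D_I(l)$ to see which side $l$ lies on, and then searches $D_I(1),D_I(2),\dots$ until it finds an index $k$ on the opposite side; it then carries both cases through explicitly. Your version is the cleaner textbook argument; the paper's version is more algorithmic in spirit (everything, including the choice of $k$, is effectively obtained from $D_I$), which fits the course's emphasis on constructing machines. Either way, the heart of the proof---the family $T^{(i,j)}$ and the computability of $\langle i,j\rangle\mapsto\text{index of }T^{(i,j)}$---is identical.
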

\begin{proof}
Let $I$ be a non-trivial set of indices.
Suppose (by contradiction) that $D_{I}$ exists.
Program your favourite Turing machine $T_l$ that loops on all inputs, and compute its index $l$.
Compute $D_{I}(l)$.

\nib{Case 1:} If $D_{I}(l) = 0$, i.e., $l \not\in I$,
compute $D_{I}(1)$, $D_{I}(2)$,... until a $k \in I$ is found.

\nib{Case 2:} If $D_{I}(l) = 1$, i.e., $l \in I$,
compute $D_{I}(1)$, $D_{I}(2)$,...  until a $k \not\in I $ is found.

By construction,~$T_{k}$ computes a function that does not loop on all inputs---i.e., it halts on at least one input.
We can then define a family of Turing machines~$\{ T_{k}^{(i,j)}\}_{i,j \in \mathbb N}$ that have~$i$ and~$j$ harded coded into them as follows:
\begin{center}
\begin{tabular}{rl}
$T^{(i,j)}_{k}$: & \texttt{On input $n$,}\\
&\texttt{Run $T_i(j)$}\\
&\texttt{if it halts}\\
&\texttt{\qquad Run $T_{k}(n)$}.
\end{tabular}
\end{center}

Note that if $T_i(j) \halts$, then $T^{(i,j)}_{k}$ has the same behaviour as $T_{k}$. Therefore, $k$ and the index of $T^{(i,j)}_{k}$ are either both elements of $I$ (case 1) or both not elements of $I$ (case 2). \\
If $T_i(j) \nohalt$, then $T^{(i,j)}_{k}$ loops on all inputs, therefore either $k \in I$ and the index of $T^{(i,j)}_{k}$ is not in $I$ (case 1) or vice versa (case 2).

\noindent Note also that 
$$
(i,j) \underbrace{\to T^{i,j}_k \to E(T^{i,j}_k) \to}_{\text{Turing machine } M} \text{index of }T^{i,j}_k
$$
is computable.
Therefore, in case 1,
$$
\langle i,j \rangle \stackrel{M}{\to} \text{index of }T^{i,j}_k \stackrel{D_{I}}{\to}  \begin{cases}
1 & \text{if } T_i(j) \halts \\
0 & \text{if } T_i(j) \nohalt
\end{cases} 
$$
would be a decider for $K_0$, while in case 2,
$$
\langle i,j \rangle \stackrel{M}{\to} \text{index of }T^{i,j}_k \stackrel{D_{I}}{\to}  \begin{cases}
0 & \text{if } T_i(j) \halts \\
1 & \text{if } T_i(j) \nohalt
\end{cases} 
$$
would be.
\end{proof}

\pscom{All notions for Problem Sheet \#2 have been covered. See \S\ref{sec:ps2}.}

\newpage
\section{Algorithmic Complexity}

We have seen that some functions or numbers cannot be computed. 
But even among computable objects, not all are created equal: some are structured, others seem random. Can we measure this? Algorithmic complexity provides a framework for quantifying the information content of individual strings, not through probabilities, but via their shortest descriptions on a universal machine.


\subsection{Motivation}

There are contexts in which it is unsuitable to look for an underlying probability distribution when communicating a string of symbols.

\begin{example}[Communication of $\pi$]

Suppose Alice wants to communicate the following message to Bob 
$$ \underbrace{31415926538...9}_{10^{10}}~~~~\text{(first decimal digits of $\pi$)} \,.$$ 

Applying Shannon's theory to that case might be frustrating. In practice, the probability distribution is not given; it has to be estimated from the frequencies of the symbols.
 In the case of $\pi$, the estimation would yield a uniform and independent distribution on the alphabet $\{ 0, 1, 2, \ldots 9 \}$.
 In this case, 4 bits per digit are required, and the message is communicated by $4 \cdot 10^{10}$ bits. 
 
Considering groupings of digits doesn’t help much. For instance, Alice could group the digits in blocks of~$3$ and reserve a~$10$-bit codeword for each of the~$1000$ possible sequences of~$3$ digits.
This would allow Alice to communicate the message in~$10 \cdot \frac{10^{10}}{3}$ bits. 
Coding the message with larger blocks would still require more than~$\log_2 10 \cdot 10^{10}$ bits of communication.

The generic idea behind AIT is to assume that the string has been generated by an underlying \emph{computation}; not by an underlying \emph{probabilistic process}. 
In the context of communication, the sender (Alice) and the receiver (Bob) are equipped with universal computers, and Alice may communicate
$$
\texttt{`The first $10^{10}$ digits of~$4 \sum_{n=0}^\infty \frac{(-1)^n}{2n + 1}$,'} 
$$
which Bob runs on his computer to decode.
\end{example}

\begin{example}[Randomness of individual objects]
Is this string random
$$
010101010101010101010101010101010101010101010101010101010101010101\ldots 01 ?
$$
Is this one random
$$
001010110101111010101000011011110101011111001010101111100010011000\ldots 11 ?
$$
\end{example}
We would like to say that the first one is not random and that the second one at least appears to be random. 

How can this be formalized probabilistically?
It is possible that both strings arise from the same probabilistic Bernoulli process of $p=1/2$, in which case both of them have the same probability.

If viewed as arising from a computation instead of a probabilistic process, the patterns displayed by the first string amount to a short description of it. (e.g. print ``01'' $10^6$ times). On the other hand, if the second string really has no patterns, then there is no short description for it. The shortest computable description for it might be:
\begin{center}
\texttt{Print:}
``001010110101111010101000011011110101011111001010101111
\ldots 11\text{''}
\end{center}

Both examples motivate us to look at descriptions of minimal lengths to be run on a universal computer.

\subsection{Plain Complexity}
\lecture{AIT-12}{https://youtu.be/Rxu690XLgoM}
\vspace{5pt}

Algorithmic complexity has two main versions: plain and prefix complexity.\\

 \begin{minipage}{0.55\textwidth}
 \hspace{-10pt}
\qquad \qquad\textbf{Plain complexity $C$}
 \begin{itemize}[itemsep=0.9pt, topsep=5pt]
 \item[--] At first the most intuitive
 \item[--] Originally discovered in the mid-60s by Solomonoff, Kolmogorov and Chaitin. 
 \item[--] Problematic for the development of the theory
 \end{itemize} 
 \end{minipage}
 \begin{minipage}{0.40\textwidth}
  \vspace{-15pt}
\quad \quad\textbf{Prefix complexity $K$}
 \begin{itemize}[itemsep=0.9pt, topsep=5pt]
 \item[--] Initially less intuitive
 \item[--] Discovered in the mid-70s by Levin and Chaitin. 
 \item[--] Solves many problems
 \end{itemize} 
 \end{minipage}

 \vspace{10pt}
The key idea: algorithmic complexity of a bit string $x$ is the length of its shortest description (on a fixed description method).
To be useful, there ought to be a computation (and so a Turing machine) that can generate $x$ from its description:

\begin{center} 
\begin{picture}(120,35)(-45,0)
\put(0,0){\framebox(30,30){$T$}} 
\put(-23,15){\vector(1,0){20}}
\put(-40,5){\makebox(20,20){$p$}} 
\put(33,15){\vector(1,0){20}}
\put(50,5){\makebox(25,20){$x$}} 
\end{picture} \,.
\end{center}

\begin{definition}
An input of a Turing machine shall also be called a \emph{program}.
\end{definition}

We can view every Turing machine $T$ as a description method.
If $T$ admits a program $p$ for which $T(p) = x$, then $p$ is a description for $x$ (under the description method given by $T$).

We can thus define the \emph{complexity of a string $x$, with respect to a description method given by} $T$ as
$$
C_T(x) = \min_{p} \{ \len p \st  T(p) = x\} \qquad \text{and} \qquad C_T(x) = \infty \ \text{if no $p$ exists.}
$$

Note that for each Turing machine $T$, we have a complexity measure
$$
x \mapsto C_T(x) \,.
$$
Since what we are after is the shortest descriptions, we have good news:
one complexity measure is ``smaller'' than all the others.




Let $$\langle i,j \rangle \deq \bar i j = 1^i0ij \,.$$ Let $U$ be the universal Turing machine such that 
$
U(\langle i, j \rangle) = T_i(j) \,.
$

\begin{definition}
The \emph{plain algorithmic complexity} $C_U(x)$ of a string $x \in \{0,1\}^*$ is defined as
$$
C_U(x) = \min_{p} \{ \len p \st  U(p) = x\} \,.
$$
\end{definition}

\subsection{The Invariance Theorem}
\begin{theorem}[Additive optimality of $U$; The invariance theorem; Universality of algorithmic complexity]
For any Turing machine $T$, there exists a constant $b_T$ such that,
\begin{equation} \tag{*}
C_U(x) \leq C_T(x) + b_T ~~~~\forall x\,. 
\end{equation}
\end{theorem}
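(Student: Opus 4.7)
The plan is to exploit the fact that $U$ was constructed precisely to simulate every machine in the effective enumeration, and that the pairing $\langle i, j \rangle = \bar{i} j$ uses a self-delimiting prefix whose length depends only on $i$, not on $j$.

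First, I would fix an index $i$ of $T$ in the effective enumeration of Turing machines, so that $T = T_i$. Such an $i$ exists because the effective enumeration is onto all Turing machines. Let $p^*$ be a shortest $T$-program for $x$, i.e.\ a string with $T(p^*) = x$ and $\len{p^*} = C_T(x)$ (if no such $p^*$ exists, then $C_T(x) = \infty$ and the inequality is trivial).

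Next, I would form the $U$-program $q = \bar{i}\, p^*$ obtained by prepending the self-delimiting encoding of $i$ to $p^*$. Because $\bar{i}$ is self-delimiting, $U$ can uniquely parse the initial segment of $q$ to recover $i$, and then interpret the remainder as the input to $T_i$. By the defining property of $U$, we have
\[
U(q) = U(\langle i, p^* \rangle) = T_i(p^*) = T(p^*) = x\,.
\]
Hence $q$ is a description of $x$ for $U$, and therefore
\[
C_U(x) \leq \len{q} = \len{\bar{i}} + \len{p^*} = (2\len{i} + 1) + C_T(x)\,.
\]

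Finally, setting $b_T \deq 2\len{i} + 1$ gives the desired bound $C_U(x) \leq C_T(x) + b_T$. Crucially, this constant depends only on $T$ (through its index $i$), not on the string $x$, which is exactly what the theorem asserts. There is no real obstacle here; the conceptual content is already built into the construction of $U$ and the self-delimiting encoding $\bar{i}$. The one small point worth highlighting in the proof is why we need $\bar{i}$ rather than the bare binary representation of $i$: without a self-delimiting marker, $U$ could not tell where the index ends and the program $p^*$ begins, and the simulation would fail.
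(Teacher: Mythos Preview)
Your proof is correct and follows essentially the same approach as the paper: pick an index $i$ with $T = T_i$, prepend the self-delimiting prefix $\bar i$ to a shortest $T$-program, and observe that the resulting $U$-program has length $C_T(x) + \len{\bar i}$ with $\len{\bar i}$ depending only on $T$. The only cosmetic difference is the exact form of the constant ($2\len{i}+1$ versus the paper's $2i+1$), which reflects a minor notational inconsistency in the paper rather than any difference in argument.
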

\begin{proof}
If $C_T(x) \leq \infty$, it means that the machine $T$ can compute $x$ by some program $q$, where $\len q = C_T(x)$. \\
But $T = T_i$ for some $i$.
Therefore, $U$ can also compute $x$, via the program $p = \langle i, q \rangle$. This program has length
$$
\len{ 1^i0iq } = \len q + \underbrace{2 i + 1}_{b_T} = C_T(x) + b_T \,.
$$
\end{proof}
A universal Turing machine with the property (*) is \emph{additively optimal}.
Note that not all universal Turing machines are additively optimal. 
For instance, a universal Turing machine $W$ may expect input of the form $1^i01^q$ to simulate $T_i(q)$ (and halt with empty output on any program not of the right form).

The invariance theorem can be applied by setting $T=V$, where $V$ is another additively optimal universal Turing machine. We have $\exists b_V$, $\exists b_U$ such that
$$C_U(x) \leq C_V(x) + b_V  ~~~~\text{and}~~~~
C_V(x) \leq C_U(x) + b_U ~~~~\forall x\,.  $$
Therefore, there exist a constant $b_{UV}$ such that
$$
|C_V(x) - C_U(x)| \leq b_{UV}~~~~\forall x\,.
$$

\textbf{Notation:}
We denote~$O(f(x))$ to be a quantity whose absolute value does not exceed~$f(x)$ by more than a fixed multiplicative factor. More precisely,
\mbox{$g(x) = O(f(x))$} if there are constants~$c$, $x_0$ such that \mbox{$|g(x)| \leq c f(x)$} for all~$x \geq x_0$.
This permits us to write
$$
C_V(x) = C_U(x) + O(1) \,.
$$

We thus see that the choice of reference universal and additively optimal Turing machine only matters by an additive constant (independent of $x$). 
We consider this irrelevant, as it is irrelevant for strings large enough.
This is the invariance in the choice of the reference machine.
The invariance theorem motivates dropping the~$U$ in~$C_U$. We simply write~$C(x)$.

The invariance theorem is particularly useful when the Church--Turing is invoked.
What is the length~$C_{\text{LISP}} (x)$ of the minimal LISP program for a string~$x$? 
What is the length~$C_{\text{PYTHON}} (x)$ of the minimal PYTHON program for a string~$x$? 
The invariance theorem states that there exists a universal constant $b_{\text{LISP} \,;\, \text{PYTHON}}$ that bounds the difference between the measures $C_{\text{LISP}} (\cdot)$ and $C_{\text{PYTHON}} (\cdot)$, i.e.,
$$
|C_{\text{LISP}} (x) - C_{\text{PYTHON}} (x)| \leq b_{\text{LISP}\,;\, \text{PYTHON}} \qquad \forall x \,.
$$
The constant may be large, but for strings large enough, it is irrelevant.

And by the Church--Turing--Deutsch principle, if any other physical system is used to define a complexity measure on strings, it will not yield a more minimal measure than~$C_U$.

\subsection{Basic Properties}
\lecture{AIT-13}{https://youtu.be/fKcz5x5EmgE}

\begin{definition}
Let $x$ be a finite string. Its \emph{shortest program} is denoted~$x^*$. In other words,~$x^*$ is the witness of~$C(x)$. If there is more than one shortest program, $x^*$ is the one with the least running time. 
\end{definition}

\begin{proposition}[upper bound]
For all strings $x$,
$$C(x) \leq \len x + O(1)\,.$$
\end{proposition}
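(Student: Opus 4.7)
The plan is to exhibit a specific Turing machine for which the complexity of $x$ is exactly $|x|$, and then invoke the invariance theorem to transfer this bound (up to an additive constant) to the reference machine $U$.

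First I would consider the ``identity'' or ``copy'' Turing machine $T_{\text{id}}$ whose list of rules simply instruct the head to scan the input, do nothing, and halt, so that on input $x$ the machine leaves $x$ on the tape as its output: $T_{\text{id}}(x) = x$ for every $x \in \{0,1\}^*$. Such a machine clearly exists and has a finite list of rules, hence sits somewhere in the effective enumeration $T_1, T_2, \ldots$. By the very definition of $C_{T}$, taking $p = x$ as a program gives $C_{T_{\text{id}}}(x) \leq |x|$ for all $x$.

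Then I would apply the invariance theorem to $T = T_{\text{id}}$: there exists a constant $b_{T_{\text{id}}}$ (depending only on the index of $T_{\text{id}}$ in the effective enumeration, not on $x$) such that
\[
C(x) = C_U(x) \leq C_{T_{\text{id}}}(x) + b_{T_{\text{id}}} \leq |x| + b_{T_{\text{id}}}.
\]
Since $b_{T_{\text{id}}}$ is a fixed constant independent of $x$, this is exactly the bound $C(x) \leq |x| + O(1)$.

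There is no real obstacle here: the main subtlety is merely recognizing that $T_{\text{id}}$ is a legitimate Turing machine of fixed finite description, so the overhead $b_{T_{\text{id}}}$ depends only on the reference machine $U$ (through the encoding $\langle i, \cdot\rangle$ that $U$ expects) and not on $x$. Intuitively, the constant absorbs the instruction ``simulate the copy machine on the following input,'' which costs a fixed number of bits regardless of how long $x$ is.
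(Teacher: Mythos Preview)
Your proof is correct and essentially identical to the paper's: both exhibit the identity Turing machine and use the constant-size overhead of simulating it on $U$ to obtain the bound. The only cosmetic difference is that you invoke the invariance theorem as a black box, whereas the paper unpacks it explicitly by writing the program $p = \bar i x$ (with $T_i = Id$) and noting $|p| = |\bar i| + |x| = |x| + O(1)$; the paper also offers the informal ``\texttt{Print ``$x$''}'' variant as a high-level gloss.
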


\begin{proof}
\nit{Invoking Turing machines.}\\
There is a Turing machine $Id$, which computes the identity function. $U$ can simulate $Id $ on input $x$. The program for $U$ would be $p = \langle i, x \rangle = \bar i x$, where $T_i = Id$.
$$
|p| = |\bar i| + |x| = |x| + O(1) \,.
$$
This is \emph{one} program for computing $x$ from $U$: the minimal program therefore has length $C(x) \leq |p|$.

\nit{Invoking high-level programming language.}

\texttt{Print ``$x$''} is a program for $x$. It has length $|x| + O(1)$.
\end{proof}

Recalling our identification of strings with numbers, Eq~\eqref{eq:bijection}, and Prop.~\ref{prop:length}, \mbox{$|x| = \lfloor \log x \rfloor$}, so
$$C(x) \leq \log x + O(1)\,.$$

\begin{definition}
The \emph{complexity} $C(x,y)$ of a pair of strings is defined with respect to a fixed encoding $\langle \cdot, \cdot \rangle : \{0,1\}^* \times \{0,1\}^* \to \{0,1\}^* $ as
$$
C(x,y) = \min_{p} \{ \len p \st  U(p) = \langle x, y \rangle\} \,.
$$
\end{definition}

\begin{example}
$C(x,x) = C(x) + O(1)$.
This means two things:
$$1.~~C(x,x) \leq C(x) + O(1) \qquad \text{and} \qquad
2.~~C(x) \leq C(x,x) + O(1) \,.$$
\noindent 1. One program to compute $\langle x, x \rangle$ can be obtained from $x^*$ by executing $x^*$ to get $x$, and computing $\langle x, x \rangle$.\\
\noindent 2. One program to compute $x$ can be obtained from $\langle x, x \rangle^*$ by executing $\langle x, x \rangle^*$ and extract $x$ from the inverse of $\langle \cdot , \cdot \rangle$.
\end{example}

The computation of a string $x$ may be eased by some auxiliary information $y$.
\begin{definition}
The \emph{conditional complexity} of the string $x$ given the string $y$ is
$$
C(x \given y) \deq \min_{p} \{ \len p \st  U(p,y) = x\}  \,,
$$
where the universal Turing machine $U(\cdot, \cdot)$ has been promoted to two tapes\footnote{The two-tape Turing machine can be simulated by a single-tape Turing machine; so two-tape machines compute the same class of functions. I invoke two-tape machines because they correspond to the realistic model of keeping auxiliary information and programs on different registers.}.
The second entry refers to the tape of \emph{auxiliary information}.
\end{definition}
We update the unconditional definition to
$
C(x) \deq C(x \given \varepsilon) \,.
$

\begin{example}
$C(x \given y) \leq C(x) + O(1)$,
because one way to compute $x$ with $y$ as auxiliary information is to ignore $y$, and produce $x$ from scratch with $x^*$.
\end{example}

\subsection{Chain Rule for Plain Complexity}
\begin{proposition}[Chain rule, easy side]\label{creasy}
For all strings $x$ and $y$,
$$C(x, y) \leq C(x) + C(y \given x) + O(\log n)\,,$$
where $n = \max \{\len x, \len y\}$.
\end{proposition}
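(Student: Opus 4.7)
The strategy is to build a single program that encodes both $x^*$ (the shortest program for $x$) and $q$ (a shortest program producing $y$ from $x$ as auxiliary input), concatenated in a way that a Turing machine can unambiguously parse, then apply the invariance theorem.

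First I would design an auxiliary Turing machine $T$ which, on input $p$, behaves as follows: it reads $p$ from the left as~$\overline{\ell}\, r\, s$, where~$\overline{\ell}$ is the self-delimiting encoding of a natural number~$\ell$, $r$ is the block of the next $\ell$ bits, and $s$ is the remainder. The machine then simulates $U(r)$ to obtain a string $x$, then simulates $U(s, x)$ with $x$ placed on the auxiliary tape to obtain a string~$y$, and finally outputs $\langle x, y\rangle$. Because the prefix $\overline{\ell}$ is self-delimiting and then exactly $\ell$ bits are consumed, the parse is well defined.

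Next I would feed $T$ the specific program $p = \overline{|x^*|}\, x^*\, q$, where $q$ is a shortest conditional program with $U(q, x) = y$ and $|q| = C(y \mid x)$. By construction $T(p) = \langle x, y\rangle$, so $C_T(x,y) \leq |p|$. The length of $p$ is
\[
|p| \;=\; |\overline{|x^*|}| + |x^*| + |q| \;=\; \bigl(2\lfloor \log |x^*| \rfloor + 1\bigr) + C(x) + C(y \mid x).
\]
Since $C(x) \leq |x| + O(1) \leq n + O(1)$, we have $|x^*| \leq n + O(1)$ and hence $2\lfloor \log |x^*| \rfloor + 1 = O(\log n)$. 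The invariance theorem then gives $C(x,y) \leq C_T(x,y) + O(1)$, which assembles to the claimed bound.

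The only delicate step is the logarithmic overhead: one must use a self-delimiting length marker (as defined earlier via $\overline{\cdot}$) rather than a naive unary or raw binary marker, and one must verify that the length of this marker is $O(\log n)$ rather than $O(\log C(x))$ in general; the observation that $C(x) \leq n + O(1)$ is exactly what is needed to absorb it into $O(\log n)$. Everything else—the existence of $T$, the parsing, and the final appeal to the invariance theorem—is routine.
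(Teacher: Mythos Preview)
Your proposal is correct and follows essentially the same approach as the paper: concatenate $x^*$ and $q$ (the witness of $C(y\mid x)$), use the self-delimiting encoding $\overline{|x^*|}=\overline{C(x)}$ to separate them, and absorb the $2\log C(x)+O(1)$ overhead into $O(\log n)$ via $C(x)\leq n+O(1)$. The only cosmetic difference is that you explicitly build an auxiliary machine $T$ and invoke the invariance theorem, whereas the paper packages the same instructions into a constant-size preamble $r'$ of a program for $U$ directly; these are equivalent formulations.
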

\begin{proof}
Let $q$ be the witness of $C(y \given x)$, i.e., the program of minimal length such that $U(q,x)=y$. 
The aim is to construct a program for $\langle x, y \rangle$ from $x^*$ and $q$.
If delimiters (\#) were free, a program for $\langle x, y \rangle$ would be 
$$
r\#x^*\#q \,,
$$
where $r$ contains the following instructions:
\begin{enumerate}[itemsep=0.9pt, topsep=5pt]
\item Between the first and second delimiter, a program ($x^*$) is given to compute the first string ($x$), and this string should be stored on the auxiliary tape.
\item After the second delimiter, a program ($q$) is given to compute the second string ($y$) with the help of $x$ on the auxiliary tape.
\item Compute the pairing $\langle x, y\rangle$.
\end{enumerate}

Note that $r$ is of constant length (independent of $x$ and $y$, so independent of $n$). However, delimiters are not free: what we want is a single program $p$, made only of bits, for which $U(p)= \langle x, y\rangle$.
If $p$ is of the form
$$
p= r'x^*q \,,
$$
the preamble $r'$ needs to do what $r$ does, and moreover, it needs to contain some information to break $x^*q$ into $x^*$ and $q$. It can be the length of $x^*$, ($C(x)$), given in a self-delimiting way (e.g. $\overline{C(x)}$).
Thus
\beas
\len p &=& \len{r'} + \len{x^*} + \len q\\
&=& \len{r'} + C(x) + C(y \given x)\\
&\leq& C(x) + C(y \given x) + 2 \log C(x) + O(1)\\
&\leq& C(x) + C(y \given x) + O(\log n) \,.
\eeas

\end{proof}

\lecture{AIT-14}{https://youtu.be/UsjitZb7_hA}

\begin{proposition}[Chain rule for plain complexity, hard side]\label{crhard}
For all strings $x$ and~$y$,
\beas
C(x) + C(y \given x) &\leq& C(x, y) + O(\log (C(x,y)))\\
&\leq& C(x, y) + O(\log n)
\eeas
where again $n = \max \{\len x, \len y\}$.
\end{proposition}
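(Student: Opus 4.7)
The plan is to fix $k = C(x,y)$ and work with the finite set $T_k = \{(u,v) : C(u,v) \leq k\}$, which has size at most $2^{k+1}$, since every such pair is produced by some program of length $\leq k$. For each $u$, set $T_k(u) = \{v : (u,v) \in T_k\}$; in particular $y \in T_k(x)$. The idea is to extract a short description of $y$ given $x$ from an enumeration of $T_k(x)$, and then to use a double-counting argument to ensure that $|T_k(x)|$ is small whenever $C(x)$ is large.

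First I would bound $C(y \given x) \leq \log |T_k(x)| + O(\log n)$. Given $x$ on the auxiliary tape and $k$ on the program tape encoded self-delimitingly as $\bar k$ (costing $O(\log k) = O(\log n)$ bits, since $k \leq C(x,y) = O(n)$), a Turing machine dovetails all programs of length $\leq k$ and enumerates those $v$ for which some such program outputs $\langle x, v\rangle$. Appending a fixed-width binary index of $\lceil \log |T_k(x)|\rceil$ bits to $\bar k$ then selects $y$ from this enumeration; the universal machine knows where $\bar k$ ends and reads the remaining program bits as the index.

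The heart of the argument is a counting bound on $|T_k(x)|$. For each $m \geq 0$, define $U_{k,m} = \{u : |T_k(u)| \geq 2^{k-m}\}$; since $T_k$ itself has at most $2^{k+1}$ elements, $|U_{k,m}| \leq 2^{m+1}$. Crucially, $U_{k,m}$ is computably enumerable from $(k, m)$ alone: dovetail the enumeration of $T_k$ and, whenever a fresh $u$ crosses the threshold of $2^{k-m}$ listed pairs, emit it. Hence every $u \in U_{k,m}$ is fully described by self-delimiting encodings of $k$ and $m$ plus its index in this enumeration, giving $C(u) \leq m + O(\log n)$. Applied with $m = \lceil k - \log|T_k(x)| \rceil$, the smallest $m$ for which $x \in U_{k,m}$, this yields $C(x) \leq k - \log|T_k(x)| + O(\log n)$, i.e.\ $\log|T_k(x)| \leq C(x,y) - C(x) + O(\log n)$. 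Combining with the first bound finishes the proof. The main subtlety is this counting step: one must check that $U_{k,m}$ can be enumerated without knowing the (uncomputable) cardinality of $T_k$, which works precisely because each $u$ is emitted the moment it accumulates enough partners, not at the end of some finite exhaustion.
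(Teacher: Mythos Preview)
Your proof is correct and follows essentially the same route as the paper's: both enumerate the section $\{v : C(x,v)\le k\}$ to bound $C(y\given x)$, and then bound $C(x)$ by enumerating the set of $u$ whose section exceeds a threshold, using the pigeonhole bound $|T_k|\le 2^{k+1}$. The only cosmetic difference is the choice of auxiliary parameter---the paper records $l=\lceil\log|A_x|\rceil$ while you record $m=\lceil k-\log|T_k(x)|\rceil\approx k-l$---but the counting and the enumeration arguments are identical.
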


\begin{proof}
Let $m \equiv C(x,y)$.
Define
\bes
A = \{ \langle x',y' \rangle \st C(x',y') \leq m \} \qquad \text{and} \qquad 
A_x = \{ y' \st C(x,y') \leq m \} \,,
\ees
which contain $\langle x,y \rangle$ and $y$, respectively. 
A program for $y$ given $x$ is to computably enumerate $A_x$ and to give its enumeration number~$i^y$.
$A_x$ can be enumerated if $x$ and $m$ are known, by running all programs of length $m$ and if a program halts with an output of the form $\langle x, y'\rangle$, listing $y'$.

 Let $l \equiv \lceil \log |A_x| \rceil $ be an upper bound on the number of bits of~$i^y$. 
 Hence, the program considered requires $O(\log m)$ bits to compute $m$, and $l$ bits to give the enumeration number of $y$. 
 So 
\be \label{eq:dif1}
C(y \given x) \leq l + O(\log m) \,.
\ee

\noindent Define now
\bes
B = \{ x' \st \log |A_{x'}| > l-1 \} \,,
\ees
which contains $x$. If $m$ and $l$ are given (with an $O(\log m)$ advice, because $l$ is a number smaller than $m+1$, because $|A_x|$ is smaller than the number of programs of length $\leq m$), $B$ can be enumerated by enumerating $A$ (thanks to $m$), and when  for a given~$x'$ the subset $A_{x'}$ contains more than $2^{l-1}$ elements, $x'$ is added to $B$. A possible program for $x$ is thus given by the enumeration number of $x$ in $B$. 
 Note that
\bes
|A| = \sum_{x'} |A_{x'}| \geq \sum_{x'\in B} |A_{x'}| \geq \sum_{x'\in B} 2^{l-1} = |B| 2^{l-1}\,.
\ees
Since $|A| < 2^{m+1}$,
\bes
\log |B| \leq m - l +2  \,.
\ees
Therefore
\be \label{eq:dif2}
C(x) \leq m-l +  O(\log m)  \,.
\ee
\noindent Summing \eqref{eq:dif1} and \eqref{eq:dif2} together yields what is to be shown.
\end{proof}

From Propositions \ref{creasy} and \ref{crhard}, this version of the chain rule follows 
\begin{theorem}[Chain rule]
For all strings $x$ and $y$,
\beas
C(x, y) = C(x) + C(y \given x) + O(\log (C(x,y))) \,.\\
\eeas
\end{theorem}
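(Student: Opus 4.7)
The plan is to read the theorem as the conjunction of two inequalities,
\bes
C(x,y) \leq C(x) + C(y \given x) + O(\log C(x,y)) \qquad \text{and} \qquad C(x) + C(y \given x) \leq C(x,y) + O(\log C(x,y)),
\ees
and to obtain each from the corresponding proposition just proved. The second inequality is literally the conclusion of Proposition~\ref{crhard}, so nothing needs to be done there.

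For the first inequality, Proposition~\ref{creasy} already delivers $C(x,y) \leq C(x) + C(y \given x) + O(\log n)$ with $n = \max\{\len x, \len y\}$, but this error term is a priori larger than $O(\log C(x,y))$ (think of the string of $n$ zeros, whose complexity is only $O(\log n)$). To sharpen it, I would revisit the proof of Proposition~\ref{creasy}: the preamble $r'$ constructed there has length $2\log C(x) + O(1)$ (it merely needs to carry the self-delimited length of $x^*$), so the actual bound produced is
\bes
C(x,y) \leq C(x) + C(y \given x) + 2\log C(x) + O(1),
\ees
and the weakening to $O(\log n)$ only came from the crude step $C(x) \leq n + O(1)$. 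A much tighter estimate is available, namely $C(x) \leq C(x,y) + O(1)$: one may prepend to any program for $\langle x,y\rangle$ a fixed $O(1)$-bit preamble that runs the program, inverts the pairing $\langle\cdot,\cdot\rangle$, and emits the first component. Hence $\log C(x) \leq \log C(x,y) + O(1)$, and the easy side tightens to $C(x,y) \leq C(x) + C(y \given x) + O(\log C(x,y))$, as required.

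There is no real obstacle here; the proof is a bookkeeping combination of the two propositions, the only wrinkle being the observation that the $n$ appearing in Proposition~\ref{creasy} was a loose proxy for $C(x)$, which is in turn dominated by $C(x,y)$ up to a constant. Adding the two resulting inequalities yields the stated equality.
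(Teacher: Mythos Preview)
Your proposal is correct and follows the same two-proposition decomposition as the paper, which simply asserts that the chain rule ``follows'' from Propositions~\ref{creasy} and~\ref{crhard}. You are in fact more careful than the paper: you notice that Proposition~\ref{creasy} as stated only gives an $O(\log n)$ overhead, which is \emph{not} in general $O(\log C(x,y))$ (your $0^n$ example is apt), and you correctly resolve this by reverting to the intermediate bound $2\log C(x)+O(1)$ that already appears in the displayed chain at the end of the proof of Proposition~\ref{creasy}, combined with $C(x)\leq C(x,y)+O(1)$.
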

Again, the error term could be set to be $O(\log n)$, where $n = \max \{\len x, \len y\}$.

\subsection{Incompressibility}
It is easy to come up with highly compressible strings.
For instance, define
$$
f(k) = 2^{2^{2^{\dots^{2}}}} \qquad k\text{ times}
$$

 For each~$k$, the binary representation of~$f(k)$ has complexity at most~$C(k) + c \leq \log k + c$ for some constant~$c$ independent of~$k$. Yet it has a huge length. 

What about incompressible strings?
\begin{definition}\label{def:incompressible}
For each $c$, a string~$x$ is $c$\emph{-incompressible} (given $y$) if $C(x)\geq \len x -c$ (if $C(x \given y)\geq \len x -c$).
\end{definition}

\begin{proposition}\label{prop:incompressible}
There are at least $2^n - 2^{n-c} + 1$ strings of length $n$ that are $c$-incompressible.
\end{proposition}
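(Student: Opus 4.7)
The plan is to use a simple counting (pigeonhole) argument: I will count how many strings of length $n$ \emph{could} be $c$-compressible and subtract from $2^n$.

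First, observe that a string $x$ of length $n$ fails to be $c$-incompressible precisely when $C(x) < n - c$, which means that there exists a program $p$ with $|p| \leq n - c - 1$ such that $U(p) = x$. I would next count the total number of binary programs of length at most $n - c - 1$; this is
\[
\sum_{i=0}^{n-c-1} 2^i = 2^{n-c} - 1.
\]
Since $U$ is a (partial) function, each such program computes at most one string, so the set of strings computable by some program of length $< n - c$ has size at most $2^{n-c} - 1$. In particular, the number of strings of length $n$ that are \emph{not} $c$-incompressible is at most $2^{n-c} - 1$.

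Finally, subtracting from the total $2^n$ strings of length $n$ gives at least
\[
2^n - (2^{n-c} - 1) = 2^n - 2^{n-c} + 1
\]
strings that are $c$-incompressible, as desired.

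There is essentially no obstacle here; the only subtlety worth noting is that we are using the fact that $U$ is a function (deterministic), so the map from programs to outputs is well-defined and at most one string per program must be counted. The same argument works verbatim for the conditional version (fixing $y$ and counting programs for $U(\cdot, y)$).
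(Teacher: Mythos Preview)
Your proof is correct and essentially identical to the paper's: both count the programs of length $< n-c$ as $\sum_{i=0}^{n-c-1}2^i = 2^{n-c}-1$, note that each produces at most one string, and subtract from $2^n$. Your closing remark about the conditional version also mirrors the paper's follow-up remark.
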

\begin{proof}
The number of strings of length $n$ is $2^n$.
The number of programs of length strictly less than $n - c$ is
$$
\sum_{i=0}^{n-c-1}2^i = 2^{n-c}-1
$$
In the most conservative scenario, each of those short programs computes a different string of length $n$. Indeed, generically, there are some of those short programs that don't produce any output (they don't halt). Also, there are some of those short programs that compute strings of a length other than $n$ bits. And finally, there are some of those short programs that compute the same string.

Thus, there are at least $2^n 2^{n-c} + 1$ strings that are \emph{not} computed by such a short program, namely, that are $c$-incompressible.
\end{proof}
\begin{corollary}
There is at least $1$ string that is $0$-incompressible. Not so bad. But it goes fast: more than 3/4 of the strings are $2$-incompressible.
\end{corollary}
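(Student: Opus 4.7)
The plan is to obtain the corollary as two direct instantiations of Proposition~\ref{prop:incompressible}, corresponding to the two specific values $c = 0$ and $c = 2$. Since the proposition gives a general lower bound $2^n - 2^{n-c} + 1$ on the number of $c$-incompressible strings of length $n$, there is no new combinatorial work to do; the only task is to plug in these two values, simplify the resulting expressions, and interpret them as the statements in the corollary.

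First I would handle the claim that there exists at least one $0$-incompressible string. Setting $c = 0$ in the proposition gives a lower bound of $2^n - 2^{n-0} + 1 = 2^n - 2^n + 1 = 1$, so at least one string of length $n$ fails to be produced by any program of length strictly less than $n$. (I would briefly remark that, in fact, this count holds for every $n$, so there is at least one $0$-incompressible string of each length.)

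Next I would handle the $c = 2$ claim. Plugging $c = 2$ into the bound yields at least $2^n - 2^{n-2} + 1 = \tfrac{3}{4}\,2^n + 1$ strings of length $n$ that are $2$-incompressible. Dividing by $2^n$, the fraction of length-$n$ strings that are $2$-incompressible is at least $\tfrac{3}{4} + 2^{-n} > \tfrac{3}{4}$, which is exactly the quantitative statement in the corollary.

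There is no real obstacle here — the only thing worth being careful about is lining up the definition correctly: Definition~\ref{def:incompressible} requires $C(x) \geq \len{x} - c$, and the proof of Proposition~\ref{prop:incompressible} counts strings that are not outputs of any program of length strictly less than $n - c$, so the two phrasings match. Once this is checked, the corollary is just arithmetic on the general bound.
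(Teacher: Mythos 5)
Your proposal is correct and matches the paper's (implicit) argument exactly: the corollary is just Proposition~\ref{prop:incompressible} instantiated at $c=0$ and $c=2$, with the arithmetic $2^n - 2^{n-2} + 1 > \tfrac{3}{4}\,2^n$ giving the ``more than $3/4$'' claim. Your care in matching Definition~\ref{def:incompressible} (programs of length strictly less than $n-c$) with the counting in the proposition is exactly the right check, and nothing more is needed.
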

\begin{remark}
The definition of incompressibility can be made to accommodate auxiliary information: a string~$x$ is $c$\emph{-incompressible} \emph{given $y$} if $C(x \given y)\geq \len x -c$. And Proposition~\ref{prop:incompressible} carries through in the presence of auxiliary informaiton~$y$.
\end{remark}

For our purposes, we can identify incompressible strings with algorithmically random strings.

\subsection{The Problems of Plain complexity}\label{sec:problemswithplain}

\begin{enumerate}[itemsep=0.9pt, topsep=5pt]
\item The chain rule holds only up to $O(\log C(x,y))$.
\item Let $n$ be an incompressible number (string) of length $m$. Consider
$$
\underbrace{00\ldots0}_{n \text{ times}} \equiv 0^n \,.
$$ 
$$
C(0^n) = C(n) + O(1) = m + O(1) \,.
$$
As in the case of $0^n$, there can be algorithmic complexity arising from the length of the string, even when the string’s contents are highly regular.
%
Consider now $x\in \{0,1\}^n$, incompressible given $n$, i.e., $C(x \given n) \geq n$. 
The fact that not only $C(x) \geq n$ but also $C(x \given n) \geq n$ means that the algorithmic information present in the length of $x$, which is by default carried by $x$, is independent of the algorithmic information present within $x$.
Intuitively, we would like that
$$
C(x) = n + m + O(1),
$$ 
where $m$ bits of algorithmic complexity are embodied in the size of $x$ and
$n$ bits of algorithmic complexity come from the bits within $x$.
But we have shown that $C(x) \leq n + O(1)$.


%


\item Solomonoff had a brilliant idea for a measure over all strings. 
Instead of a monkey in front of a typewriter, consider a monkey in front of the computer, and define:
$$
x \mapsto P(x) = \sum_{p \st U(p) = x} 2^{-|p|} \,.
$$ 
This measure amounts to the probability that a program for which the bits are picked at random would yield the string $x$. 

Solomonoff's problem was that the measure does not converge. Indeed there exists a $T_i$ such that on all strings $y$, $T_i(y)=x$. Thus, each $\bar i y$ is a valid program for $x$, as $U(\bar i y)=x$ for all $y \in \{0,1\}^*$.
\beas
P(x) &\geq& \sum_{y \in \{0,1\}^*} 2^{-\len{\bar i y}} \\
&=& 2^{-\len{\bar i}} \sum_{y \in \{0,1\}^*} 2^{-\len y} \\
&=& 2^{-\len{\bar i}} \sum_{n = 0}^{\infty} \sum_{y \in \{0,1\}^n}^{\infty} 2^{-n} \\
&=& 2^{-\len{\bar i}} \sum_{n = 0}^{\infty}  2^{-n} 2^n \\
&=& \infty
\,.
\eeas

\clearpage
\lecture{AIT-15}{https://youtu.be/-K57aqoGNug}
\item Recalling coding theory,
\beas
SP \colon \{0,1\}^* &\to& \{0,1\}^*\\
x &\mapsto& x^*
\eeas  
is a natural and powerful code. Is it a prefix code? No. In fact, considering $y \neq x$, nothing prevents $y^*$ from being a prefix of $x^*$ because a priori, nothing prevents a program from being the prefix of another program. 

\end{enumerate}

Adopting prefix instead of plain complexity solves all of the above problems.

\newpage
\section{Prefix Complexity}

In prefix complexity, it is demanded that a program must be \emph{self-delimited}: its total length in bits must be determined by the program itself. 
For Turing machines, it means that the head scanning the program is not allowed to overshoot. 

This way, the alphabet on which we measure program size really is binary. In the plain complexity framework, the head can hit and recognize the first blank after the program, which effectively introduces a third symbol into the alphabet used to describe programs. 

Real programming languages are self-delimiting because they provide instructions for beginning and ending programs.
This allows a program to be well-formed by the concatenation or nesting of sub-programs.

\subsection{Self-delimiting Turing Machines}\label{sec:sdtm}

A \emph{self-delimiting Turing machine} has three tapes: 
\begin{enumerate}[itemsep=0.9pt, topsep=5pt]
\item The \emph{input} or \emph{program} tape is one-way (the head can only move to the right), read-only and initially contains the program.
\item The \emph{work} tape moves two-ways, it reads and writes and initially contains a string~$y$ as auxiliary information.
\item The \emph{output} tape is one-way, write-only and initially empty.
\end{enumerate}
If, after finitely many steps, $T$ halts with 
\begin{itemize}[itemsep=0.9pt, topsep=5pt]
\item[--] a string~$p$ that has been so far scanned on the input tape, namely, upon halting, the head of the input tape is on the last bit of $p$, but no further;
\item[--] possibly, a string $y$ that was initially given on the work tape;
\item[--] a string $x$ on the output tape;
\end{itemize}
then we say that the \emph{computation is successful}, and $T(p, y) = x$, (in particular, $T(p, y) \halts$). 
Otherwise, the \emph{computation is a failure}, and $T(p, y) \nohalt$.

Self-delimiting Turing machines can be effectively enumerated, $T_1$, $T_2$, \dots, and there exists $U$ s.t.
$$
U(\bar i q) = T_i(q) \qquad \forall i, q \,.
$$ 
More generally, with auxiliary information, 
$$
U(\bar i q, y) = T_i(q, y) \qquad \forall i, q , y \,.
$$ 
The first argument refers to the program tape content; the second to the work (auxiliary) tape.
\begin{remark}
For all self-delimiting Turing machines~$T$ and for all auxiliary information~$y$, the set of programs that lead to a halting computation $\{p \st T(p,y) \halts \}$ is a prefix-free set. 
Indeed, if $T(q,y) \halts$, it means that the last bit of $q$ has been read (and no more). Therefore, no program of the form $qs$, for some non-empty string $s$, can also lead to a successful computation\footnote{If $qs$ happens to be on the input tape, the computation would carry through, but $q$ would be recognized as being the program of the computation. Therefore, we adopt the convention that $T(qs,y)$ is undefined and also denote it as $\nohalt$.}.

In particular, the set $\{p \st U(p) \halts \}$ is a prefix-free set.
See Fig.~\ref{fig:tree2}.
\begin{figure}[h]
\centering
\begin{tikzpicture}
\node[circle, draw, fill=black, inner sep=0.5pt, name=root] at (0,0) {};

\node[circle, draw, fill=black, inner sep=0.5pt, name=n0] at (1,1) {};
\node[circle, draw, fill=black, inner sep=0.5pt, name=n1] at (1,-1) {};
\draw (root) -- node[midway, above] {\scriptsize 0} (n0);
\draw (root) -- node[midway, above] {\scriptsize 1} (n1);

\node[circle, draw, fill=black, inner sep=0.5pt, name=n00] at (2,1.5) {};
\node[circle, draw, fill=black, inner sep=0.5pt, name=n01] at (2,0.5) {};
\node[right=-1pt of n01] {$\dots$};
\draw (n0) -- node[midway, above] {\scriptsize 0} (n00);
\draw (n0) -- node[midway, above] {\scriptsize 1} (n01);

\node[circle, draw, fill=black, inner sep=1.5pt, name=n10] at (2,-0.5) {};
\node[right=6pt of n10] {$U(10) = \varepsilon$};
\node[circle, draw, fill=black, inner sep=0.5pt, name=n11] at (2,-1.5) {};
\draw (n1) -- node[midway, above] {\scriptsize 0} (n10);
\draw (n1) -- node[midway, above] {\scriptsize 1} (n11);

\node[circle, draw, fill=black, inner sep=0.5pt, name=n000] at (3,1.75) {};
\node[right=-1pt of n000] {$\dots$};
\node[circle, draw, fill=black, inner sep=1.5pt, name=n001] at (3,1.25) {};
\node[right=6pt of n001] {$U(001) = 11$};
\draw (n00) -- node[midway, yshift=3pt] {\tiny 0} (n000);
\draw (n00) -- node[midway, yshift=3pt] {\tiny 1} (n001);

\node[name=n110] at (3,-1.25) {};
\node[right=-10pt of n110] {$\circlearrowleft$};
\node[right=6pt of n110] {$U(110) \nohalt 
$};
\node[circle, draw, fill=black, inner sep=0.5pt, name=n111] at (3,-1.75) {};
\node[right=-1pt of n111] {$\dots$};
\draw (n11) -- node[midway, yshift=3pt] {\tiny 0} (n110);
\draw (n11) -- node[midway, yshift=3pt] {\tiny 1} (n111);
\end{tikzpicture}
\caption{A possible prefix tree representing the set $\{p \st U(p) \halts \}$. On input~$001$, the self-delimiting universal Turing machine~$U$ halts and outputs~$11$, and on input~$10$, it outputs~$\varepsilon$. Thus, all programs of the form $001s$ or $11s$ for non-empty $s$ do not lead to a successful computation because their last bit will never be scanned. We denote~$U(001s) \nohalt$ and~$U(10s) \nohalt$, and remove the descendant branches from the tree.
Moreover, the program~$110$ loops. Therefore, $U(110)  \nohalt$, but also $U(001s)  \nohalt$ for all nonempty $s$.
 }
\label{fig:tree2}
\end{figure}

\end{remark}
%
\begin{definition}
The \emph{conditional prefix algorithmic complexity} $K(x \given y)$ of a string $x \in \{0,1\}^*$ is defined as
$$
K(x\given y) = \min_{p} \{ \len p \st  U(p, y) = x\} \,.
$$
\end{definition}
As in plain complexity, $U$ is additively optimal, and the invariance theorem holds: $\forall$ self-delimiting Turing machines $T$ and $\forall$ strings $x, y$
$$
K(x \given y) \deq K_U(x\given y) \leq K_T(x\given y) + O(1) \,.
$$
\begin{proof}
If $K_T(x \given y) \leq \infty$, it means that $T(q, y)=x$ for some (self-delimiting) program~$q$, where~$\len q = K_T(x)$.
But~$T = T_i$ for some~$i$.
Therefore,~$U$ can also compute~$x$, via the program~$p = \bar i q$. This program has a length
$$
\len{ 1^i0iq } = \len q + \underbrace{2 i + 1}_{b_T} = K_T(x) + b_T \,.
$$
\end{proof}

The \emph{unconditional prefix complexity} of $x$ is $K(x) \deq K(x \given \epsilon)$, $x^*$ is the shortest program, and the \emph{prefix complexity of a pair} is $K(x,y) \deq K(\langle x, y \rangle)$.

\subsection{Solving the Problems of Plain Complexity}
Let us address the problems of \S \ref{sec:problemswithplain} in reverse order. We now consider the notions in the context of prefix complexity.
\begin{enumerate}[itemsep=0.9pt, topsep=5pt]
\item[4.] 
The code
\beas
SP \colon \{0,1\}^* &\to& \{0,1\}^*\\
x &\mapsto& x^*
\eeas
is a prefix code.
\end{enumerate}

\lecture{AIT-16}{https://youtu.be/HPtAqpTo3Xs}\\
\nit{Comment on the lecture: Unfortunately,  the audio was not recorded.}

\begin{enumerate}[itemsep=0.9pt, topsep=5pt]
\item[3.] Solomonoff's problem is solved:
$$
x \mapsto P(x) = \sum_{p \st U(p) = x} 2^{-|p|} \,
$$ 
now converges for all $x$. Indeed, the set $\{p \st U(p) = x\}$ 
is prefix-free because it is a subset of the prefix-free domain of halting programs
 $\{p \st U(p) \halts\}$. Therefore, by Kraft's inequality not only does $P(x) \leq 1$, but also $\sum_x P(x) = \sum_{\{p \st U(p) \halts\}} 2^{-|p|} \leq 1$.

\item[2.] Let $x\in \{0,1\}^n$. I mentioned that intuitively, if both $x$ and $n$ are algorithmically random, we might want
$$
K(x) = n + m + O(1) \,,
$$
where $m$ measures to the complexity of $n$ itself, while $n$ measures the complexity of the bits ``within'' $x$.
However, with plain complexity, the \texttt{print ``$x$''} program entailed $C(x) \leq n + O(1)$. Yet, as we had seen it, the print program was not self-delimiting.
\end{enumerate}
\begin{proposition}\label{prop:uppercrefix}
For all $x\in \{0,1\}^n$, 
$$
K(x) \leq n + K(n) + O(1) \,.
$$
\end{proposition}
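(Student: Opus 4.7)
The plan is to exhibit a short self-delimiting program for $x$ on some fixed self-delimiting Turing machine $T$, and then invoke the invariance theorem to transfer the bound to $U$.

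First I would describe $T$ informally: on the program tape, $T$ expects the concatenation $n^* x$, where $n^*$ is the shortest self-delimiting program for the number $n$ (the length of $x$), followed by the literal bits of $x$. The machine $T$ simulates $U$ on the initial segment of its program tape; since $n^*$ is, by construction, a halting program for $U$, and the domain of halting programs of a self-delimiting machine is prefix-free, the simulation of $U$ will read exactly the bits of $n^*$ and no more. At that moment $T$ has $n$ written on its work tape. It then reads exactly the next $n$ bits from the program tape into the output tape and halts. After halting, the head on the program tape is positioned on the last bit of $x$ and has not overshot, so the entire computation is self-delimiting in the sense of \S\ref{sec:sdtm}.

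The length of the program $n^* x$ is exactly $K(n) + n$, so $K_T(x) \leq n + K(n)$. By the invariance theorem for self-delimiting Turing machines, $K(x) = K_U(x) \leq K_T(x) + O(1) \leq n + K(n) + O(1)$, which is what we wanted.

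The only real subtlety, and the step I would check most carefully, is that $T$ genuinely behaves as a self-delimiting machine: it must not attempt to read past the last bit of $x$ on the program tape. This is guaranteed because after simulating $U$ on the prefix $n^*$, the value $n$ is known to $T$, so $T$ can count exactly $n$ further reads from the program tape before halting. The prefix-freeness of $U$'s halting domain ensures that the boundary between $n^*$ and $x$ is detected automatically, without any delimiter symbol — which is precisely why a comparable bound fails for plain complexity (cf.~problem~2 in \S\ref{sec:problemswithplain}).
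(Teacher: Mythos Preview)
Your proof is correct and essentially identical to the paper's: both exhibit the self-delimiting program $n^*x$ and use that $n^*$ can be parsed without a delimiter, then read exactly $n$ further bits. The only cosmetic difference is that the paper writes the program directly for $U$ as $r\,n^*x$ with a constant preamble $r$, whereas you package the same idea into an auxiliary machine $T$ and absorb $r$ into the invariance constant.
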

\begin{proof}
Consider the following self-delimiting program for $U$:
$$
rn^*x \,,
$$
where $r$ is a constant-sized (and self-delimited) program which gives the following instructions. 
\begin{enumerate}[itemsep=0.9pt, topsep=5pt]
\item Execute $n^*$ to find $n$. (Note that there is no need for delimiting information to tell $n^*$ and $x$ apart, because $n^*$ is self-delimiting.)
\item Output the next $n$ bits on the tape ($x$).
\end{enumerate}
\end{proof}
\begin{corollary}
Let $x \in\{0,1\}^*$. Via iterative appeals of Prop.~\ref{prop:uppercrefix}, we find that
\beas
K(x) &\leq& \len x + K(\len x) + O(1)\\
&\leq& \len x + \len{\len x} + K(\len{\len x}) + O(1)\\
&\leq& \len x + \len{\len x} + \len{\len{\len x}} + K(\len{\len{\len x}}) + O(1)\\
&\leq& \log x + \log{\log x} + \log{\log \log x} + O(\log \log \log \log x) \,,
\eeas
where $||\cdot||$ denotes the length of the length.
\end{corollary}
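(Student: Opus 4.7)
The corollary is a direct three-fold iteration of Proposition~\ref{prop:uppercrefix}, closed off by a trivial logarithmic bound on $K$ of a natural number.

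First I would introduce the abbreviations $n_1 = |x|$, $n_2 = |n_1|$, and $n_3 = |n_2|$, and apply Proposition~\ref{prop:uppercrefix} three times in succession: once to $x$, once to $n_1$ viewed as a string, and once to $n_2$ viewed as a string. Crucially, the additive constant in the proposition is genuinely universal --- it comes from a fixed preamble that instructs $U$ to run $n^*$ and then copy the next $n$ bits --- so it does not depend on the argument. The three applications therefore telescope into
\begin{equation*}
K(x) \;\leq\; n_1 + n_2 + n_3 + K(n_3) + O(1).
\end{equation*}

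Second, I would close the iteration with a uniform bound $K(m) = O(\log m)$ valid for every natural number $m$. The self-delimiting encoding $\bar{m}$ has length $2|m|+1$, and a constant-size program reading $\bar{m}$ off the input tape and printing its numerical content is a legal self-delimiting program for $m$. Hence $K(m) \leq 2|m| + O(1)$, and specialising to $m = n_3$ gives $K(n_3) = O(|n_3|)$. Third, I would convert every $|\cdot|$ into a $\log$ via Proposition~\ref{prop:length} (which states $|a| = \lfloor \log a \rfloor$), obtaining $n_1 \leq \log x$, $n_2 \leq \log\log x$, $n_3 \leq \log\log\log x$, and $|n_3| \leq \log\log\log\log x$. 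Substitution then yields the claimed inequality.

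No step is genuinely hard; the only point requiring attention is deciding where to terminate the iteration. Stopping after two applications would leave the residual term $K(n_2)$, which is of order $\log\log\log x$ and would swamp the claimed $O(\log\log\log\log x)$ error; a fourth application would be wasteful. Three iterations is precisely what is needed so that the trivial bound $K(n_3) = O(\log n_3)$ can be absorbed into the claimed error term.
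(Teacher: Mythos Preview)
Your proposal is correct and follows exactly the paper's approach: the corollary in the paper is itself the derivation, consisting of three successive applications of Proposition~\ref{prop:uppercrefix} followed by the trivial bound $K(m)=O(\log m)$ and the length-to-log conversion via Proposition~\ref{prop:length}. Your write-up simply makes explicit the two closing steps (the $K(n_3)=O(|n_3|)$ bound and the $|\cdot|\leq\log$ substitution) that the paper leaves implicit in its final line.
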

This gives an upper bound for~$K(x)$ that is nearly tight. Indeed, as is shown in Problem Sheet~\#3 (\S\ref{sec:ps3}), for infinitely many values of~$x$, $K(x)$ exceeds~$\log x + \log{\log x} + \log{\log \log x}$.
See Fig.~\ref{fig:kgraph}.

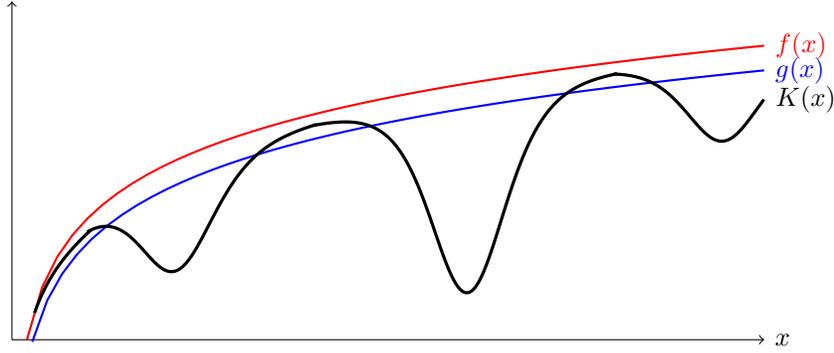
\begin{figure}
\centering
\begin{tikzpicture}
    \draw[->] (0,0) -- (10,0) node[right] {\footnotesize $x$};
    \draw[->] (0,0) -- (0,4.5) node[above] {\footnotesize $$};

    \draw[red, thick, domain=0.2:10, samples=50] 
        plot (\x, {ln(5*\x)}) 
        node[right] {\footnotesize $f(x)$};

    \draw[blue, thick, domain=0.27:10, samples=50] 
        plot (\x, {ln(3.6*\x)}) 
        node[right] {\footnotesize $g(x)$};

     \draw[black, very thick, domain=0.3:1.05, samples=350] 
        plot (\x, {ln(5*\x)*(1- 0.2*(1/((1 + exp(-2*(\x - 0.5)))*(1 + exp(2*(\x - 0.7)))))-0.07});
    \draw[black, very thick, domain=1:4.05, samples=350] 
        plot (\x, {ln(5*\x)*(1- 2*(1/((1 + exp(-3*(\x - 2.1)))*(1 + exp(3*(\x - 2.2)))))-0.04});
     \draw[black, very thick, domain=4:8.05, samples=350] 
        plot (\x, {ln(5*\x)*(1- 2.7*(1/((1 + exp(-3*(\x - 6)))*(1 + exp(3*(\x - 6.1))))))-0.126});
        \draw[black, very thick, domain=8:10, samples=350] 
        plot (\x, {ln(5*\x)*(1- 1*(1/((1 + exp(-3*(\x - 9.4)))*(1 + exp(3*(\x - 9.5))))))-0.1})
        node[right] {\footnotesize $K(x)$};

\end{tikzpicture}
\caption{A made-up graph for $K(x)$, with its upper bound given by $f(x)=\log x + \log{\log x} + \log{\log \log x} + O(\log \log \log \log x)$, which is very close to being tight: infinitely often, $K(x)$ exceeds $g(x) = \log x + \log{\log x} + \log{\log \log x} $.}
\label{fig:kgraph}
\end{figure}

\begin{enumerate}[itemsep=0.9pt, topsep=5pt]
\item[1.] The chain rule had a $O(\log C(x,y))$ error term. For prefix complexity, the error is~$O(1)$.
\end{enumerate}

\begin{theorem}[Chain rule for prefix complexity]\label{thm:cro1}
For all strings $x$ and $y$,
$$K(x,y) = K(x) + K(y \given x^*) + O(1) \,.$$
\end{theorem}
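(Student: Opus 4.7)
I would split the equality into two inequalities and treat them separately, anticipating that the easy direction mirrors the plain-complexity argument while the hard direction is where the new machinery of self-delimiting programs pays off.

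\textbf{Easy direction} $K(x,y) \leq K(x) + K(y \given x^*) + O(1)$. Let $q$ witness $K(y \given x^*)$, so $U(q, x^*) = y$ and $|q| = K(y \given x^*)$. Because $U$ is a self-delimiting universal machine, both $x^*$ and $q$ are prefix-free programs and their concatenation $x^* q$ can be read unambiguously from left to right without any extra delimiter. I would exhibit a constant-size preamble $r$ that: (i) simulates $U$ on the bits it reads from the program tape until the first self-delimited program halts, recovering $x$ from $x^*$ while preserving a copy of $x^*$ on the work tape; (ii) simulates $U$ on the remaining self-delimited program $q$ with $x^*$ on the work tape, recovering $y$; (iii) outputs $\langle x, y \rangle$. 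This gives a self-delimiting program of length $|r| + |x^*| + |q| = K(x) + K(y \given x^*) + O(1)$, and the invariance theorem finishes the argument. Crucially, no $O(\log n)$ overhead appears, since self-delimitation replaces the length prefix $\overline{C(x)}$ that was needed in the plain-complexity chain rule.

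\textbf{Hard direction} $K(x) + K(y \given x^*) \leq K(x,y) + O(1)$. The inequality $K(x) \leq K(x,y) + O(1)$ is immediate (project out the first component), so the real content is
\[
K(y \given x^*) \leq K(x,y) - K(x) + O(1).
\]
My plan is to exhibit, via a Kraft--Chaitin style construction, a self-delimiting machine that decodes $y$ from a program of the claimed length when $x^*$ is on the work tape. The starting point is the observation that $K(\cdot,\cdot)$ is upper semi-computable (by dovetailing all halting programs of $U$), so $\nu(x,y) \deq 2^{-K(x,y)}$ is lower semi-computable. Since the halting programs of $U$ form a prefix-free set, Kraft's inequality gives $\sum_{x,y} \nu(x,y) \leq 1$. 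Hence the marginal $\nu(x) \deq \sum_y \nu(x,y)$ is itself a lower semi-computable semi-measure on $x$ with $\sum_x \nu(x) \leq 1$, and an effective version of Kraft's theorem yields a prefix code that describes $x$ in $-\log \nu(x) + O(1)$ bits; by the invariance theorem this gives $K(x) \leq -\log \nu(x) + O(1)$, i.e.\ $\sum_y 2^{-K(x,y)} \leq c \cdot 2^{-K(x)}$ for a universal constant $c$.

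With this key estimate, the numbers $\ell_y \deq \lceil K(x,y) - K(x) \rceil + O(1)$ satisfy $\sum_y 2^{-\ell_y} \leq 1$, and the pairs $(y, \ell_y)$ can be enumerated from $x^*$ alone (since $x^*$ supplies both $x$ and the exact value $K(x) = |x^*|$, while each $K(x,y)$ can be approximated from above). A second application of the effective Kraft--Chaitin construction then assigns to each $y$ a self-delimiting codeword of length $\ell_y$, decodable by a machine that uses $x^*$ as auxiliary information. Invoking the invariance theorem one last time gives the desired bound.

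\textbf{Main obstacle.} The delicate step is the marginal estimate $\sum_y 2^{-K(x,y)} = O(2^{-K(x)})$, which is really a ``coding theorem'' in disguise: it requires turning a lower semi-computable semi-measure into actual descriptions via an effective (c.e.) refinement of Kraft's theorem. This is precisely the place where self-delimitation earns the improvement from $O(\log n)$ (plain) to $O(1)$ (prefix), and it is also why the conditioning must be on $x^*$ rather than on $x$: the decoder needs to know $K(x)$ exactly in order to shift complexities by that amount without paying a logarithmic premium.
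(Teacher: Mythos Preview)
Your proposal is correct and follows essentially the same route as the paper: the easy direction is the concatenation argument you describe, and the hard direction goes through the marginal estimate $\sum_y 2^{-K(x,y)} = O(2^{-K(x)})$ followed by a conditional coding/Kraft--Chaitin argument relative to~$x^*$. The only cosmetic difference is packaging---the paper first proves the coding theorem and the universality of $2^{-K(\cdot)}$ among lower semi-computable semi-measures and then invokes these as black boxes for both steps, whereas you name the effective Kraft--Chaitin construction directly; but that construction is precisely the content of the coding theorem's proof, so the two arguments coincide.
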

The proof of the chain rule follows upon proving the easy side, Prop.~\ref{prop:creasy}, and the hard side, Prop.~\ref{prop:crhard} in \S\ref{sec:lss}. 

\begin{proposition}[Chain rule for prefix complexity, easy side] \label{prop:creasy}
For all $x$ and $y$,
$$
K(x,y) \leq K(x) + K(y \given x^*) + O(1) \,.
$$
\end{proposition}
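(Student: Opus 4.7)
The plan is to exhibit a self-delimiting Turing machine $T$ that, on input the concatenation $x^* q$ (with $q$ a shortest self-delimiting program for $y$ given $x^*$), produces $\langle x, y \rangle$. Invariance then finishes the proof. The crucial feature---absent in the plain-complexity setting of Prop.~\ref{creasy}---is that shortest programs are themselves self-delimiting, so no auxiliary $O(\log n)$ delimiter is needed to tell the two halves of the input apart.

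In detail, I would first let $x^*$ be a witness of $K(x)$, so $U(x^*)=x$ and $|x^*|=K(x)$, and let $q$ be a witness of $K(y \given x^*)$, so that $U(q, x^*) = y$ with $|q| = K(y \given x^*)$. I then define a self-delimiting Turing machine $T$ whose behaviour on an input string $p$ (with empty auxiliary tape) is the following: $T$ simulates $U$ on its program tape. Since $\{p \st U(p)\halts\}$ is prefix-free (see \S\ref{sec:sdtm}), this simulation, if it halts, reads exactly some prefix $p_1$ of $p$ and produces an output $x$. At that moment $T$ copies $p_1$ onto its work tape as auxiliary information, and then resumes reading the remainder of $p$ on the program tape, again as a simulation of $U$ but this time with the auxiliary tape primed. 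If this second simulation halts after consuming the rest of $p$, say $p_2$, and outputs $y$, then $T$ forms $\langle x, y \rangle$ on its output tape and halts. The one-way, read-only discipline of the program tape, combined with the fact that each of the two sub-simulations of $U$ is self-delimiting, guarantees that $T$ itself is a self-delimiting Turing machine.

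By construction $T(x^* q) = \langle x, y \rangle$, so
\[
K_T(x,y) \leq |x^*| + |q| = K(x) + K(y \given x^*).
\]
Applying the invariance theorem to $T$ yields the claimed bound with an additive $O(1)$ depending only on the index of $T$ in the effective enumeration.

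The main obstacle is conceptual rather than technical: one must be sure that the parsing step---deciding where $x^*$ ends and $q$ begins on a single read-only, one-way program tape---is realisable without any extra delimiter bits. This is exactly where the self-delimiting discipline pays off: the head of $U$, simulated inside $T$, stops precisely after the last bit of $x^*$, so $T$ can switch modes deterministically without consulting any length field. Compared with the plain-complexity proof, the $\overline{C(x)}$ prefix is thus eliminated, and the error term shrinks from $O(\log n)$ to $O(1)$.
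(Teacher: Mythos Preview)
Your proof is correct and is exactly the intended argument (the paper defers the proof to Problem Sheet~\#3, but the template is the plain-complexity proof of Prop.~\ref{creasy} with the $\overline{C(x)}$ delimiter removed, which is precisely what you do). One small technical wrinkle: since the program tape is one-way and read-only (\S\ref{sec:sdtm}), $T$ cannot ``at that moment'' copy $p_1$ to the work tape after the first simulation halts---it must copy each bit of the program tape to the work tape \emph{as it reads it} during the first simulation of $U$; similarly, the intermediate outputs $x$ and $y$ should be written to the work tape rather than the write-only output tape, with $\langle x,y\rangle$ assembled on the output tape only at the very end. These are cosmetic fixes and do not affect the argument.
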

\begin{proof}
\pscom{In Problem Sheet \#3. See \ref{sec:ps3}.}
\end{proof}

\subsection{Information Equivalence between $x^*$ and $(x, K(x))$}
\lecture{AIT-17}{https://youtu.be/T3jtJDtrUK8}

\begin{proposition}\label{prop:infoeq}
There is the same algorithmic information in $x^*$ as there is in $(x, K(x))$, i.e.
$$K(\langle x, K(x)\rangle \given x^*) =O(1) \qquad \text{and} \qquad  K(x^* \given\langle x, K(x)\rangle  ) = O(1) \,.$$
\end{proposition}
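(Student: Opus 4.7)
The plan is to establish both inequalities by exhibiting fixed constant-size self-delimiting programs for $U$ that do the required conversions, relying on the fact that the work tape is the natural place to store and manipulate auxiliary information.

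\textbf{First direction: $K(\langle x, K(x)\rangle \given x^*) = O(1)$.} I would describe a single constant-size self-delimiting program $r_1$ on the input tape that, with $x^*$ initially on the work tape, performs the following: (i) simulate $U$ on the string $x^*$ read from the work tape, recovering $x$; (ii) scan the work tape to find the first blank, thereby recovering the length $|x^*| = K(x)$; (iii) compute and output the pair $\langle x, K(x)\rangle$. Because $x^*$ is self-delimiting as a program for $U$, step (i) halts, and because $x^*$ is flanked by blanks on the work tape, step (ii) terminates as well. Since $r_1$ has length independent of $x$, the desired bound follows from the invariance theorem adapted to self-delimiting machines.

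\textbf{Second direction: $K(x^* \given \langle x, K(x)\rangle) = O(1)$.} The key observation is that having $K(x)$ as an explicit advice tells us precisely how long the shortest program is, which converts the problem of finding $x^*$ from an undecidable search (in general) into a guaranteed-terminating enumeration. I would describe a constant-size self-delimiting program $r_2$ that, with $\langle x, K(x)\rangle$ on the work tape, does the following: (i) decode $x$ and $k \deq K(x)$ from the pair; (ii) dovetail the computations $U(p)$ over all programs $p$ of length exactly $k$, ordered by number of simulation steps; (iii) output the first such $p$ that halts with output $x$. Since $K(x) = k$, at least one $p$ with $|p| = k$ and $U(p) = x$ exists, so the search terminates; and by the convention that $x^*$ is the minimal-length program for $x$ with the least running time, the $p$ found is exactly $x^*$. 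Again, $r_2$ is of size independent of $x$.

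The main subtlety is in the second direction: one must ensure that the dovetailing discipline truly picks out the same $x^*$ as the one fixed in the definition. This is handled by fixing once and for all the dovetailing order (e.g., for $t = 1, 2, \ldots$, simulate every program of length $k$ for $t$ steps), which yields exactly the ``least running time'' tie-breaking rule used to define $x^*$. The first direction is essentially routine once one notes that $x^*$, sitting on the work tape, is its own length-delimited object.
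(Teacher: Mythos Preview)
Your proof is correct and follows essentially the same approach as the paper's: both directions are handled by exhibiting constant-size programs that (i) measure $|x^*|$ and run $x^*$ to produce $\langle x, K(x)\rangle$, and (ii) dovetail over all programs of length $K(x)$ until one halts with output $x$. You are slightly more careful than the paper in making explicit that the dovetailing order (increasing number of simulation steps) recovers precisely the least-running-time program of length $K(x)$, which is how $x^*$ is defined when ties occur.
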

\noindent Here is a convenient notation for the above relations:
\begin{center} 
\begin{picture}(120,40)(-45,0)
\put(0,0){\framebox(30,30){$O(1)$}} 
\put(-23,15){\vector(1,0){20}}
\put(-43,5){\makebox(20,20){$x^*$}} 
\put(33,23){\vector(1,0){20}}
\put(33,7){\vector(1,0){20}}
\put(53,13){\makebox(20,20){$x$}} 
\put(58,-3){\makebox(20,20){$K(x)$}} 
\end{picture}\qquad \text{and} \qquad
\begin{picture}(120,40)(-45,0)
\put(0,0){\framebox(30,30){$O(1)$}} 
\put(-23,23){\vector(1,0){20}}
\put(-23,7){\vector(1,0){20}}
\put(-43,13){\makebox(20,20){$x$}} 
\put(-50,-3){\makebox(20,20){$K(x)$}} 
\put(33,15){\vector(1,0){20}}
\put(55,5){\makebox(20,20){$x^*$}} 
\end{picture}~~.
\end{center}

\begin{proof}
First, we want to show that there exists a constant-size program, which takes $x^*$ as auxiliary information and outputs $\langle x, K(x) \rangle$. This can be done with the following instructions:
\begin{enumerate}[itemsep=0.9pt, topsep=5pt]
\item Measure the length of $x^*$ (this gives $K(x)$);
\item Execute $x^*$ to get $x$;
\item Compute $\langle x, K(x) \rangle$.
\end{enumerate}

Second, to compute $x^*$ from $(x, K(x))$, for $i=1,2, \ldots$, run all programs of length $K(x)$ for $i$ steps. Eventually some $p$ halts with $U(p) = x$, this $p$ is $x^*$.
\end{proof}

Proposition~\ref{prop:infoeq} entails that the chain rule can be expressed as
$$K(x,y) = K(x) + K(y \given x, K(x)) + O(1) \,.$$

Prefix complexity may be slightly larger than plain complexity, due to the cost of self-delimiting structure, but not by more than~$O(\log n)$ for strings of length~$n$.
\begin{proposition}
For all $x \in \{0,1\}^n$, $K(x) = C(x) + O(\log n)$.
\end{proposition}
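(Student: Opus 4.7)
The plan is to establish the equality as two one-sided bounds. First I would prove that $C(x) \le K(x) + O(1)$: any self-delimiting program $p$ for the prefix universal machine $U$ is, after a constant-size preamble, also a valid program for the plain universal machine. Concretely, there exists a plain Turing machine $T_i$ which, on input $p$ (placed on its tape followed by blanks), simulates the self-delimiting universal machine step by step; because $p$ is self-delimited, the simulation reads exactly $p$ and no further bits, then writes the output $x$ on the tape. The plain universal machine can run $T_i$ via the program $\bar i p$, of length $|p| + O(1)$. Taking $p = x^*$ (the prefix-minimal program) gives $C(x) \le K(x) + O(1)$, which is stronger than what the proposition demands.

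For the other direction, the plan is to self-delimit the plain-optimal program. Let $p$ be the witness of $C(x)$, so $|p| = C(x)$, and note that by the upper bound $C(x) \le |x| + O(1) = n + O(1)$, we have $|p| = O(n)$. Consider the prefix program
\[
q \;=\; r\,\overline{|p|}\,p,
\]
where $r$ is a constant-size self-delimiting routine that (i) reads the self-delimiting encoding $\overline{|p|}$ to recover the integer $|p|$, (ii) reads the next $|p|$ bits from the input tape (which are exactly $p$), and (iii) feeds $p$ into a simulation of the plain universal machine and copies its output. Since $\overline{|p|}$ is itself self-delimiting and the remainder has known length, the whole program $q$ is self-delimiting; thus it is admissible on the prefix universal machine. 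Its length is
\[
|q| \;=\; |p| + 2\lfloor \log |p|\rfloor + 1 + O(1) \;=\; C(x) + O(\log n).
\]
Hence $K(x) \le C(x) + O(\log n)$, and combining the two bounds yields $K(x) = C(x) + O(\log n)$.

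The step I expect to require the most care is the first one: although the plain and prefix universal machines are defined on different computational models (plain TMs use blanks as implicit delimiters on a single tape, while self-delimiting TMs have a one-way read-only input tape with no blank marker), the simulation works precisely because the self-delimiting program's halting convention ensures the simulator halts after reading exactly $|p|$ bits, never trying to interpret the trailing blanks. Once this is granted, the rest is a calculation with the length of the self-delimiting encoding $\bar{\cdot}$, which is already computed in the notes.
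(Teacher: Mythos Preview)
Your proof is correct and is the standard argument: one direction simulates the prefix machine on the plain machine at $O(1)$ cost, and the other direction self-delimits the plain-minimal program by prefixing $\overline{|p|}$ at cost $O(\log C(x)) = O(\log n)$. The paper does not give its own proof here---it defers the proposition to Problem Sheet~\#3---so there is nothing further to compare; your write-up is exactly what the exercise is asking for, and your closing remark about why the cross-model simulation is sound is well placed.
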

\begin{proof}
\pscom{See Problem Sheet \#3, \S\ref{sec:ps3}.}
\end{proof}

%
%
%
%

\subsection{The Uncomputability of $K(x)$}

\begin{theorem}
The function $x \mapsto K(x)$ is uncomputable.
\end{theorem}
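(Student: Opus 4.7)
The plan is to formalize Berry's paradox from~\S\ref{sec:bp}. Assuming for contradiction that $K$ is computable, I would construct a short self-delimiting program that outputs a string whose prefix complexity is provably larger than the program's own length.

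First, I would establish that for every $n \in \mathbb N$ there exists a string $x$ with $K(x) \geq n$. Since $\{p \st U(p) \halts\}$ is prefix-free (remark in~\S\ref{sec:sdtm}), Kraft's inequality bounds the number of halting programs of length less than $n$ by $2^n - 1$, so fewer than $2^n$ strings can have $K$-value below $n$. Since there are infinitely many strings, high-complexity ones exist in abundance at every threshold.

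Next, suppose $K$ is computable. I would define a Turing machine $A$ that, on input $n$, computes $K(x)$ for $x = \varepsilon, 0, 1, 00, 01, \ldots$ in lexicographic order and halts with output equal to the first $x$ satisfying $K(x) \geq n$. The previous paragraph guarantees $A$ halts on every input. Letting $a$ be the index of $A$ in the enumeration of self-delimiting Turing machines, the string $\bar a \, n^*$ is a self-delimiting program for $U$ that computes $A(n)$, of length $K(n) + O(1)$. Hence
$$K(A(n)) \leq K(n) + O(1) = O(\log n),$$
using the bound $K(n) \leq \log n + O(\log \log n)$ from the corollary following Prop.~\ref{prop:uppercrefix}. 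But by construction $K(A(n)) \geq n$, so $n \leq O(\log n)$, which is absurd for all sufficiently large~$n$.

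The main obstacle is conceptual rather than technical: arranging the self-reference so that the quantitative gap opens up. The fixed description cost of $A$ plus the $O(\log n)$ cost of describing $n$ self-delimitedly must sit strictly below the lower bound $n$ on the output's prefix complexity—and this is precisely what the logarithmic encoding of $n$ via $n^*$ provides. Once this bookkeeping is in place, the contradiction is immediate, and the argument concludes that no Turing machine can compute the map $x \mapsto K(x)$.
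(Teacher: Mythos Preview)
Your proof is correct and follows essentially the same Berry-paradox approach as the paper: assume $K$ is computable, search for the first string whose complexity exceeds a threshold, and note that the searching program is itself shorter than that threshold. The paper hard-codes a single constant $B$ (self-delimited as $\bar B$, costing $2\log B + 1$ bits) rather than parametrizing by $n$ via $n^*$, and leaves implicit the existence of high-complexity strings that you spell out---but these are cosmetic differences, and your counting step needs only the trivial bound on the number of strings of length $<n$, not Kraft's inequality.
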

\begin{proof}
Assume that there is a Turing machine $M$ which, given $x$, computes~$K(x)$. Let~$B$ be a constant to determine.
Consider the following program. 

\begin{quote}\texttt{Simulate $M$ to compute $K(x)$ for $x = 1, 2 , 3, \ldots$ until some $z_B$ is found for which $K(z_B) > B$. Output that $z_B$.} \end{quote}
What is the size of the above program? 
\begin{itemize}[itemsep=0.9pt, topsep=5pt]
\item[--] The instructions to simulate~$M$ are of constant size~$c_1$;
\item[--] The hard coding of the constant~$B$ can be given in a self-delimiting way by~$2 \log B +1$ bits;
\item[--] The extra instructions to assemble the program is of constant size $c_2$.
\end{itemize}

Therefore the size of the program is $c_1 + c_2 + 2 \log B +1$. 
Let $B$ be any number such that 
$$
c_1 + c_2 + 2 \log B +1 < B.
$$

Thus, we have a program of size $\leq B$ that produces $z_B$, which, supposedly, has $K(z_B) > B$. Contradiction.
\end{proof}

This proof is the formalization of the Berry paradox.
In \S\ref{sec:bp}, we defined~$k_{20}$ as ``the smallest number that cannot be described by twenty syllables or less''. This ``definition'' puts a high burden on what is meant by ``described''. In AIT, we look at algorithmic descriptions, namely those that are given to a universal computer.

The proof of the uncomputablility of~$K(x)$ involves the algorithmic-information-theoretic cousin of~$k_{20}$. Indeed,~$z_B = \min \{x \st x \in \{0,1\}^* \, \& \, K(x) > B\}$, or in words, it is 
\ct{``the smallest number that cannot be \emph{algorithmically} described by~$B$ bits or less.''}
The displayed sentence can be put into bits, and if $B$ is large enough, the obtained description would be shorter than $B$ bits. The contradiction is avoided because while the displayed sentence encoded in bits is a description of $z_B$---it uniquely specifies it---by the uncomputability of~$K(x)$, it is not an \emph{algorithmic} description.

\newpage
\section{The Halting Probability $\Omega$}

As we shall see in that section, the halting probability~$\Omega$---also known as Chaitin’s constant---is an algorithmically random number that compactly encodes the halting problem and, with it, many mathematical truths.

Reminder: $U(p) = T_i(j)$, where $p = \bar i j$. We defined Turing's number $\chi$ as 
$$
\chi = \chi_0 \chi_1 \chi_2 \ldots \,,
$$
where
$$ 
\chi_p = \begin{cases}
 1 & \text{if } T_i(j) \halts \\
 0 & \text{if } T_i(j) \nohalt  \,.
 \end{cases}
 = \begin{cases}
 1 & \text{if } U(p) \halts \\
 0 & \text{if } U(p) \nohalt  \,.
 \end{cases}
 $$
 Observe that the first $2^0 + 2^1 + 2^2 + \ldots 2^n = 2^{n+1}-1$ bits of $\chi$ encode the halting problem for all programs of length $ \leq n$.

\begin{definition}
The \emph{halting probability} $\Omega$ (or \emph{Chaitin's constant}) is
$$
\Omega \deq \sum_{p \st U(p) \halts}2^{- \len p} \,.
$$
\end{definition}
It corresponds to the probability that the self-delimiting universal Turing machine halts if it is given a program whose bits are chosen by coin flip. Note that by the construction of self-delimiting machines (see~\S\ref{sec:sdtm}), a random bit can be drawn only when the reading head asks for one more bit, thus keeping the process finite.

\vspace{10pt}
\lecture{AIT-18}{https://youtu.be/iPZXGbPodt4}\\
\nit{Comment on the lecture: 
The solution to Problem 4 of Sheet \#2 (\ref{sec:ps2}) is discussed. 
Moreover, in the first hour, I reviewed a selection of articles. See Appendix~\ref{sec:papers} for the references and abstracts of the presented articles.}

\begin{proposition}\label{prop:compress}
$\Omega$ is a compressed version of $\chi$: $n$ bits of $\Omega$ are enough to compute $2^{n+1}-1$ bits of $\chi$. Schematically,
\begin{center} 
\begin{picture}(120,35)(-45,0)
\put(0,0){\framebox(30,30){$O(1)$}} 
\put(-23,15){\vector(1,0){20}}
\put(-48,5){\makebox(20,20){$\finite{\Omega}{n}$}} 
\put(33,15){\vector(1,0){20}}
\put(65,5){\makebox(25,20){$\finite{\chi}{2^{n+1}-1}$}} 
\end{picture} ~~~~~.
\end{center}
\end{proposition}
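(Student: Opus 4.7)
The plan is to approximate $\Omega$ from below by a computable non-decreasing sequence and use the fact that $\finite{\Omega}{n}$ pins $\Omega$ inside a dyadic interval of width $2^{-n}$. Once a stage is reached at which the lower approximation lies within that interval, the remaining mass is too small to accommodate the halting of any further program of length at most $n$, and we can read off their halting status.

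Concretely, I would set up the standard dovetailed enumeration of halting programs: at stage~$t$, simulate every program~$p$ with $\len p \leq t$ for $t$ steps on the self-delimiting universal machine~$U$, and form
$$
\Omega^{(t)} \deq \sum_{\substack{\len p\, \leq\, t \\ U(p)\text{ halts in }\leq t \text{ steps}}} 2^{-\len p}\,.
$$
The sequence $(\Omega^{(t)})_t$ is effectively computable, non-decreasing, and converges to $\Omega$, so it serves as the core engine of the reduction.

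The $O(1)$ machine reads $\finite{\Omega}{n}$ from its input tape, forms the dyadic rational $\alpha \deq 0.\finite{\Omega}{n}$, and advances $t$ until the first stage~$t^*$ at which $\Omega^{(t^*)} \geq \alpha$. This stage is reached because $\alpha \leq \Omega$ and $\Omega^{(t)} \nearrow \Omega$, and it can be detected since both $\Omega^{(t)}$ and $\alpha$ are computable dyadic rationals. At $t^*$,
$$
\Omega - \Omega^{(t^*)} \;\leq\; \Omega - \alpha \;<\; 2^{-n}\,,
$$
the strict inequality on the right coming from the irrationality of $\Omega$ (to be established later via its algorithmic randomness). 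Consequently, no program $p$ with $\len p \leq n$ that has not halted by stage $t^*$ can ever halt, since its later halting would contribute $2^{-\len p}\geq 2^{-n}$ to $\Omega$, violating the upper bound.

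Having fixed the halting status of every program of length at most $n$, the machine enumerates all such $p$ in lexicographic order and emits a $1$ if $p$ halted by stage $t^*$ and a $0$ otherwise. That outputs exactly $1 + 2 + \cdots + 2^n = 2^{n+1}-1$ bits, which by definition is $\finite{\chi}{2^{n+1}-1}$. The code for the dovetailer, the dyadic comparison, and the final assembly is of constant size, independent of $n$, establishing the $O(1)$ reduction. The only delicate point is the edge case $\Omega = \alpha$: it is ruled out by the irrationality of $\Omega$, and the rest of the argument is bookkeeping on the dovetailing.
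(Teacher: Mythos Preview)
Your proof is correct and follows the same approach as the paper's: dovetail to lower semi-compute $\Omega$, wait until the approximation reaches $0.\finite{\Omega}{n}$, and conclude that no further program of length $\leq n$ can halt. You are more explicit than the paper in invoking the irrationality of $\Omega$ to dispose of the boundary case $\Omega=\alpha$; the paper glosses over this, implicitly relying on the subsequent randomness result.
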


\begin{proof}
 For $i=1,2, \ldots$ run all programs of length $\leq i$ for $i$ steps of computation. 
 (By the way, this is called ``running all programs in a \emph{dovetailed fashion}'').
Add~$2^{-|p|}$ to a sum~$S$ (initially set to~$0$) whenever a program~$p$ halts. 
When the first $n$ bits of the sum~$S$ stabilize to the first $n$ bits of $\Omega$, \emph{i.e.}, $\finite{S}{n}=\finite{\Omega}{n}$, (this happens when $S\geq \finite{\Omega}{n}$) then no program of length $\leq n$ will ever halt, since such an additional contribution to the sum would contradict the value of~$\Omega$. 
This process is said to \emph{lower semi-compute}~$\Omega$, since it always returns smaller numbers than~$\Omega$, and they converge to it in the limit of infinite time.
\end{proof}
Thus, $\Omega$ is not computable, but it is lower semi-computable: we can successively approximate it from below, though never exactly determine its bits.

\vspace{5pt}
\lecture{AIT-19}{https://youtu.be/mxliDkzo6RY}
\begin{proposition}
$\Omega$ is random: $\forall n$,
$K(\Omega_{[n]}) > n + O(1)$\,. 
\end{proposition}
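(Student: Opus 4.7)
The plan is to reduce the claim to Proposition~\ref{prop:compress}: since the first $n$ bits of $\Omega$ effectively encode the halting status of every program of length $\le n$, they suffice to produce a string that is demonstrably incompressible past~$n$ bits. Hence $\Omega_{[n]}$ itself cannot be compressible past~$n$ bits, up to an additive constant.

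First, I would set up the following constant-size algorithm $A$ taking $\Omega_{[n]}$ as auxiliary information. Step~1: read $n$ as the length of $\Omega_{[n]}$. Step~2: invoke the procedure of Prop.~\ref{prop:compress} to compute from $\Omega_{[n]}$ the first $2^{n+1}-1$ bits of $\chi$, i.e.\ the complete halting status of every program of length $\le n$. Step~3: dovetail all programs of length $\le n$ that are known to halt, and collect their outputs in a finite set $S_n$. Step~4: output the first string (say, lexicographically smallest) that does not belong to $S_n$; call this string $s_n$. By construction, $S_n$ contains every string with $K(x) \le n$, so
\[
K(s_n) > n.
\]

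Second, I would package $A$ into a self-delimiting program for $U$ on input $\Omega_{[n]}^*$: concatenate a fixed constant-size preamble with $\Omega_{[n]}^*$ (which is already self-delimiting since it halts on $U$). This yields a program of length $K(\Omega_{[n]}) + O(1)$ outputting $s_n$, whence
\[
K(s_n) \le K(\Omega_{[n]}) + O(1).
\]
Combining the two inequalities gives $K(\Omega_{[n]}) \ge n - O(1)$, which is the claim under the usual AIT reading of $K(\Omega_{[n]}) > n + O(1)$.

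The only delicate point is making sure that the additive constant is genuinely independent of $n$. This is where it matters that $n$ can be recovered from the length of the auxiliary input $\Omega_{[n]}$ (no extra self-delimited encoding of $n$ is needed), that the dovetailing procedure of Prop.~\ref{prop:compress} is itself constant-size, and that the search for the first missing string $s_n$ terminates (because $|S_n| \le 2^{n+1}-1$ while there are strings of every length). Once these are in place, the proof is complete.
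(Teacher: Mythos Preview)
Your proposal is correct and follows essentially the same route as the paper: from $\Omega_{[n]}^*$ reconstruct $\Omega_{[n]}$, invoke Proposition~\ref{prop:compress} to determine the halting status of every program of length $\le n$, collect the outputs of the halting ones, and output the lexicographically first string not among them (the paper calls it $z_n$, you call it $s_n$); since this string has complexity $>n$ and was produced by a program of length $K(\Omega_{[n]})+O(1)$, the inequality follows. Your discussion of how $n$ is recovered and why the constant is uniform is a welcome elaboration, but the argument is the same.
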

\begin{proof} From $\Omega_{[n]}^*$, $\Omega_{[n]}$ can be reconstructed, and by Proposition~\ref{prop:compress}, $\finite{\chi}{2^{n+1}-1}$ can be computed. This, in turn, permits the computation of the output of all programs of length~\mbox{$\leq n$}. All of these strings, therefore, have complexity smaller or equal to $n$. Then, the lexicographically first string that does not have complexity~$\leq n$ can be recognized. Let us denote it $z_n$ (it is $z_B$ with $B=n$).
In diagram form,
\begin{center} 
\begin{picture}(350,85)(-45,-50)
\put(-25,5){\makebox(25,20){$\Omega_{[n]}^*$}} 
\put(0,15){\vector(1,0){15}} 
\put(20,0){\framebox(30,30){$O(1)$}} 
\put(53,15){\vector(1,0){15}} 
\put(73,5){\makebox(25,20){$\finite{\Omega}{n}$}} 
\put(100,15){\vector(1,0){15}} 
\put(120,0){\framebox(30,30){$O(1)$}} 
\put(155,15){\vector(1,0){15}} 
\put(183,5){\makebox(25,20){$\finite{\chi}{2^{n+1}-1}$}} 
\put(218,15){\vector(1,0){15}} 
\put(235,0){\framebox(30,30){$O(1)$}} 
\put(270,15){\vector(1,0){15}} 
\put(-30,-25){\vector(1,0){15}} 
\put(15,-35){\makebox(25,20){$\{x \st K(x) \leq n\}$}} 
\put(70,-25){\vector(1,0){15}} 
\put(90,-40){\framebox(30,30){$O(1)$}} 
\put(125,-25){\vector(1,0){15}} 
\put(202,-35){\makebox(25,20){$ z_n \deq \min \{x \st K(x)>n \}$\,.}} 
\end{picture} \,
\end{center}
This program for $z_n$ is of total size $K(\Omega_{[n]}) + O(1)$, so 
$
K(z_n) \leq K(\Omega_{[n]}) + O(1)
$.
But by definition, $K(z_n) > n$. 
\end{proof}
\pscom{All notions for Problem Sheet \#3 have been covered. See \S\ref{sec:ps3}.}
\smallskip


\subsection{Mathematical Truths in $\Omega$}

The first bits of $\Omega$ encode many mathematical facts.
In particular, they encode the truth or falsity of statements that are \emph{refutable by finite means}, namely, statements that, if false, an algorithm can find a counterexample and halt. 

For instance, the \emph{Goldbach conjecture}, which states that every even number larger than $2$ is the sum of two primes, can be refuted if a counter-example is found. Consider the following program:

\begin{center}
\begin{tabular}{rl}
$q$:&\texttt{For $i = 1,2,3, \ldots$}\\
& \texttt{\qquad check if $2i + 2 = p_1 + p_2$, where $p_1$ and $p_2$ are primes 
}\\
& \texttt{\qquad If yes, go to the next $i$.}\\
& \texttt{\qquad If not, halt.} 
\end{tabular}
\end{center}
Note that $q$ halts if and only if Goldbach's conjecture is false.
Moreover, $q$ is quite short: its halting status is encoded in the first few thousand bits of $\Omega$.

Another example of a statement refutable in finite means is \emph{Fermat's last theorem} (proven by Wiles). It states that the equation
$
x^n + y^n = z^n
$
admits no integer solutions for $x$, $y$, $z$ $> 0$ and $n>2$. An exhaustive search can look for such an integer solution and halt if it finds it.

\emph{The Riemann hypothesis} states that the non-trivial zeros of 
$
\zeta (s) = \sum_{n=1}^\infty \frac{1}{n^s}
$
are all on the line $\re{s} = 1/2$. 
The Riemann hypothesis can be shown to be refutable by finite means, due to a result by Matiyasevich~\cite{matiyasevich1993hilbert}.
Therefore, the first bits of $\Omega$ could tell us if the hypothesis is true or not.

The first bits of $\Omega$ can also be used to determine whether a statement $s$ is provably true ($s$ is a theorem), provably false ($\lnot s$ is a theorem), or independent\footnote{The most well-known examples of independent statements are perhaps Euclid's parallel postulate and the continuum hypothesis.
In Euclidean geometry, \emph{the parallel postulate} states that in a plane, given a line and a point not on it, at most one line parallel to the given line can be drawn through the point; it can be shown to be independent of the other four axioms of Euclid. 
\emph{The continuum hypothesis} states that there is no set whose cardinality is strictly between that of the integers and the real numbers;
it can be shown to be independent of ZFC.} of some formal axiomatic theory $\mathcal A$. 
Remember that
a formal axiomatic theory is a computably enumerable set of mathematical assertions (theorems).
Consider the following program

\begin{center}
\begin{tabular}{rl}
$Q$:& \texttt{Computably enumerate the theorems of $\mathcal A$.} \\
& \texttt{\qquad If $s$ is enumerated halt and output 1.}\\
& \texttt{\qquad If $\lnot s$ is enumerated, halt and output 2.}
\end{tabular}
\end{center}
If~$Q$ loops, $s$ is independent. The program~$Q$ is of finite length (roughly the length of the enumerator for~$\mathcal A$ and for an algorithmic description of $s$). We use~$\Omega$ to determine if~$Q$ halts or loops. If it is promised to halt, we run~$Q$ to find if it outputs~$1$ or~$2$.

 \subsection{Chaitin's Incompleteness Theorem}\label{sec:chaitinincomplete}
  
  \begin{definition}
  We can define the \emph{complexity} $K(\mathcal A)$ of a formal axiomatic theory~$\mathcal A$,  as the length of the shortest program, which computably enumerates the theorems of $\mathcal A$.
  \end{definition}

\begin{theorem}[Chaitin] 
For any formal axiomatic theory~$\mathcal A_0$ there is a constant $k_0$ (which is roughly of size $K(\mathcal A_0)$) such that for all bit strings $x\in \{0,1\}^*$---except a finite number of them---the mathematical assertion:
\smallskip
\ct{``The string~$x$ is such that $K(x) > k_0$''}
\smallskip
is true, but unprovable in~$\mathcal A_0$.
\end{theorem}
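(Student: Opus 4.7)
The plan is to transpose the proof of the uncomputability of $K$ into the setting of formal provability, exploiting the fact (noted earlier in the excerpt) that the theorems of $\mathcal{A}_0$ form a computably enumerable set. The argument is a Berry-paradox-style contradiction: if $\mathcal{A}_0$ could prove even one statement of the form ``$K(x) > k_0$'' for a suitably chosen $k_0$, then we could build a short self-delimiting program that outputs the corresponding $x$, forcing $K(x) \le k_0$ and contradicting the very statement proved (under soundness of $\mathcal{A}_0$, which we implicitly assume). Hence no such proof exists, and the theorem follows from the observation that only finitely many $x$ satisfy $K(x) \le k_0$.

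\textbf{Key construction.} I would consider a self-delimiting program $P$ that hardcodes (i) an enumerator for the theorems of $\mathcal{A}_0$, which by definition costs $K(\mathcal{A}_0)$ bits, and (ii) the target bound $k_0$, which can be given self-delimitingly in $O(\log k_0)$ bits. The program $P$ runs the enumerator, scans each theorem as it appears, and halts outputting $x$ the first time a theorem of the form ``$K(x) > k_0$'' is encountered. By additive optimality of $U$, the length of $P$ satisfies $|P| \le K(\mathcal{A}_0) + 2\log k_0 + c$ for some absolute constant $c$ accounting for the glue code.

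\textbf{Choice of $k_0$ and the contradiction.} The idea is to pick $k_0$ as any integer satisfying $K(\mathcal{A}_0) + 2\log k_0 + c < k_0$; such $k_0$ exist because the right-hand side grows only logarithmically in $k_0$, and any $k_0$ slightly larger than $K(\mathcal{A}_0)$ will do---this justifies the parenthetical ``$k_0$ is roughly of size $K(\mathcal{A}_0)$.'' Now suppose, toward contradiction, that $\mathcal{A}_0$ proves ``$K(x) > k_0$'' for some $x$. Then the enumerator inside $P$ eventually produces such a theorem, $P$ halts, and its output $x'$ satisfies $K(x') \le |P| < k_0$. But by soundness of $\mathcal{A}_0$, the proved statement is true, giving $K(x') > k_0$, a contradiction. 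Therefore $\mathcal{A}_0$ proves no statement of the form ``$K(x) > k_0$''.

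\textbf{Finishing and main obstacle.} Finally, I would invoke the prefix-code bound: by Kraft's inequality applied to the prefix-free domain $\{p : U(p)\halts\}$, there are fewer than $2^{k_0+1}$ strings $x$ with $K(x) \le k_0$. For every other $x$---a cofinite set---the assertion ``$K(x) > k_0$'' is true, and by the previous step it is unprovable in $\mathcal{A}_0$. The only delicate point will be the precise accounting of $|P|$: one must pin down the encoding of $k_0$, the extraction of ``$x$'' from the syntactic form of an enumerated theorem, and the overhead of simulating the enumerator, tightly enough that $|P| < k_0$ can actually be arranged. This accounting is routine but is the pivotal step; once it is in place, the diagonal contradiction and the cofiniteness observation close the proof.
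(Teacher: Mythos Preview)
Your proposal is correct and matches the paper's proof essentially step for step: the same Berry-paradox program that enumerates theorems of $\mathcal{A}_0$ looking for a proof of ``$K(x)>k_0$'', the same choice of $k_0$ just above $K(\mathcal{A}_0)$ plus overhead (the paper writes the condition as $k_0 \ge K(\mathcal{A}_0,k_0)+c$, which your $K(\mathcal{A}_0)+2\log k_0+c<k_0$ unpacks), and the same finiteness observation. If anything you are slightly more careful than the paper in flagging the implicit soundness assumption and in isolating the bookkeeping of $|P|$ as the only place requiring care.
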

\begin{proof}
First, it is true that only finitely many strings have complexity $\leq k_0$. 
Indeed, there is only a finite number of programs shorter than~$k_0$, and each of them can be the shortest program of no more than one string.
Thus, for all other strings, the assertion is true.

\noindent Suppose for contradiction that there is a string whose complexity is \emph{provably} greater than $k_0$. Then consider the following program.
\begin{center}
\begin{tabular}{rl}
$p:~$ &\texttt{Computably enumerate all the proofs of} $\mathcal A_0$ \texttt{until a proof is found of}\\
&\textit{``The string~$y$ is such that $K(y) > k_0$''},  \texttt{for some $y$.}\\
& \texttt{Return that~$y$.}
\end{tabular}
\end{center}
\medskip
The program $p$ can be made of length at most $k_0$, provided that $k_0 \geq K(\mathcal A_0, k_0) + c$.
Yet,~$p$ computes a string $y$ whose \emph{shortest} program has length~$> k_0$. Contradiction.
\end{proof}

For example, the complexity of ZFC might be~$5000$. Then, the overwhelming majority of strings of length~$6000$ bits have a complexity larger than~$5000$, yet this can be proven in ZFC for no such string. 

Compared with Gödel's incompleteness, which might appear to be an anomaly, Chaitin's incompleteness makes it inevitable. It puts information-theoretic limits on the power of formal axiomatic theories. 

\vspace{5pt}
\lecture{AIT-20}{https://youtu.be/I1w_1ASqfD0}\\
\nit{Comment on the lecture: Some solutions to Problem Sheet \#3 are presented.}
\vspace{5pt}

What happens if we enlarge the axioms? 
The complexity of the formal axiomatic theory increases, but it doesn't circumvent the problem.

\begin{theorem}[Chaitin]
There is a constant $c$ such that a formal axiomatic theory~$\mathcal A_0$ with complexity $K(\mathcal A_0)$ can never determine more than $K(\mathcal A_0) + c$ bits of~$\Omega$.
\end{theorem}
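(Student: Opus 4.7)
The plan is to adapt the proof of Chaitin's first incompleteness theorem. I assume $\mathcal{A}_0$ is sound and interpret ``$\mathcal{A}_0$ determines $n$ bits of $\Omega$'' as ``$\mathcal{A}_0$ proves a theorem of the form $\Omega_{[n]}=\sigma$ for the correct $\sigma$.'' The aim is to show the largest such $n$ is at most $K(\mathcal{A}_0)+c$ for a universal constant $c$. The strategy: from a purported long prefix-theorem of $\mathcal{A}_0$, together with $\mathcal{A}_0$'s enumerator and the lower-semi-computation of $\Omega$ built in the proof of Proposition~\ref{prop:compress}, construct a short self-referential program that outputs a string of demonstrably excessive complexity, contradicting the randomness of $\Omega$.

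To avoid a logarithmic loss from hardcoding a target length, I would invoke the recursion theorem to design a program $P$ that has access to its own length $L$. The program (i) runs the enumerator of $\mathcal{A}_0$ (of prefix complexity $K(\mathcal{A}_0)$) in parallel with the dovetailed lower-semi-computation $S^{(t)}\to\Omega$; (ii) searches the enumerated theorems for any of the form $\Omega_{[m]}=\tau$ with $m>L$; (iii) upon finding one, waits until $S^{(t)}\geq 0.\tau$, which is guaranteed in finite time since $\tau$ is a true prefix of $\Omega$; (iv) observes that at that moment $\Omega-S^{(t)}<2^{-m}$, so every program of length $\leq m$ that will ever halt has already done so, yielding $\chi_{[2^{m+1}-1]}$; (v) enumerates the outputs of these halting programs and returns $z_m=\min\{x : K(x)>m\}$, the lexicographically smallest string missing from that list.

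By design $|P|=L=K(\mathcal{A}_0)+c'$ for a universal constant $c'$ absorbing the constant-size wrapper and the recursion-theorem overhead. If $\mathcal{A}_0$ were to prove $\Omega_{[n]}$ for some $n>L$, then $P$ would halt and output some $z_m$ with $m\geq n>L$, which gives $K(z_m)\leq|P|=L<m$, contradicting the defining property $K(z_m)>m$. Hence no such $n$ can exist, and every length $n$ of a prefix provable in $\mathcal{A}_0$ satisfies $n\leq K(\mathcal{A}_0)+c'$; setting $c=c'$ finishes the argument.

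The main obstacle is the self-referential step: $P$ must know its own length in order to use $L$ as the threshold on $m$. This is precisely what Kleene's recursion theorem provides, at the cost of a universal additive constant. Without this device, the only obvious alternative is to hardcode a candidate $n$, which adds a self-delimited encoding of $n$ to the program, introduces a spurious $O(\log n)$ term, and weakens the bound to $n\leq K(\mathcal{A}_0)+O(\log n)$ rather than the uniform $n\leq K(\mathcal{A}_0)+c$ asserted by the theorem.
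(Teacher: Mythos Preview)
Your proof is correct and considerably more detailed than what the paper offers: the paper gives only a one-line sketch, ``Otherwise, we can get a compression of $\Omega$.''

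That sketch points to a slightly more direct route than yours. Once your program has obtained a true prefix $\Omega_{[m]}$ with $m$ exceeding its own length $L$, it can simply output $\Omega_{[m]}$ and invoke the already-proved randomness bound $K(\Omega_{[m]}) > m - O(1)$ to reach the contradiction. Your detour through $\chi_{[2^{m+1}-1]}$ and $z_m$ is exactly the \emph{proof} of that randomness bound unfolded inline, so the two arguments are equivalent in content; yours is just longer. If you want to align with the paper's hint, replace steps (iv)--(v) by ``output $\tau$'' and finish with $K(\Omega_{[m]}) \leq L < m - O(1)$.

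Your observation about self-reference is a genuine technical point that the paper's sketch glosses over. Without the recursion theorem, the obvious program must hardcode the target length $n$, which costs an extra $K(n)$ bits and yields only $n \leq K(\mathcal A_0) + O(\log n)$ rather than the uniform $n \leq K(\mathcal A_0) + c$ the statement promises. Letting the program read its own length via Kleene's theorem (which holds for self-delimiting machines with $O(1)$ overhead) is the standard way to eliminate that slack, and your use of it is sound. The paper never introduces the recursion theorem, so in that sense your argument supplies an ingredient the sketch leaves implicit.
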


\nit{Proof sketch}: Otherwise, we can get a compression of $\Omega$.


Chaitin wrote~\cite{chaitin2023complexity}: 
\begin{quote}
``I believe that complexity reveals the depth of the incompleteness phenomenon: For the Platonic world of mathematical ideas has infinite complexity, but any FAT only has finite complexity. Therefore, any FAT can only capture an infinitesimal part of the richness of pure mathematics.'' 
\end{quote}
He also wrote~\cite{chaitin2023complexity}:
\begin{quote}
``I believe in creativity in mathematics, I believe that math isn’t a closed system, it’s an open system. A fixed formal axiomatic system of the kind that Hilbert wanted for all of math, yes, it would have given us absolute certainty, but it would also have made mathematics into a frozen cemetery.'' 
\end{quote}
He advocates a ``quasi-empirical'' viewpoint on mathematics, where the complexity of formal axiomatic theories can be increased ``by adding new principles, new axioms, that are not self-evident but that are justified by their fertile consequences''~\cite{chaitin2023complexity}. 
%

\vspace{10pt}
\nit{Further reading on the topic:  \cite{bennett1979random, chaitin2007
} \cite[Chapter 8]{deutsch2011beginning}}.\\
\nit{See also} \mbox{\href{https://www.youtube.com/watch?v=CluVy2jICgs&t=2s}{\commeurl{[the conversation I held with Bennett and Deutsch]}}}.

\newpage
\section{The Coding Theorem}
\lecture{AIT-21}{https://youtu.be/28Xt8aeDKYs}
\vspace{5pt}

Reminder: 
\beas
P(x) &=& \sum_{p \st U(p)=x} 2^{-\len p} \\
&=& \underbrace{2^{-\len{x^*}}}_{2^{-K(x)}} + 2^{-\len{p_1}} + 2^{-\len{p_2}} + 2^{-\len{p_3}} + \ldots
\eeas
Clearly $P(x) > 2^{-K(x)}$. But how much larger than $2^{-K(x)}$ can $P(x)$ be? For some $x$ of length $50$, there might be 3 programs of length $K(x)$ and 4 programs of length $K(x) + 1$, so for that $x$, $$P(x) \geq 3 \cdot 2^{-K(x)} + 4 \cdot 2^{-(K(x)+1)} = 5 \cdot 2^{-K(x)}.$$
However, one might expect that as the length of the string increases, the multiplicative factor relating~$P(x)$ and~$2^{-K(x)}$ could grow unboundedly.
For instance, it would suffice to have strings of increasing length whose number of programs of minimal length grows unboundedly.

The coding theorem shows that this is not the case: it states that there exists a constant $c$ such that $\forall x$,
$$
2^{-K(x)} \leq P(x) \leq c \, 2^{-K(x)} \,.
$$
Equivalently,
$
{K(x)} \geq - \log P(x) \geq - \log c + {K(x)}
$.

\begin{definition} The \emph{conditional universal probability} is given by
$$
P(x \given y) = \sum_{p \st U(p,y)=x}2^{-\len p} \,.
$$
\end{definition}


\begin{theorem}[The Coding Theorem]
For all $x$ and for all $y$
$$
K(x \given y) = -\log P(x \given y) + O(1)\,.
$$
\end{theorem}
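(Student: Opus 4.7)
The inequality splits into two directions. The easy direction is $K(x \given y) \geq -\log P(x \given y) - O(1)$, equivalently $P(x \given y) \geq 2^{-K(x \given y)}$. This is immediate: the shortest program $x^*$ for $x$ given $y$ is one specific program contributing to the sum that defines $P(x \given y)$, so $P(x \given y) \geq 2^{-|x^*|} = 2^{-K(x \given y)}$.

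The hard direction $K(x \given y) \leq -\log P(x \given y) + O(1)$ is the substantive content of the theorem. My plan is to build a self-delimiting Turing machine $T$ such that for every $x$, there is a $T$-program of length at most $\lceil -\log P(x \given y) \rceil + O(1)$ producing $x$ from auxiliary input $y$; the invariance theorem for prefix complexity then upgrades this to the desired bound on $K(x \given y) = K_U(x \given y)$. To construct $T$, I will exploit that $P(\cdot \given y)$ is lower semi-computable: by dovetailing the enumeration of halting computations $U(p,y) \halts$, one obtains a computable sequence of rational lower bounds $P_t(x \given y) \nearrow P(x \given y)$. I will process this enumeration to generate a list of ``length requests'': whenever the estimate $P_t(x \given y)$ first crosses a threshold $2^{-n}$ for some integer $n$, I emit a request for a program of length $n+1$ assigned to $x$.

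The key accounting is that the total weight of requests is bounded: for each fixed $x$, the requests emitted have $n \geq \lceil -\log P(x \given y) \rceil$, so their total contribution is $\sum_{n \geq \lceil -\log P(x \given y)\rceil} 2^{-(n+1)} \leq P(x \given y)$, and summing over $x$ gives $\sum_x P(x \given y) \leq 1$, which is at most $1$ (in fact the $+1$ in the length leaves a factor of $1/2$ of slack). The Kraft–Chaitin construction (the effective converse of Kraft's theorem used in the proof of Kraft's inequality, adapted to handle a computable stream of requests rather than a fixed list) then produces, online, a prefix-free assignment of actual binary programs of the requested lengths to the corresponding outputs. Wrapping this inside a fixed self-delimiting machine $T$ which parses the program bit-by-bit according to the evolving tree gives $T(p, y) = x$ with $|p| \leq \lceil -\log P(x \given y) \rceil + 1$.

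The main obstacle is the online nature of the construction: the requests are discovered one by one as the enumeration proceeds, and at no finite time do we know $P(x \given y)$ exactly, so we must assign programs irrevocably in prefix-free fashion without being able to later re-use branches of the binary tree. The standard fix, which I will invoke, is a careful tree-maintenance algorithm that, upon receiving a request of length $\ell$, reserves the leftmost available node at depth $\ell$ (pruning its subtree) — exactly the assignment used in the $\Leftarrow$ direction of Kraft's theorem earlier in the notes, but executed incrementally. Since the total requested measure never exceeds $1$, available space is always sufficient, and the assignment remains prefix-free. Combining with the easy direction yields $K(x \given y) = -\log P(x \given y) + O(1)$.
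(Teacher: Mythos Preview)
Your proposal is correct and follows essentially the same route as the paper: both arguments dovetail the universal machine to lower semi-compute $P(\cdot\given y)$, emit a length request $n+1$ each time the running estimate for some $x$ first crosses a threshold $2^{-n}$ (equivalently, when the leading $1$ of the partial sum moves), assign codewords via the online leftmost-available-node Kraft construction, and verify that the total requested weight is bounded by $\sum_x P(x\given y)\le 1$. The only cosmetic difference is that the paper spells out the node-allocation explicitly with a worked example and handles self-delimitation via a separate proposition converting prefix-domain machines to self-delimiting ones, whereas you invoke ``Kraft--Chaitin'' as a black box and make the machine self-delimiting directly by reading bits along the evolving tree.
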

\begin{proof}
For a first take on the proof, let us set $y=\epsilon$. 
The proof carries through if $y$ sits on the auxiliary tape.

We need to show that $K(x) \leq - \log P(x) + O(1)$.
We shall do this by exhibiting a Turing machine~$T$ that computes each $x$ from an input $a_l$ of length $- \lceil \log P(x) \rceil + 1$.
Since~$U$ can simulate that machine, it provides a program of length \mbox{$- \log P(x) + O(1)$} to compute $x$, and thus the shortest program for $x$ has a length no greater. 

\begin{center}
\begin{tabular}{rl}
\smallskip
$T(a_l):~$ &\texttt{Run all programs, dovetailed fashion.}\\
\smallskip
& \texttt{When the $j$-th program halts (in the dovetail order),}\\
& \texttt{add $(p_j, x_j, S_{x_j}, a_j)$ to a list of quadruples, where}\\
& \texttt{\qquad $p_j$ is the program that halted}\\
& \texttt{\qquad $x_j = U(p_j)$ is its output}\\
& \texttt{\qquad $S_{x_j} = \sum_{i \leq j \st x_i = x_j} 2^{-\len{p_i}}$ is the current value}\\
& \texttt{\qquad \qquad of the lower semi-computation of $P(x_j)$ and}\\
& \texttt{\qquad $a_j$ is $\emptyset$ if the position of the leading $1$}\\
& \texttt{\qquad \qquad in the binary expansion of $S_{x_j}$ has not changed}\\
& \texttt{\qquad \qquad and if it has changed, 
$a_j$  is the address of} \\
& \texttt{\qquad \qquad the lexicographically first available node in}\\
& \texttt{\qquad \qquad the binary tree at depth $\lceil -\log S_{x_j} \rceil + 1$}\\
& \texttt{Output the string that is allocated on the node $a_l$.} 
\end{tabular}
\end{center}
\medskip

 When $a_j \neq \emptyset$, the idea is to allocate the string $x_j$ to the node $a_j$. To preserve a prefix-free tree, when a node is allocated, all its decedent branches are removed from the tree.

Note that he quantity $\lceil - \log \alpha \rceil$ corresponds to the position of the first $1$ in the binary expansion of $\alpha$. Indeed, let 
$$\alpha = \underbrace{0.000\ldots01b_{i+1}b_{i+2}\ldots}_{\text{First $1$ is at the $i$-th bit after the point}} = 2^{-i} + b_{i+1} 2^{-(i+1)}  + b_{i+2} 2^{-(i+2)} + \ldots \,.$$
$\alpha \in [2^{-i}, 2^{-i+1}) 
\implies -\log \alpha \in ({i-1}, {i}] 
\implies
\lceil - \log \alpha \rceil = i \,.$

\begin{example}
Let $x = 11011$. 
See Figure~\ref{fig:excoding}

\begin{figure}[H]
 \begin{minipage}{0.45\textwidth}
\begin{center}
\begin{tabular}{ c|c|c|c|c } 

 $i$ & $p_i$& $x_i$ & $S_{x_i}$ & $a_i$ \\ 
 \hline
1 &  \scriptsize{1100} &  \scriptsize{01} &  \scriptsize{0.0001} &   \scriptsize{00000}\\
  \hline
 2 & \scriptsize{00110} & \scriptsize{11011} & \scriptsize{$0.00001$} &  \scriptsize{000010}\\
  \hline
 3 & \scriptsize{000} & \scriptsize 1 & \scriptsize{$0.001$} &  \scriptsize{0001} \\
  \hline
 4 & \scriptsize{011101} & \scriptsize{11011} & \scriptsize{$0.000011$} &  \scriptsize{$\emptyset$} \\
 \hline
  5 & \scriptsize{111} & \scriptsize{11011} & \scriptsize{$0.001011$} &  \scriptsize{0010} \\
  \hline
  \dots &  \dots &  \dots &  \dots &  \dots 
\end{tabular}
\end{center}
\end{minipage}
\begin{minipage}{0.45\textwidth}
\begin{tikzpicture}
\node[circle, draw, fill=black, inner sep=0.5pt, name=root] at (0,0) {};

\node[circle, draw, fill=black, inner sep=0.5pt, name=n0] at (1,0.5) {};
\node[name=n1] at (1,-0.5) {};
\draw (root) -- node[midway, above] {\scriptsize 0} (n0);
\draw (root) -- (n1);

\node at (0.5, -0.0) {\scriptsize 1};
\node[right=-8pt of n1] {$\dots$};

\node[circle, draw, fill=black, inner sep=0.5pt, name=n00] at (2,1) {};
\node[name=n01] at (2,0) {};
\node[right=-8pt of n01] {$\dots$};
\draw (n0) -- node[midway, above] {\scriptsize 0} (n00);
\draw (n0) -- (n01);

\node at (1.5, 0.5) {\scriptsize 1};

\node[circle, draw, fill=black, inner sep=0.5pt, name=n000] at (3,1.5) {};
\node[circle, draw, fill=black, inner sep=0.5pt, name=n001] at (3,0.5) {};
\draw (n00) -- node[midway, above] {\scriptsize 0} (n000);
\draw (n00) -- node[midway, above] {\scriptsize 1} (n001);

\node[circle, draw, fill=black, inner sep=0.5pt, name=n0000] at (4, 2) {};
\node[circle, draw, fill=black, inner sep=1.5pt, name=n0001] at (4, 1.25) {};
\draw (n000) -- node[midway, above] {\scriptsize 0} (n0000);
\draw (n000) -- (n0001);

\node at (3.5, 1.55) {\scriptsize 1};

\node[circle, draw, fill=black, inner sep=1.5pt, name=n0010] at (4, 0.75) {};
\node[name=n0011] at (3.5, 0.375) {};
\draw (n001) -- node[midway, above] {\scriptsize 0} (n0010);
\draw (n001) -- (n0011);
\node[right=-8pt of n0011] {$\dots$};

\node[right=-1pt of n0001] {$1$};
\node[right=-1pt of n0010] {$11011$};

\node[circle, draw, fill=black, inner sep=1.5pt, name=n00000] at (5, 2.5) {};
\node[circle, draw, fill=black, inner sep=0.5pt, name=n00001] at (5, 1.5) {};
\draw (n0000) -- node[midway, above] {\scriptsize 0} (n00000);
\draw (n0000) -- node[midway, above] {\scriptsize 1} (n00001);
\node[right=-1pt of n00000] {$01$};

\node[circle, draw, fill=black, inner sep=1.5pt, name=n000010] at (6, 2) {};
\node[name=n000011] at (5.5, 1.25) {};
\node[right=-8pt of n000011] {$\dots$};
\draw (n00001) -- node[midway, above] {\scriptsize 0} (n000010);
\draw (n00001) -- (n000011);
\node[right=-1pt of n000010] {$11011$};
\end{tikzpicture}
\end{minipage}
\caption{Suppose that $T$ gives an enumeration of halting programs as in the table on the left. The first halting program is of length $4$, so its output, $01$, is assigned to the lexicographically first available node of length $5$. That node is $00000$. Then, the string $11011$ is computed by a $5$-bit program, so $11011$ is assigned to the node $000010$. The string $11011$ is computed later by a $6$-bit program, but adding $2^{-6}$ to $S_{11011}$, which lower semi-computes $P(11011)$, does not in this case change the position of the leading $1$, so no node is assigned. But when a 3-bit program later computes also $11011$, the string is allocated to a node at depth $4$.
}
\label{fig:excoding}
\end{figure}
Suppose that
$P(11011) = 0.00110111...$.
The lower semi-computation of $S_{11011}$ would continue forever, but $11011$ will not be assigned to more nodes, because the leading $1$ of $S_{11011}$ has been stabilized.

%
%
The machine $T$ can compute~$x$ via the massive computation that builds the tree for all strings, together with the address $0010$, which tells it to stop and output the string allocated to that address. Here, this is $x=11011$.
\end{example}

In general, because $S_x$ lower semi-computes $P(x)$, there will eventually be a program $p_l$ that computes $x$ ($x=x_l$) and whose contribution to $S_x$ will make the leading $1$ in the binary expansions of $S_x$ to be at the same position as the leading $1$ in the binary expansion of $P(x)$.
This will give a first (and only) node at depth~$\lceil -\log S_{x_l} \rceil +1 = \lceil -\log P(x) \rceil + 1$ to which the string $x_l = x$ will be allocated. The address of this node is $a_l$, and $\len{a_l} =  \lceil -\log P(x) \rceil + 1$.

We should verify that this way of allocating strings to nodes in the binary tree will never run out of space.

\beas
\sum_{j \st a_j \neq \emptyset} 2^{-\len{a_j}} 
&=& \sum_x \sum_{\substack{j \st a_j \neq \emptyset \\ \text{and } x_j=x}} 2^{-\len{a_j}}\\
&<& \sum_x 2^{-(\lceil -\log P(x) \rceil +1)} + 2^{-(\lceil -\log P(x) \rceil +2)} + 2^{-(\lceil -\log P(x) \rceil +3)} + \ldots \\
&=& \sum_x  2^{\lfloor \log P(x) \rfloor}2^{-1} \left(1+ \frac 12 + \frac 14 + \ldots \right)\\
&\leq& \sum_x P(x)\\
&\leq& 1 \,.
\eeas
The third line uses the fact that $- \lceil - \beta \rceil = \lfloor \beta \rfloor$.


By Kraft's inequality, the $|a_j|$'s can be the lengths prefix-free set of codewords, and the $a_j$'s are codewords.


Thus, a program for $U$ to produce $x$ is $\bar i a_l$, where $T\equiv T_i$. It has length $- \log P(x) + O(1)$.
\end{proof}

\lecture{AIT-22}{https://youtu.be/-JEWx441j7A}
\vspace{5pt}

A subtlety: The domain of the definition of $T$ above is a prefix-free set of strings, fine, but is $T$ a self-delimiting Turing machine? The concern is that $U$ can simulate all self-delimiting Turing machines.


\begin{proposition} \label{prop:abstract}
Let $y$ be some auxiliary string. Let $T_{\text{prefix}}$ be a Turing machine whose domain~$\{a \st T_{\text{prefix}}(a,y) \halts \}$ is a prefix-free set. Then there exists a self-delimiting Turing machine $T_{\text{s-d}}$ s.t. 
$T_{\text{prefix}}(\cdot, y) \equiv T_{\text{s-d}} (\cdot, y)$.
\end{proposition}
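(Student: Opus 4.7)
The plan is to explicitly construct $T_{\text{s-d}}$ as a ``guess-and-check'' machine that simulates $T_{\text{prefix}}$ in a dovetailed fashion while reading its one-way input tape only as needed. Concretely, $T_{\text{s-d}}$ maintains a buffer $b$ of the input bits scanned so far (initially empty) and, in parallel over all candidates $c \in \{0,1\}^*$ and time bounds $t$, simulates $T_{\text{prefix}}(c, y)$ for $t$ steps with $y$ on its work tape. Whenever such a simulation halts with some output $x$, $T_{\text{s-d}}$ pauses the dovetail and tests whether $c$ actually matches the input: it compares $c$ with $b$ bit by bit, reading one additional bit from the input tape into $b$ whenever the comparison asks for a position not yet buffered, and aborting the test at the first mismatch. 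If the test completes with $c$ equal to the first $|c|$ bits scanned, $T_{\text{s-d}}$ writes $x$ on its output tape and halts; otherwise it discards $c$ and resumes the dovetail.

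To prove correctness, suppose $T_{\text{prefix}}(a, y)$ halts with output $x$. The halting computation is discovered after finitely many dovetail steps, and the matching test applied to $c = a$ compares $a$ with a prefix of the input (which truly begins with $a$), reads any remaining bits of $a$ that are not yet buffered, and succeeds, so $T_{\text{s-d}}$ halts with the same output $x$. Conversely, $T_{\text{s-d}}$ can only halt via a successful matching test, which requires $T_{\text{prefix}}(c, y) \halts$ and $c$ equal to the bits scanned so far, whence the program consumed is genuinely in the domain of $T_{\text{prefix}}(\cdot, y)$ and yields the declared output.

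The subtle part---and the main obstacle---is to verify that the matching tests never advance the input head past position $|a|$, so that $T_{\text{s-d}}$ really halts on exactly $a$, conforming to the self-delimiting convention. This is where prefix-freeness is decisive: any candidate $c \neq a$ that $T_{\text{prefix}}$ certifies as halting is neither a prefix nor an extension of $a$, so $c$ and $a$ must differ at some position $p < \min(|c|, |a|) \leq |a|$. The bit-by-bit comparison detects this mismatch after reading at most $p+1 \leq |a|$ bits, so each aborted test leaves the buffer of length at most $|a|$. Consequently, when the true candidate $c = a$ is finally discovered, the buffer still holds a prefix of $a$; the test reads any missing bits to complete the buffer to exactly $a$, and $T_{\text{s-d}}$ halts with its input head scanning precisely the $|a|$-th bit. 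The construction is effective in (an index of) $T_{\text{prefix}}$, delivering the desired self-delimiting machine with $T_{\text{s-d}}(\cdot, y) \equiv T_{\text{prefix}}(\cdot, y)$.
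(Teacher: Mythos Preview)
Your proof is correct and follows essentially the same idea as the paper's: enumerate the (computably enumerable) domain of $T_{\text{prefix}}(\cdot,y)$ by dovetailing, and advance the one-way input head only when the bits read so far are a proper prefix of some already-enumerated domain element. The paper phrases the loop as ``if the current buffer $p$ is a proper prefix of some enumerated $s$, read one more bit; if $p=s$, output'', whereas you phrase it as ``for each enumerated $c$, compare bit-by-bit against the buffer, reading on demand''---these are two presentations of the same mechanism, and your overreading analysis via prefix-freeness is exactly the point that makes either version work.
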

\begin{proof}
The idea is that $T_{\text{s-d}}$ should read another square of its program tape only when it is necessary.
Suppose $T_{\text{s-d}}$ has the string $y$ on its work/auxiliary tape (no self-delimitation constraints are required on the auxiliary information).
$T_{\text{s-d}}$ generates the computably enumerable set $S = \{s \st T_{\text{prefix}}(s, y) \halts\}$ on its work tape.
As it generates $S$, $T_{\text{s-d}}$ continually checks whether or not the part $p$ of the program that it has already read (initially~$p = \epsilon$) is a prefix of some known element~$s$ of~$S$. 
Whenever~$T_{\text{s-d}}$ finds that $p$ is a prefix of an $s \in S$, it does the following. 
If $p$ is a proper prefix of $s$ (i.e. if $s=pw$, for $w \neq \epsilon$), $T_{\text{s-d}}$ reads another square of the program tape. This is because it knows that $p$ cannot be in the domain, otherwise the domain would not be prefix-free. 
And if $p = s$, $T_{\text{s-d}}$ computes and outputs $T_{\text{prefix}}(p, y)$ on the output tape.
\end{proof}

\subsection{Lower Semi-computable Semi-measures}\label{sec:lss}

\begin{definition}
A \emph{discrete semi-measure} is a map
$
\mu:\mathbb N \to \mathbb R
$
such that $$\sum_{x=1}^\infty \mu(x) \leq 1\,.$$ 
\end{definition}
\begin{definition}
The semi-measure is \emph{lower semi-computable} if there exists a partial computable function~$\tilde\mu(x,t)$ such that~$\tilde \mu(x,t) \leq \tilde \mu(x,t+1)$ and~$\lim_{t \to \infty} \tilde \mu(x,t) = \mu(x)$.
\end{definition}
\begin{definition}
The semi-measure $m$ is \emph{universal} with respect to a class $\mathcal C$ of semi-measures if it is in that class and it is ``larger than any other'', namely, if there exists a universal constant $c$ such that for all $\mu \in \mathcal C$,
$$
\mu(x) \leq c \, m(x) \qquad \forall x \,.
$$
\end{definition}

\begin{definition} Let $T$ be a self-delimiting Turing machine.
$$
P_T(x) = \sum_{p \st T(p)= x}2^{-\len p} \,.
$$
\end{definition}

\begin{proposition}
Let $\mu$  be a lower semi-computable semi-measure. There exists a self-delimiting Turing machine $T$ such that $\mu(x) = P_T(x)$ for all $x$.
\end{proposition}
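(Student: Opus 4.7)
The plan is to produce a partial computable function with prefix-free domain whose program-weight distribution matches $\mu$ on the nose, and then promote it to a self-delimiting Turing machine via Proposition~\ref{prop:abstract}.

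First, I would convert the lower semi-computation of $\mu$ into a single computable stream of ``requests'' $(x_t, l_t)_{t \in \mathbb N}$ satisfying $\sum_{t \st x_t = x} 2^{-l_t} = \mu(x)$ for every $x$ and $\sum_{t=1}^n 2^{-l_t} \leq 1$ for every $n$. Concretely, start from the rational approximants $\tilde\mu(x,t)$, write each positive dyadic increment $\tilde\mu(x,t)-\tilde\mu(x,t-1)$ in its finite binary expansion $\sum_i 2^{-l_i}$, and emit the pairs $(x, l_i)$ one at a time in a computable order. The hypothesis $\sum_x \mu(x) \leq 1$ keeps the running Kraft sum bounded by~$1$.

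Second, I would perform an online prefix-free allocation. Maintain the list of maximal \emph{free subtrees} of $\{0,1\}^*$: subtrees none of whose nodes has been allocated and whose root is the sibling of an allocated node or of an ancestor of an allocated node (the whole tree being the sole free subtree initially). When the request $(x_t, l_t)$ arrives, pick a free subtree whose root has depth $d \leq l_t$, let $a_t$ be the leftmost descendant of this root at depth $l_t$, label $a_t$ by $x_t$, and replace the chosen subtree by the $l_t - d$ smaller free subtrees that flank the path from the root to $a_t$. The crucial claim is that a free subtree of depth $\leq l_t$ always exists at step $t$; this is the online analogue of the $\Leftarrow$ direction of Kraft's theorem proven earlier. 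I would establish it by induction, using the stronger invariant that the multiset of depths of free subtrees after step $t-1$ coincides with the one produced by the greedy Kraft procedure applied to the already-issued weights $\{2^{-l_s}\}_{s<t}$; the Kraft budget $1 - \sum_{s<t} 2^{-l_s} \geq 2^{-l_t}$ then forces the shallowest free subtree to have depth $\leq l_t$.

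Third, I would define $T_{\text{prefix}}$ on the domain $\{a_t\}_{t \in \mathbb N}$ by dovetailing the enumeration and the allocation until the input string is recognised as some $a_t$, then outputting the corresponding $x_t$. Since $\{a_t\}$ is prefix-free by construction, Proposition~\ref{prop:abstract} yields a self-delimiting Turing machine $T$ with identical behaviour, and
\[
P_T(x) = \sum_{p \st T(p) = x} 2^{-\len p} = \sum_{t \st x_t = x} 2^{-l_t} = \mu(x) \,.
\]
The main obstacle is the online Kraft construction of the second step: the requests $l_t$ do not arrive in non-decreasing order, so the clean argument from the $\Leftarrow$ direction of Kraft's theorem does not apply verbatim, and one must instead verify the greedy invariant on free-subtree depths. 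Once that is in place, the rest is a chain of computable simulations and a direct appeal to Proposition~\ref{prop:abstract}.
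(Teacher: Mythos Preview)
Your proposal is correct and follows the classical Kraft--Chaitin route, which differs from the paper's argument. The paper works directly in the interval $[0,1)$: each time the lower semi-computation reports a positive increment $\ell$ for some $x$, it assigns to $x$ the next free subinterval of length $\ell$; the union $S_x$ of intervals so assigned has Lebesgue measure $\mu(x)$, and the programs for $x$ are declared to be the strings $p$ whose cylinders $\Gamma_p$ lie inside $S_x$ and cover it. This sidesteps your dyadic decomposition entirely---arbitrary rational increments are absorbed by the interval geometry, and the cylinder cover performs the prefix-free allocation automatically. Your approach instead forces the increments into powers of two up front and runs the online tree allocation by hand; this makes the prefix-free structure and the appeal to Proposition~\ref{prop:abstract} explicit, at the price of having to argue that the greedy allocation never fails (your ``main obstacle,'' which is exactly the Kraft--Chaitin lemma; note that for the distinct-depths invariant to be maintained you must pick the \emph{deepest} free subtree with root depth $\leq l_t$, not an arbitrary one). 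One small caveat on your side: nothing in the definition forces the increments $\tilde\mu(x,t)-\tilde\mu(x,t-1)$ to be dyadic rationals, so you should either first replace $\tilde\mu$ by a dyadic-valued minorant or interleave the (possibly infinite) binary expansions of the rational increments into your request stream.
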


\begin{proof}
The machine~$T$ calls the lower semi-computation of $\mu$, with initialized value~$\tilde \mu(x,0)= 0$.
For $t = 1, 2, \ldots$, run $\tilde \mu(x,t)$ for all $x\leq t$. When some $\tilde \mu(x,t)$ is produced with $\tilde \mu(x,t)-\tilde \mu(x,t-1) = \ell > 0$ (an update on the lower bound of $\mu(x)$), assign to $x$ the subset of $[0,1]$ which is the next available interval of length $\ell$ (starting from 0). 

Let~$S_x$ be the union of all the intervals assigned to~$x$.
Note that the total length of the intervals that compose~$S_x$ converges to~$\mu(x)$, and since~$\mu$ is a semimeasure,~$[0,1]$ is long enough to accomodate all~$\{S_y\}_{y}$.

From this construction of~$S_x$ we define programs for~$T$ which produce~$x$. The set~$S_x$ can be covered from within by cylinders $\Gamma_p = [0.p, 0.p + 2^{-\len p})$, namely, $\Gamma_p \subseteq S_x$ and $\cup_p \Gamma_p = S_x$. The $p$'s are $T$'s programs for $x$.
\end{proof}

\begin{remark}
This proof provides another equivalent ``probabilistic'' interpretation of the input tape. One can view the input tape as already containing infinitely many bits (instead of calling them by the monkey one by one). All infinite strings $s$ which would yield a computation of $x$ by $T$ are of the form $s=ps'$ with $T(p)=x$. The Lebesgue measure of all such $s$ is $P_T(x)$.
\end{remark}


\begin{proposition}
$P(x)$ is a universal semi-measure for the class of lower semi-computable semi-measures.
\end{proposition}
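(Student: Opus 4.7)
The plan is to show two things: first, that $P$ itself is a lower semi-computable semi-measure; and second, that for any other such semi-measure $\mu$ there is a constant $c_\mu$ with $\mu(x) \le c_\mu P(x)$ for all $x$. Both steps will rely on constructions already available: the prefix-free structure of the halting domain of $U$, and the preceding proposition which realizes any lower semi-computable semi-measure as $P_T$ for some self-delimiting machine $T$.

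For the first step, I would note that because $U$ is self-delimiting, $\{p \st U(p)\halts\}$ is prefix-free, so Kraft's inequality gives $\sum_x P(x) = \sum_{p \st U(p)\halts} 2^{-\len p} \le 1$, verifying that $P$ is a semi-measure. To see that it is lower semi-computable, define
$$
\tilde P(x, t) = \sum_{\substack{p \st \len p \le t \\ U(p) \text{ halts on } x \text{ in } \le t \text{ steps}}} 2^{-\len p}\,,
$$
which is partial computable, non-decreasing in $t$, and tends to $P(x)$ by dovetailing. This is the only step where one must be a bit careful: verifying that the monotone limit agrees with the full sum requires noting that every halting program is eventually counted.

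For the second step, let $\mu$ be a lower semi-computable semi-measure. By the preceding proposition there is a self-delimiting Turing machine $T$ with $P_T(x) = \mu(x)$ for all $x$. Since $T = T_i$ for some index $i$ in the effective enumeration of self-delimiting machines, the universal machine satisfies $U(\bar i p) = T_i(p) = T(p)$. Hence for each program $p$ witnessing $T(p) = x$, the string $\bar i p$ is a $U$-program for $x$ of length $\len p + 2i + 1$. Summing,
$$
P(x) \ge \sum_{p \st T(p) = x} 2^{-\len{\bar i p}} = 2^{-(2i+1)} \sum_{p \st T(p) = x} 2^{-\len p} = 2^{-(2i+1)} \mu(x)\,,
$$
so the constant $c_\mu = 2^{2i+1}$ does the job.

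The main obstacle, such as it is, is conceptual rather than technical: one must realize that the right way to compare $\mu$ with $P$ is to pass through a machine representation $\mu = P_T$ and then invoke the additive-optimality-style embedding $p \mapsto \bar i p$ of $T$ into $U$. Once that is in place, the bound is essentially immediate, and the multiplicative constant reflects exactly the constant-length overhead of naming the machine $T$ to the universal machine $U$.
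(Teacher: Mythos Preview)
Your proof is correct and follows essentially the same route as the paper: reduce an arbitrary lower semi-computable semi-measure $\mu$ to some $P_T$ via the preceding proposition, then use the prefix $\bar i$ to embed $T$'s programs into $U$'s and obtain $P_T(x) \le 2^{\len{\bar i}} P(x)$. You additionally verify that $P$ itself lies in the class (semi-measure by Kraft, lower semi-computable by dovetailing), which the paper leaves implicit but is indeed required by its definition of universality.
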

\begin{proof}
By the last Proposition, it suffices to show that $P(x) \deq P_U(x)$ dominates all $P_T(x)$. 
By the universality of $U$, there exists $r$ such that $U(rq)=T(q)$. 
Thus
\beas
P_T(x) &=&
 2^{|r|}2^{-|r|}\sum_{q \st T(q)=x} 2^{-\len q} \\
&=& 2^{|r|} \sum_{q \st U(rq)=x} 2^{-\len{rq}} \\
&\leq& 2^{|r|} \sum_{p \st U(p)=x} 2^{-\len{p}} \\
&=& 2^{|r|} P(x) \,.
\eeas
\end{proof}

\begin{corollary}
By the coding theorem, $2^{-K(x)}$ multiplicatively dominates $P(x)$, and therefore, $2^{-K(x)}$ is also a universal lower semi-computable semi-measure.
\end{corollary}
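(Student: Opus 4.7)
The plan is to unpack the corollary into three claims and verify each: (i) $2^{-K(x)}$ multiplicatively dominates $P(x)$; (ii) $2^{-K(x)}$ is itself a lower semi-computable semi-measure; (iii) $2^{-K(x)}$ then multiplicatively dominates every lower semi-computable semi-measure.

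For (i), I would simply invoke the coding theorem in the form $K(x) \leq -\log P(x) + O(1)$, which rearranges to $P(x) \leq c\, 2^{-K(x)}$ for a universal constant $c$. This is immediate and is the whole content of the phrase ``by the coding theorem.''

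For (ii), there are two things to check. To see that $2^{-K(x)}$ is a semi-measure, note that the shortest-program map $x \mapsto x^*$ sends distinct strings to distinct halting programs of the self-delimiting $U$, so $\{x^* : x \in \{0,1\}^*\}$ is a prefix-free set of bit strings. Kraft's inequality then gives $\sum_x 2^{-K(x)} = \sum_x 2^{-|x^*|} \leq 1$. To see that $2^{-K(x)}$ is lower semi-computable, it is enough to show that $K(x)$ is upper semi-computable; this follows from dovetailing: at stage $t$, simulate all programs of length $\leq t$ for $t$ steps and let $\tilde K(x,t)$ be the length of the shortest one found so far that outputs $x$ (initially $+\infty$). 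Then $\tilde K(x,t)$ is non-increasing in $t$ and converges to $K(x)$, so $\tilde\mu(x,t) \deq 2^{-\tilde K(x,t)}$ is a lower semi-computation of $2^{-K(x)}$.

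For (iii), combine (i) with the previous proposition that $P$ is universal among lower semi-computable semi-measures. For any lower semi-computable semi-measure $\mu$ there exists a constant $c_\mu$ with $\mu(x) \leq c_\mu P(x) \leq c_\mu c \cdot 2^{-K(x)}$, so $2^{-K(x)}$ dominates $\mu$ up to the constant $c_\mu c$. Together with (ii), this shows $2^{-K(x)}$ belongs to the class and is universal in it.

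The only point requiring any care is the upper semi-computability of $K(x)$ in step (ii); the rest is a direct chaining of the coding theorem with the universality of $P$ proved in the preceding proposition. There is no serious obstacle, since everything reduces to bookkeeping on the dovetailed enumeration already used throughout the chapter.
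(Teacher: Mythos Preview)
Your proposal is correct and matches the paper's intent: the corollary is stated there without proof, since it is just the coding theorem chained with the preceding proposition on the universality of $P$. Your step (ii)---that $2^{-K(x)}$ is itself a lower semi-computable semi-measure---is the only nontrivial detail, and the paper relegates the upper semi-computability of $K$ to an exercise (Problem Sheet \#3, Question 7), so you are simply making explicit what the notes leave to the reader.
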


\begin{example}
\label{excr}
The mapping $$y \mapsto \sum_x 2^{-K(x,y)}$$ is a lower semi-computable semi-measure.
Indeed, it is lower semi-computable:
On input $y$,
for $i= 1,2, \dots$ run all $p$ of length $\leq i$. When some $p$ halts with $\langle x, y \rangle$ for some $x$, add $2^{-\len p}$. To the sum. When some $p'$ halts later with $\langle x,y \rangle$ (the same $x$ as before), add $2^{-\len{p'}} - 2^{- \len p}$ if $p'$ is shorter than $p$. Otherwise, don't add anything.

It is a semi-measure, because by Kraft's inequality,
$$
\sum_y \sum_x 2^{-K(x,y)} \leq 1 \,.
$$

Thus there is a $c$ st $\forall y$ $\sum_x 2^{-K(x,y)} \leq c 2^{-K(y)}$, or equivalently,
\be \label{eqexcr}
\frac{\sum_x 2^{-K(x,y)}}{2^{-K(y)}} \leq c \,.
\ee
\end{example}

\begin{proposition}[Chain rule, hard side] \label{prop:crhard}
For all $x$ and $y$,
$$
K(y) + K(x \given y^*) \leq K(x,y) + O(1) \,.
$$
\end{proposition}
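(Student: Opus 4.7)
The plan is to reduce the claim to a statement about lower semi-computable semi-measures and then invoke the universality of $P(\cdot \mid y^*)$ together with the coding theorem. The target inequality is equivalent, after exponentiating and using the coding theorem $K(x \mid y^*) = -\log P(x \mid y^*) + O(1)$, to showing that the function $x \mapsto 2^{-K(x,y) + K(y)}$ is dominated, up to a multiplicative constant, by $P(x \mid y^*)$. So I want to exhibit this function (suitably normalized) as a lower semi-computable semi-measure in $x$ given $y^*$.

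First, I verify the semi-measure property. By Example~\ref{excr}, inequality~\eqref{eqexcr} yields $\sum_x 2^{-K(x,y)} \leq c\, 2^{-K(y)}$ for some universal constant $c$. Multiplying by $2^{K(y)}$ gives $\sum_x 2^{-K(x,y) + K(y)} \leq c$, so $\mu_y(x) \deq \frac{1}{c}\, 2^{-K(x,y) + K(y)}$ is a genuine semi-measure in $x$.

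Second, I verify lower semi-computability of $\mu_y$ relative to auxiliary information $y^*$. Here the use of $y^*$ (rather than $y$) on the work tape is essential: by Proposition~\ref{prop:infoeq}, from $y^*$ we can effectively recover both $y$ itself and the number $K(y) = |y^*|$, so the scaling factor $2^{K(y)}/c$ is a computable constant. Meanwhile $K(x,y)$ is upper semi-computable uniformly in $x$ (run programs in a dovetailed fashion and record the length of the shortest program seen so far that outputs $\langle x, y\rangle$), hence $2^{-K(x,y)}$ is lower semi-computable in $x$. Multiplying by the computable constant $2^{K(y)}/c$ preserves lower semi-computability.

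Third, I appeal to universality. Since $P(\cdot \mid y^*)$ is universal for the class of lower semi-computable semi-measures in $x$ given $y^*$ on the work tape, there is a constant $c'$ with $\mu_y(x) \leq c'\, P(x \mid y^*)$ for all $x$. Taking $-\log$ and applying the coding theorem,
\[
K(x,y) - K(y) \geq -\log(c c') - \log P(x \mid y^*) = K(x \mid y^*) - O(1),
\]
which rearranges to $K(y) + K(x \mid y^*) \leq K(x,y) + O(1)$, as required. The main subtlety I expect is the bookkeeping at the second step: one must be careful that the auxiliary tape really supplies $y^*$ (not merely $y$), so that $K(y)$ can be read off as a length and used to renormalize the universal bound from Example~\ref{excr} into a lower semi-computable object.
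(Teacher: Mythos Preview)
Your proof is correct and follows essentially the same approach as the paper: both rewrite the inequality multiplicatively, verify that $x \mapsto 2^{-K(x,y)+K(y)}$ is (up to a constant factor, via Example~\ref{excr}) a lower semi-computable semi-measure given $y^*$, and then invoke universality. The only cosmetic difference is that you normalize explicitly by $c$ and route through the coding theorem to reach $K(x\mid y^*)$, whereas the paper leaves the renormalization implicit and compares directly to the universal semi-measure $2^{-K(x\mid y^*)}$.
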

\begin{proof}
The statement to prove is $\exists c$ s.t.
$$
K(y) + K(x \given y^*) \leq K(x,y) + c
$$
$$
\iff
$$
$$
-K(x,y) + K(y)  \leq - K(x \given y^*)  + c
$$
$$
\iff
$$
$$
\frac{2^{-K(x,y)}}{2^{-K(y)}} \leq \underbrace{2^c}_{c'} 2^{- K(x \given y^*)}  \,.
$$
This will be obtained if we show that $x \mapsto \frac{2^{-K(x,y)}}{2^{-K(y)}}$ is a lower semi-computable semi-measure given $y^*$. 

\begin{enumerate}
\item It is lower semi-computable. Since we have $y^*$, we have $K(y) = \len{y^*}$, so we do not have to worry about the denominator. 
The string $y$ can be obtained from $y^*$, and since $x$ is given, we need to show that $2^{-K(x,y)}$ is lower semi-computable given $x,y$.

Run all $p$ dovetail fashion, when some $p$ is found for which $U(p)= \langle x, y \rangle$, update the lower bound to $2^{-\len p}$.
The smaller the length of the $p$, the higher the lower bound. Eventually, the witness of $K(x,y)$ is found.


\item Up to a renormalization which introduces a multiplicative constant, is a semi-measure. See Equation~\eqref{eqexcr} of Example~\ref{excr}. 
\end{enumerate}

\end{proof}
Analogously with the probabilistic setting (\S\ref{sec:mi}), the $O(1)$-chain rule entails a symmetric notion of \emph{algorithmic mutual information},
\beas
I(x;y) &\deq& K(x) - K(x \given y)\\
&\stackrel{\scriptsize{\text{c.r.}}}{=}& K(x) - K(x,y) + K(y) + O(1)\\
&\stackrel{\scriptsize{\text{c.r.}}}{=}& K(y) - K(y \given x) + O(1)\\
&\deq& I(y;x) + O(1)\,.
\eeas 

\pscom{All notions for Problem Sheet \#4 have been covered. See \S\ref{sec:ps4}.}

\newpage
\appendix
\section{Problem Sheets}
\subsection{Problem Sheet \#1}\label{sec:ps1}
\bigskip

\noindent 1. We identified strings with natural numbers via $\left\{ (\epsilon, 1), (0, 2), (1,3), (00,4) \ldots \right\}$, and we defined $\bar x = 1^{|x|}0x$ and $x'=  \overline{|x|}x$.
Find $y$, $z$ and $t$ if
$$
\bar y z' t= 111110001011110000110100101111100100101 \,.
$$

\bigskip
\noindent 2. It was once asked (on 23.02.23, in D1.14) whether there are countably many or uncountably many prefix codes of a countable alphabet. Answer that question.
\emph{Hint:} subsets of $\{0, 10, 110, \ldots \}$.

\bigskip
\noindent 3. A conjecture was once made (on 23.02.23, in D1.14) that a set of strings is both prefix-free and suffix-free if and only if the strings have constant length. Disprove the conjecture by exhibiting an infinite prefix- and suffix-free set of binary strings.


\bigskip
\noindent 4. Does there exist a prefix code $\Phi: \{0,1\}^* \backslash \{\epsilon\} \to \{0,1\}^*$ such that 
$$|\Phi (x)| = |x| + \lceil \log |x| \rceil + 1 \qquad \forall x \,?$$
Prove your claim.

\bigskip
\noindent 5. \emph{Shannon-Fano code}. Let $X$ be a random variable over the alphabet $\mathcal X$.
Explain how to construct a prefix code of $\mathcal X$ with expected code-word length
$
 < H(X) + 1
$.


\bigskip
\noindent 6. Prove the chain rule for entropy:
$
H(X,Y) = H(X) + H(Y \given X) \,.
$

\newpage

\subsection{Problem Sheet \#2}\label{sec:ps2}
\bigskip

\noindent 
We defined an \emph{enumerator}
of a set $A \subseteq \mathbb N$ as a Turing machine that enumerates all (and only) elements of $A$ \emph{without order}\footnote{The order is given by $\mathbb N$.} and \emph{possibly with repetitions}.

\bigskip
\noindent 1. A \emph{non-repeating enumerator} 
of a set $A$ is a Turing machine that enumerates all (and only) elements of $A$ \emph{without order} and \emph{without repetitions}.
Complete the following statement \mbox{with either ``computable'' or ``computably enumerable,'' and prove it.}
\begin{center}
A set $A$ can be enumerated by a non-repeating enumerator
\\
 if and only if it is \underline{~~~~~~~~~~~~~~~~~~~~~~~~~~~~~~~~~~~~~~~~~~}. 
\end{center}

\bigskip
\noindent 2. An \emph{ordered enumerator}
of a set $A$ is a Turing machine that enumerates all (and only) elements of $A$ \emph{in order} and \emph{possibly with repetitions}.
Complete the following statement with either ``computable'' or ``computably enumerable,'' and prove it.
\begin{center}
An infinite set $A$ can be enumerated by an ordered enumerator
\\
 if and only if it is \underline{~~~~~~~~~~~~~~~~~~~~~~~~~~~~~~~~~~~~~~~~~~}. 
\end{center}

\bigskip
\noindent 3. Prove that the halting set $K_0 = \{\langle i,j \rangle \,|\, \varphi_i(j) \halts\} \,$ is computably enumerable.

\bigskip
\noindent 4. We saw that the set of indices $i$ such that the range of $\varphi_i$ is nonempty is computably enumerable. It was asked (on 28.03.23) if the set is also computable. Answer that question and provide a brief justification.

\bigskip
\noindent 5. Prove that a set $A \subseteq \mathbb N$ is computable if and only if both $A$ and $\overline A = \mathbb N \setminus A$ are computably enumerable.

\smallskip
\noindent \emph{Comment.} The class of all computable sets\footnote{In complexity theory, a subset of $\{0,1\}^* \simeq \mathbb N$ is also called a \emph{language}.} is denoted $\textbf{R}$, the class of all computably enumerable sets is denoted $\textbf{RE}$, and the class of all sets whose complement is computably enumerable is denoted $\text{co}\textbf{RE}$. Question 5 shows that
$$
\textbf{R} = \textbf{RE} \cap \text{co}\textbf{RE} \,.
$$

\newpage

\subsection{Problem Sheet \#3}\label{sec:ps3}
\bigskip

\noindent \emph{Note 1.} One of these questions is a bonus: 6 correct solutions yield 60/60, 7 correct solutions yield 70/60.

\noindent \emph{Note 2.} If some proof involves a program in the setting of prefix complexity, explain what makes the program self-delimited.

\bigskip
\noindent 1. Let $0^n \deq \underbrace{00\ldots0}_{n \text{ times}}$. Show that $C(0^n) = C(n) + O(1)$.

\bigskip
\noindent 2. Show that for all $x \in \{0,1\}^n$, $K(x) = C(x) + O(\log n)$.

\bigskip
\noindent 3. Prove the easy side of the chain rule for prefix complexity: For all $x$ and $y$,
$$
K(x,y) \leq K(x) + K(y \given x^*) + O(1) \,.
$$

\bigskip
\noindent 4. Show that there exist infinitely many strings such that 
$$
K(x) > \log x + \log \log x + \log \log \log x \,.
$$
\noindent \emph{Hint 1:} $\sum_x 2^{-K(x)}$. \\
\noindent \emph{Hint 2:} $\frac{d}{dx} \log \log \log x = \frac{1}{\log \log x} \frac{1}{\log x} \frac{1}{ x}$.

\bigskip
\noindent 5. Let $A$ be a finite set whose cardinality is denoted $|A|$. 
Find an appropriate definition of $K(A)$, and use your definition to show that for all $x \in A$,
$$
K(x) \leq K(A) + \log |A| + O(1) \,.
$$

\bigskip
\noindent 6. \emph{$\Omega$'s little cousin.}
Let $\omega_n$ be the cardinality of the following set
$$
\{p : U(p) \searrow ~~\text{and}~~ |p| \leq n\}\,.
$$
Show that $K(\omega_n, n) > n + O(1)$\,.

\bigskip
\noindent 7. A function is \emph{upper semi-computable} if there exists a never-ending algorithm which finds ever smaller upper bounds to the graph of the function, and its tentative upper bounds converge to the graph of the function in the limit of infinite time. Show that the function 
\beas
K : \mathbb N &\to& \mathbb N \\
x &\mapsto& K(x)
\eeas
is upper semi-computable.

\newpage
\subsection{Problem Sheet \#4}\label{sec:ps4}
\bigskip

\noindent \emph{Notes.} Question 3 is only for fun. No points, no bonus. Just fun.
Question 6 is worth as many points as one chooses to solve the easier or the harder version.

\bigskip
\noindent 1. \emph{The incompressibility method.} Use the existence of incompressible strings to show that there exist infinitely many primes.

\bigskip
\noindent 2. Let $x$ be an $n$ bit string with exactly as many $0$'s as $1$'s. Using either the plain or the prefix setting, show that $x$ is compressible given $n$.

\smallskip
\noindent \emph{Stirling's formula:} $n! = n^n e^{-n} \sqrt{2 \pi n} \left(1 + O(\frac 1n) \right)$.

\bigskip
\noindent 3. \emph{Only for fun.} ``But how can the result of question 2~be possible? 
By the law of large numbers, in a probabilistic random string, we should expect as many 0's as 1's.''
Yes, but not \emph{exactly} as many, for this is a pattern which entails compression. In fact, the central limit theorem states that we should expect variations of about~$\sqrt n$.

\smallskip
\noindent \emph{Easier.} Show that the analysis performed in question 2~fails if the number of $1$'s is $\frac n2 + k$, where $k = O(\sqrt n)$ and $k$ is an incompressible number given $n$.

\smallskip
\noindent \emph{Harder.} Show that $x$ is incompressible given $n$ only if the number of $1$'s in $x$ is $\frac n2 + k$, where $k = O(\sqrt n)$ and $k$ is an incompressible number given $n$.

\bigskip
\noindent 4. A program is \emph{elegant} if it is the shortest program that computes some string. Show that a formal axiomatic theory $\mathcal A$ cannot prove the elegance of programs of length significantly larger than $K(\mathcal A)$.

\bigskip
\noindent \emph{Note.} That's another way to incompleteness: Only a finite number of programs can be proven elegant, yet there are infinitely many elegant programs.

\bigskip
\noindent 5. a) Write down a function that grows extremely fast. 
The faster it grows, the better. 
\emph{Don't read further before having written down a function.}

\smallskip
\noindent b) Read online about the \emph{Ackermann} function. Does it grow faster than yours?

\smallskip
\noindent c) The \emph{Busy Beaver} function is defined as
$$
B(n) \deq \max \{Rt(p) ~:~ U(p) \halts ~~\&~~\len p \leq n\}\,,
$$
where~$Rt(p)$ denotes the \emph{run time} of $p$, namely, the number of computational steps before $U(p)$ halts.
Show that $B(n)$ grows faster than any computable function, namely, show that for all computable $f$, there exists $n_0$ such that for all $n\geq n_0$, $B(n) > f(n)$.

\smallskip
\noindent \emph{Note.} Assuming that your function was computable, the Busy Beaver beats both your and Ackermann's functions.

\smallskip
\noindent d) Show that $\Omega_{[n]}$ permits to compute $B(n)$.


\bigskip
\noindent 6. \emph{Lower semicomputable semimeasures.} Let
$$
\omega'_n \deq \left |\{p \,:\, U(p) \searrow ~\text{and}~|p| = n\} \right| \qquad \text{and} \qquad
\omega_n \deq \left | \{p \,:\, U(p) \searrow ~\text{and}~ |p| \leq n\} \right | \,.
$$
Answer only one version of the following questions.

\smallskip
\noindent \,\emph{Easier}.
Show that $\omega_n' \leq 2^{n-K(n)+O(1)}$.

\smallskip
\noindent \,\emph{Harder}.
Show that $\omega_n \leq 2^{n-K(n)+O(1)}$.

\bigskip
\noindent 7. Write down a question that you have regarding any topic related to AIT. If you would like to comment on why you find your question intriguing, please do it. \\
If you have ideas on how to solve your question, please elaborate.

\newpage
\section{Proposed Articles for the Seminar}\label{sec:papers}

\subsubsection*{Theoretical, Foundational}

\normalsize \noindent 1. 
Gregory~J Chaitin.
\newblock Incompleteness theorems for random reals.
\newblock {\em Advances in Applied Mathematics}, 8(2):119--146, 1987.
\begin{quote}
\scriptsize \nit{Abstract}: 
We obtain some dramatic results using statistical mechanics-thermodynamics kinds of arguments concerning randomness, chaos, unpredictability, and uncertainty in mathematics. We construct an equation involving only whole numbers and addition, multiplication, and exponentiation, with the property that if one varies a parameter and asks whether the number of solutions is finite or infinite, the answer to this question is indistinguishable from the result of independent tosses of a fair coin. This yields a number of powerful Gödel incompleteness-type results concerning the limitations of the axiomatic method, in which entropy-information measures are used.
\end{quote}

\normalsize \noindent 2. 
Markus M{\"u}ller.
\newblock Stationary algorithmic probability.
\newblock {\em Theoretical Computer Science}, 411(1):113--130, 2010.
\begin{quote}
\scriptsize \nit{Abstract :}
Kolmogorov complexity and algorithmic probability are defined only up to an additive resp. multiplicative constant, since their actual values depend on the choice of the universal reference computer. In this paper, we analyze a natural approach to eliminate this machine-dependence.
Our method is to assign algorithmic probabilities to the different computers themselves, based on the idea that “unnatural” computers should be hard to emulate. Therefore, we study the Markov process of universal computers randomly emulating each other. The corresponding stationary distribution, if it existed, would give a natural and machine-independent probability measure on the computers, and also on the binary strings.
Unfortunately, we show that no stationary distribution exists on the set of all computers; thus, this method cannot eliminate machine-dependence. Moreover, we show that the reason for failure has a clear and interesting physical interpretation, suggesting that every other conceivable attempt to get rid of those additive constants must fail in principle, too.
However, we show that restricting to some subclass of computers might help to get rid of some amount of machine-dependence in some situations, and the resulting stationary computer and string probabilities have beautiful properties.
\end{quote}

\subsubsection*{Algorithmic statistics}

 \noindent 3. 
Nikolay Vereshchagin and Alexander Shen.
\newblock Algorithmic statistics: {F}orty years later.
\newblock In {\em Computability and Complexity}, pages 669--737. Springer,
  2017.
\begin{quote}
\scriptsize \nit{Abstract}: Algorithmic statistics has two different (and almost orthogonal) motivations. From the philosophical point of view, it tries to formalize how the statistics works and why some statistical models are better than others. After this notion of a "good model" is introduced, a natural question arises: it is possible that for some piece of data there is no good model? If yes, how often these bad (``non-stochastic'') data appear "in real life"? 
Another, more technical motivation comes from algorithmic information theory. In this theory a notion of complexity of a finite object ($=$amount of information in this object) is introduced; it assigns to every object some number, called its algorithmic complexity (or Kolmogorov complexity). Algorithmic statistic provides a more fine-grained classification: for each finite object some curve is defined that characterizes its behavior. It turns out that several different definitions give (approximately) the same curve. 
In this survey we try to provide an exposition of the main results in the field (including full proofs for the most important ones), as well as some historical comments. We assume that the reader is familiar with the main notions of algorithmic information (Kolmogorov complexity) theory.
\end{quote}

\normalsize \noindent 4. 
Paul~MB Vit{\'a}nyi and Ming Li.
\newblock Minimum description length induction, bayesianism, and kolmogorov
  complexity.
\newblock {\em IEEE Transactions on information theory}, 46(2):446--464, 2000.
\begin{quote}
\scriptsize \nit{Abstract}: The relationship between the Bayesian approach and the minimum description length approach is established. We sharpen and clarify the general modeling principles minimum description length (MDL) and minimum message length (MML), abstracted as the ideal MDL principle and defined from Bayes's rule by means of Kolmogorov complexity. The basic condition under which the ideal principle should be applied is encapsulated as the fundamental inequality, which in broad terms states that the principle is valid when the data are random, relative to every contemplated hypothesis and also these hypotheses are random relative to the (universal) prior. The ideal principle states that the prior probability associated with the hypothesis should be given by the algorithmic universal probability, and the sum of the log universal probability of the model plus the log of the probability of the data given the model should be minimized. If we restrict the model class to finite sets then application of the ideal principle turns into Kolmogorov's minimal sufficient statistic. In general, we show that data compression is almost always the best strategy, both in model selection and prediction.
\end{quote}

\subsubsection*{Induction}

\normalsize \noindent 5. 
Samuel Rathmanner and Marcus Hutter.
\newblock A philosophical treatise of universal induction.
\newblock {\em Entropy}, 13(6):1076--1136, 2011.
\begin{quote}
\scriptsize \nit{Abstract}:
Understanding inductive reasoning is a problem that has engaged mankind for thousands of years. This problem is relevant to a wide range of fields and is integral to the philosophy of science. It has been tackled by many great minds ranging from philosophers to scientists to mathematicians, and more recently computer scientists. In this article we argue the case for Solomonoff Induction, a formal inductive framework which combines algorithmic information theory with the Bayesian framework. Although it achieves excellent theoretical results and is based on solid philosophical foundations, the requisite technical knowledge necessary for understanding this framework has caused it to remain largely unknown and unappreciated in the wider scientific community. The main contribution of this article is to convey Solomonoff induction and its related concepts in a generally accessible form with the aim of bridging this current technical gap. In the process we examine the major historical contributions that have led to the formulation of Solomonoff Induction as well as criticisms of Solomonoff and induction in general. In particular we examine how Solomonoff induction addresses many issues that have plagued other inductive systems, such as the black ravens paradox and the confirmation problem, and compare this approach with other recent approaches.
\end{quote}

\normalsize \noindent 6. 
Marcus Hutter.
\newblock A theory of universal artificial intelligence based on algorithmic
  complexity.
\newblock {\em arXiv preprint cs/0004001}, 2000.
\begin{quote}
\scriptsize \nit{Abstract}: Decision theory formally solves the problem of rational agents in uncertain worlds if the true environmental prior probability distribution is known. Solomonoff’s theory of universal induction formally solves the problem of sequence prediction for unknown prior distribution. We combine both ideas and get a parameterless theory of universal Artificial Intelligence. We give strong arguments that the resulting AI$\xi$ model is the most intelligent unbiased agent possible. We outline for a number of problem classes, including sequence prediction, strategic games, function minimization, reinforcement and supervised learning, how the AI$\xi$ model can formally solve them. The major drawback of the AI$\xi$ model is that it is uncomputable. To overcome this problem, we construct a modified algorithm AI$\xi$tl, which is still effectively more intelligent than any other time $t$ and space l bounded agent. The computation time of AI$\xi$tl is of the order $t \cdot 2^l$. Other discussed topics are formal definitions of intelligence order relations, the horizon problem and relations of the AI$\xi$ theory to other AI approaches.
\end{quote}

\subsubsection*{Nature and Computation}

 \noindent 7. 
Charles~H Bennett.
\newblock Logical depth and physical complexity.
\newblock {\em The Universal Turing Machine A Half-Century Survey}, pages
  227--257, 1988.
\begin{quote}
\scriptsize \nit{Abstract}: {Some mathematical and natural objects (a random sequence, a sequence of zeros, a perfect crystal, a gas) are intuitively trivial, while others (e.g. the human body, the digits of $\pi$) contain internal evidence of a nontrivial causal history.
We formalize this distinction by defining an object’s “logical depth” as the time required by a standard universal Turing machine to generate it from an input that is algorithmically random (i.e. Martin-Löf random). This definition of depth is shown to be reasonably machine-independent, as well as obeying a slow-growth law: deep objects cannot be quickly produced from shallow ones by any deterministic process, nor with much probability by a probabilistic process, but can be pro- duced slowly.
Next we apply depth to the physical problem of “self-organization,” inquiring in particular under what conditions (e.g. noise, irreversibility, spatial and other symmetries of the initial conditions and equations of motion) statistical-mechanical model systems can imitate computers well enough to undergo unbounded increase of depth in the limit of infinite space and time.}
\end{quote}

\normalsize \noindent 8. 
J{\"u}rgen Schmidhuber.
\newblock The fastest way of computing all universes.
\newblock In {\em A computable universe: Understanding and exploring nature as
  computation}, pages 381--398. World Scientific, 2013.
\begin{quote}
\scriptsize \nit{Abstract}: Is there a short and fast program that can compute the precise history of our universe, including all seemingly random but possibly actually deterministic and pseudo-random quantum fluctuations? There is no physical evidence against this possibility. So let us start searching! We already know a short program that computes all constructively computable uni- verses in parallel, each in the asymptotically fastest way. Assuming ours is computed by this optimal method, we can predict that it is among the fastest compatible with our existence. This yields testable predictions. Note: This paper extends an overview of previous work presented in a survey for the German edition of Scientific American.
\end{quote}

\subsubsection*{Thermodynamics}

\normalsize \noindent 9. 
Paul Vit{\'a}nyi and Ming Li.
\newblock Algorithmic arguments in physics of computation.
\newblock In {\em Algorithms and Data Structures: 4th International Workshop,
  WADS'95 Kingston, Canada, August 16--18, 1995 Proceedings 4}, pages 315--333.
  Springer, 1995.
\begin{quote}
\scriptsize \nit{Abstract} :
We show the usefulness of incompressibility arguments based on Kolmogorov complexity in physics of computation by several examples. These include analysis of energy parsimonious 'adiabatic' computation, and scalability of network architectures.
\end{quote}

\normalsize \noindent 10. 
Wojciech~H Zurek.
\newblock Algorithmic randomness and physical entropy.
\newblock {\em Physical Review A}, 40(8):4731, 1989.
\begin{quote}
\scriptsize \nit{Abstract} :
Algorithmic randomness provides a rigorous, entropylike measure of disorder of an individual, microscopic, definite state of a physical system. It is defined by the size (in binary digits) of the shortest message specifying the microstate uniquely up to the assumed resolution. Equivalently, al- algorithmic randomness can be expressed as the number of bits in the smallest program for a universal computer that can reproduce the state in question (for instance, by plotting it with the assumed accuracy). In contrast to the traditional definitions of entropy, algorithmic randomness can be used to measure disorder without any recourse to probabilities. Algorithmic randomness is typically very difficult to calculate exactly but relatively easy to estimate. In large systems, probabilistic en- semble definitions of entropy (e.g., coarse-grained entropy of Gibbs and Boltzmann's entropy $H =\ln 8$; as well as Shannon's information-theoretic entropy) provide accurate estimates of the al- algorithmic entropy of an individual system or its average value for an ensemble. One is thus able to rederive much of thermodynamics and statistical mechanics in a setting very different from the usual. Physical entropy, I suggest, is a sum of (i) the missing information measured by Shannon's formula and (ii) of the algorithmic information content —algorithmic randomness —present in the available data about the system. This definition of entropy is essential in describing the operation of thermodynamic engines from the viewpoint of information gathering and using systems. These Maxwell demon-type entities are capable of acquiring and processing information and therefore can "decide" on the basis of the results of their measurements and computations the best strategy for extracting energy from their surroundings. From their internal point of view the outcome of each measurement is definite. The limits on the thermodynamic efficiency arise not from the ensemble considerations, but rather reflect basic laws of computation. Thus inclusion of algorithmic randomness in the definition of physical entropy allows one to formulate thermodynamics from the Maxwell demon's point of view.
\end{quote}

\normalsize \noindent 11. 
{\"{A}}min Baumeler and Stefan Wolf.
\newblock {Causality--Complexity--Consistency: Can space-time be based on logic
  and computation?}
\newblock In {\em Time in Physics}, pages 69--101. Springer, 2017.
\begin{quote}
\scriptsize \nit{Abstract}:
The difficulty of explaining non-local correlations in a fixed causal structure sheds new light on the old debate on whether space and time are to be seen as fundamental. Refraining from assuming space-time as given a priori has a number of consequences. First, the usual definitions of randomness depend on a causal structure and turn meaningless. So motivated, we propose an intrinsic, physically motivated measure for the randomness of a string of bits: its length minus its normalized work value, a quantity we closely relate to its Kolmogorov complexity (the length of the shortest program making a universal Turing machine output this string). We test this alternative concept of randomness for the example of non-local correlations, and we end up with a reasoning that leads to similar conclusions as in, but is conceptually more direct than, the probabilistic view since only the outcomes of measurements that can actually all be carried out together are put into relation to each other. In the same context-free spirit, we connect the logical reversibility of an evolution to the second law of thermodynamics and the arrow of time. Refining this, we end up with a speculation on the emergence of a space-time structure on bit strings in terms of data-compressibility relations. Finally, we show that logical consistency, by which we replace the abandoned causality, it strictly weaker a constraint than the latter in the multi-party case.
\end{quote}

\subsubsection*{Quantum Algorithmic Complexity}

\normalsize \noindent 12. 
Paul~MB Vit{\'a}nyi.
\newblock Quantum {K}olmogorov complexity based on classical descriptions.
\newblock {\em IEEE Transactions on Information Theory}, 47(6):2464--2479,
  2001.
\begin{quote}
\scriptsize \nit{Abstract}:
We develop a theory of the algorithmic information in bits contained in an individual pure quantum state. This extends classical Kolmogorov complexity to the quantum domain retaining classical descriptions. Quantum Kolmogorov complexity coincides with the classical Kolmogorov complexity on the classical domain. Quantum Kolmogorov complexity is upper-bounded and can be effectively approximated from above under certain conditions. With high probability, a quantum object is incompressible. Upper and lower bounds of the quantum complexity of multiple copies of in- dividual pure quantum states are derived and may shed some light on the no-cloning properties of quantum states. In the quantum situation, complexity is not subadditive. We discuss some relations with ``no-cloning'' and ``approximate cloning'' properties.
\end{quote}

\normalsize \noindent 13. 
Fabio Benatti, Tyll Kr{\"u}ger, Markus M{\"u}ller, Rainer Siegmund-Schultze,
  and Arleta Szko{\l}a.
\newblock Entropy and quantum kolmogorov complexity: A quantum brudno's
  theorem.
\newblock {\em Communications in mathematical physics}, 265:437--461, 2006.
\begin{quote}
\scriptsize \nit{Abstract}:
In classical information theory, entropy rate and algorithmic complexity per symbol are related by a theorem of Brudno. In this paper, we prove a quantum version of this theorem, connecting the von Neumann entropy rate and two notions of quantum Kolmogorov complexity, both based on the shortest qubit descriptions of qubit strings that, run by a universal quantum Turing machine, reproduce them as outputs.
\end{quote}
\subsubsection*{Information Measures}

\normalsize
\noindent 14. 
Charles~H Bennett, P{\'e}ter G{\'a}cs, Ming Li, Paul~MB Vit{\'a}nyi, and
  Wojciech~H Zurek.
\newblock Information distance.
\newblock {\em IEEE Transactions on information theory}, 44(4):1407--1423,
  1998.
\begin{quote}
\scriptsize
\nit{Abstract :} While Kolmogorov complexity is the accepted absolute measure of information content in an individual finite object, a similarly absolute notion is needed for the information distance between two individual objects, for example, two pictures. We give several natural definitions of a universal information metric, based on length of shortest programs for either ordinary computations or reversible (dissipationless) computations. It turns out that these definitions are equivalent up to an additive logarithmic term. We show that the information distance is a universal cognitive similarity distance. We investigate the maximal correlation of the shortest programs involved, the maximal uncorrelation of programs (a generalization of the Slepian–Wolf theorem of classical information theory), and the density properties of the discrete metric spaces induced by the information distances. A related distance measures the amount of nonreversibility of a computation. Using the physical theory of reversible computation, we give an appropriate (universal, antisymmetric, and transitive) measure of the thermodynamic work required to transform one object in another object by the most efficient process. Information distance between individual objects is needed in pattern recognition where one wants to express effective notions of “pattern similarity” or “cognitive similarity” between individual objects and in thermodynamics of computation where one wants to analyze the energy dissipation of a computation from a particular input to a particular output.
\end{quote}

\normalsize \noindent 15. 
Rudi Cilibrasi and Paul~MB Vit{\'a}nyi.
\newblock Clustering by compression.
\newblock {\em IEEE Transactions on Information theory}, 51(4):1523--1545,
  2005.
\begin{quote}
\scriptsize \nit{Abstract} :
We present a new method for clustering based on compression. The method does not use subject-specific features or background knowledge, and works as follows: First, we determine a parameter-free, universal, similarity distance, the normalized compression distance or NCD, computed from the lengths of compressed data files (singly and in pairwise concatenation). Second, we apply a hierarchical clustering method. The NCD is not restricted to a specific application area, and works across application area boundaries. A theoretical precursor, the normalized information distance, co-developed by one of the authors, is provably optimal. How- ever, the optimality comes at the price of using the noncomputable notion of Kolmogorov complexity. We propose axioms to capture the real-world setting, and show that the NCD approximates optimality. To extract a hierarchy of clusters from the distance matrix, we determine a dendrogram (ternary tree) by a new quartet method and a fast heuristic to implement it. The method is implemented and available as public software, and is robust under choice of different compressors. To substantiate our claims of universality and robustness, we report evidence of successful application in areas as diverse as genomics, virology, languages, literature, music, handwritten digits, astronomy, and combinations of objects from completely different domains, using statistical, dictionary, and block sorting compressors. In genomics, we presented new evidence for major questions in Mammalian evolution, based on whole-mitochondrial genomic analysis: the Eutherian orders and the Marsupionta hypothesis against the Theria hypothesis.
\end{quote}

\subsubsection*{Applications}

\normalsize
\nit{Anomaly Detection}
\smallskip

\normalsize \noindent 16. 
Eamonn Keogh, Stefano Lonardi, Chotirat~Ann Ratanamahatana, Li~Wei, Sang-Hee
  Lee, and John Handley.
\newblock Compression-based data mining of sequential data.
\newblock {\em Data Mining and Knowledge Discovery}, 14:99--129, 2007.
\begin{quote}
\scriptsize \nit{Abstract:} The vast majority of data mining algorithms require the setting of many input parameters. The dangers of working with parameter-laden algo- rithms are twofold. First, incorrect settings may cause an algorithm to fail in finding the true patterns. Second, a perhaps more insidious problem is that the algorithm may report spurious patterns that do not really exist, or greatly overestimate the significance of the reported patterns. This is especially likely when the user fails to understand the role of parameters in the data mining process. Data mining algorithms should have as few parameters as possible. A parameter-light algorithm would limit our ability to impose our prejudices, expectations, and presumptions on the problem at hand, and would let the data itself speak to us. In this work, we show that recent results in bioinformatics, learning, and computational theory hold great promise for a parameter-light data-mining paradigm. The results are strongly connected to Kolmogorov complexity theory. However, as a practical matter, they can be implemented using any off-the-shelf compression algorithm with the addition of just a dozen lines of code. We will show that this approach is competitive or superior to many of the state-of-the-art approaches in anomaly/interestingness detection, classification, and clustering with empirical tests on time series/DNA/text/XML/video data- sets. As a further evidence of the advantages of our method, we will demonstrate its effectiveness to solve a real world classification problem in recommending printing services and products.
\end{quote}

\normalsize
\nit{Biology}
\smallskip

 \noindent 17. 
Rudi~L Cilibrasi and Paul~MB Vit{\'a}nyi.
\newblock Fast whole-genome phylogeny of the covid-19 virus sars-cov-2 by
  compression.
\newblock {\em bioRxiv}, pages 2020--07, 2020.
\begin{quote}
\scriptsize \nit{Abstract}: We analyze the phylogeny and taxonomy of the SARS-CoV-2 virus using compression. This is a new alignment-free method called the “normalized compression distance” (NCD) method. It discovers all effective similarities based on Kolmogorov complexity. The latter being incomputable we approximate it by a good compressor such as the modern zpaq. The results comprise that the SARS-CoV-2 virus is closest to the RaTG13 virus and similar to two bat SARS-like coronaviruses bat-SL-CoVZXC21 and bat-SL-CoVZC4. The similarity is quantified and compared with the same quantified similarities among the mtDNA of certain species. We treat the question whether Pangolins are involved in the SARS-CoV-2 virus.
\end{quote}

\smallskip
\nit{Neurosciences}

\normalsize \noindent 18. 
J~Szczepa{\'n}ski, Jos{\'e}~M Amig{\'o}, E~Wajnryb, and MV~Sanchez-Vives.
\newblock Application of lempel--ziv complexity to the analysis of neural
  discharges.
\newblock {\em Network: Computation in Neural Systems}, 14(2):335, 2003.
\begin{quote}
\scriptsize \nit{Abstract}:
Pattern matching is a simple method for studying the properties of information sources based on individual sequences (Wyner et al 1998 IEEE Trans. Inf. Theory 44 2045–56). In particular, the normalized Lempel– Ziv complexity (Lempel and Ziv 1976 IEEE Trans. Inf. Theory 22 75–88), which measures the rate of generation of new patterns along a sequence, is closely related to such important source properties as entropy and information compression ratio. We make use of this concept to characterize the responses of neurons of the primary visual cortex to different kinds of stimulus, including visual stimulation (sinusoidal drifting gratings) and intracellular current injections (sinusoidal and random currents), under two conditions (in vivo and in vitro preparations). Specifically, we digitize the neuronal discharges with several encoding techniques and employ the complexity curves of the resulting discrete signals as fingerprints of the stimuli ensembles. Our results show, for example, that if the neural discharges are encoded with a particular one-parameter method (‘interspike time coding’), the normalized complexity remains constant within some classes of stimuli for a wide range of the parameter. Such constant values of the normalized complexity allow then the differentiation of the stimuli classes. With other encodings (e.g. ‘bin coding’), the whole complexity curve is needed to achieve this goal. In any case, it turns out that the normalized complexity of the neural discharges in vivo are higher (and hence carry more information in the sense of Shannon) than in vitro for the same kind of stimulus.
\end{quote}

\nit{Linguistics}
\smallskip

\normalsize \noindent 19. 
Henry Brighton and Simon Kirby.
\newblock The survival of the smallest: Stability conditions for the cultural
  evolution of compositional language.
\newblock In {\em Advances in Artificial Life: 6th European Conference, ECAL
  2001 Prague, Czech Republic, September 10--14, 2001 Proceedings 6}, pages
  592--601. Springer, 2001.
\begin{quote}
\scriptsize \nit{Abstract}:
Recent work in the field of computational evolutionary linguistics suggests that the dynamics arising from the cultural evolution of language can explain the emergence of syntactic structure. We build on this work by introducing a model of language acquisition based on the Minimum Description Length Principle. Our experiments show that compositional syntax is most likely to occur under two conditions specific to hominids: (i) A complex meaning space structure, and (ii) the poverty of the stimulus.
\end{quote}

\nit{Sociology}
\smallskip

\normalsize \noindent 20. 
Carter~T Butts.
\newblock The complexity of social networks: theoretical and empirical
  findings.
\newblock {\em Social Networks}, 23(1):31--72, 2001.
\begin{quote}
\scriptsize \nit{Abstract}:
A great deal of work in recent years has been devoted to the topic of “complexity”, its measurement, and its implications. Here, the notion of algorithmic complexity is applied to the analysis of social networks. Structural features of theoretical importance — such as structural equivalence classes — are shown to be strongly related to the algorithmic complexity of graphs, and these results are explored using analytical and simulation methods. Analysis of the complexity of a variety of empirically derived networks suggests that many social networks are nearly as complex as their source entropy, and thus that their structure is roughly in line with the conditional uniform graph distribution hypothesis. Implications of these findings for network theory and methodology are also discussed.
\end{quote}

\nit{Ecology}
\smallskip

\normalsize \noindent 21. 
Vasilis Dakos and Fernando Soler-Toscano.
\newblock Measuring complexity to infer changes in the dynamics of ecological
  systems under stress.
\newblock {\em Ecological Complexity}, 32:144--155, 2017.
\begin{quote}
\scriptsize \nit{Abstract}:
Despite advances in our mechanistic understanding of ecological processes, the inherent complexity of real-world ecosystems still limits our ability in predicting ecological dynamics especially in the face of ongoing environmental stress. Developing a model is frequently challenged by structure uncertainty, unknown parameters, and limited data for exploring out-of-sample predictions. One way to address this challenge is to look for patterns in the data themselves in order to infer the underlying processes of an ecological system rather than to build system-specific models. For example, it has been recently suggested that statistical changes in ecological dynamics can be used to infer changes in the stability of ecosystems as they approach tipping points. For computer scientists such inference is similar to the notion of a Turing machine: a computational device that could execute a program (the process) to produce the observed data (the pattern). Here, we make use of such basic computational ideas introduced by Alan Turing to recognize changing patterns in ecological dynamics in ecosystems under stress. To do this, we use the concept of Kolmogorov algorithmic complexity that is a measure of randomness. In particular, we estimate an approximation to Kolmogorov complexity based on the Block Decomposition Method (BDM). We apply BDM to identify changes in complexity in simulated time-series and spatial datasets from ecosystems that experience different types of ecological transitions. We find that in all cases, KBDM complexity decreased before all ecological transitions both in time-series and spatial datasets. These trends indicate that loss of stability in the ecological models we explored is characterized by loss of complexity and the emergence of a regular and computable underlying structure. Our results suggest that Kolmogorov complexity may serve as tool for revealing changes in the dynamics of ecosystems close to ecological transitions.
\end{quote}

\nit{Ethology}
\smallskip

\normalsize \noindent 22. 
Boris Ryabko and Zhanna Reznikova.
\newblock The use of ideas of information theory for studying ``language'' and
  intelligence in ants.
\newblock {\em Entropy}, 11(4):836--853, 2009.
\begin{quote}
\scriptsize \nit{Abstract}:
In this review we integrate results of long term experimental study on ant ``language'' and intelligence which were fully based on fundamental ideas of Information Theory, such as the Shannon entropy, the Kolmogorov complexity, and the Shannon’s equation connecting the length of a message (l) and its frequency (p), i.e., $l = - \log p$ for rational communication systems. This approach enabled us to obtain the following important results on ants’ communication and intelligence: (i) to reveal “distant homing” in ants, that is, their ability to transfer information about remote events; (ii) to estimate the rate of information transmission; (iii) to reveal that ants are able to grasp regularities and to use them for “compression” of information; (iv) to reveal that ants are able to transfer to each other the information about the number of objects; (v) to discover that ants can add and subtract small numbers. The obtained results show that information theory is not only excellent mathematical theory, but many of its results may be considered as Nature laws.
\end{quote}

\newpage
\bibliographystyle{unsrt}
\bibliography{/Applications/TeX/ref}

\end{document}